\newtheorem{theorem}{Theorem}[section]  
\newtheorem{claim}[theorem]{Claim}
\newtheorem{proposition}[theorem]{Proposition}
\newtheorem{lemma}[theorem]{Lemma}
\theoremstyle{definition} 
\newtheorem{definition}[theorem]{Definition}
\newtheorem{remark}[theorem]{Remark}
\newtheorem{assumption}{Assumption}
\newenvironment{claimproof}[1]{\par\noindent\underline{Proof:}\space#1}{\hfill $\blacksquare$}
\newcommand{\R}{\ensuremath{\mathbb{R}}}
\newcommand{\1}{\ensuremath{\mathds{1}}}
\newcommand{\pr}[2]{\left\langle #1,#2\right\rangle}
\newcommand{\pOutbid}{\texttt{Outbid}}
\newcommand{\pRaisePrice}{\texttt{RaisePrice}}
\newcommand{\pFindNewPrices}{\texttt{Find\-New\-Prices}}
\newcommand{\pBEFindNewPrices}{\texttt{BE-Find\-New\-Prices}}
\DeclareMathOperator*{\mbb}{mbb}
\DeclareMathOperator*{\argmax}{arg\,max}
\DeclareMathOperator*{\nsw}{NSW}
\newif\ifnotes\notesfalse
\title{Auction Algorithms for Market Equilibrium with Weak Gross Substitute Demands}
\date{\today}
\author{Jugal Garg\thanks{University of Illinois at Urbana-Champaign. Supported by the NSF grant CCF-1942321.}\\ \texttt{jugal@illinois.edu}
\and
Edin Husi\' c\thanks{Department of Mathematics, London School of Economics and Political Science. Supported by the European Research Council (ERC) under the European Union’s Horizon 2020 research and innovation programme (grant agreement ScaleOpt--757481).}\\ \texttt{e.husic@lse.ac.uk}
\and
L\' aszl\' o A. V\' egh\footnotemark[3]\\ \texttt{l.vegh@lse.ac.uk}
}
\date{}
\begin{document}

\maketitle
\thispagestyle{empty}
\begin{abstract}
We consider the Arrow--Debreu exchange market model under the assumption that the agents' demands satisfy the weak gross substitutes (WGS) property. We present a simple auction algorithm that obtains an approximate market equilibrium for WGS demands assuming the availability of a price update oracle. We exhibit specific implementations of such an oracle for WGS demands with bounded price elasticities and for Gale demand systems. 

As an  application of our result, we obtain an efficient algorithm to find an approximate spending-restricted market equilibrium for WGS demands, a model that has been recently introduced as a continuous relaxation of the Nash social welfare (NSW) problem. This leads to a polynomial-time constant factor approximation algorithm for the NSW problem with capped additive separable piecewise linear utility functions;  only a pseudopolynomial approximation algorithm was known for this setting previously.
\let\thefootnote\relax\footnote{The authors are grateful to the anonymous referees for numerous valuable suggestions that have helped to improve the presentation of the paper.}
\end{abstract}

\vfill
\tableofcontents
\bigskip

\newpage
\setcounter{page}{1}
\section{Introduction}
Market equilibrium is a fundamental model in mathematical economics to describe the balance between supply and demand.
The study of  market equilibria was pioneered by  Walras~\cite{walras1896elements} in 1874, and was developed in the 1950s by Arrow and Debreu \cite{arrow1954existence} and McKenzie \cite{Mckenzie1954}. In this paper, we focus on the classical exchange market setting, where a set of agents $A$ arrives at the market with initial endowments of infinitely divisible goods $G$. A market equilibrium comprises  prices for the goods and a fractional assignment between the goods and the agents. Prices and assignments form a \emph{market equilibrium} if {(a)} each agent receives a bundle of goods they prefer the most  at the given prices by spending their revenue  from selling their initial endowment, and {(b)} the market clears: the demand of each good meets its supply. 
A typical way to represent the preferences is by utility functions $u_i:\, \R_+^G\to\R_+$ for each agent $i\in A$; the demand of agent $i$ at prices $p$ and revenue $b_i$ is a bundle $x_i$ maximizing $u_i(x_i)$ subject to $\pr{p}{x_i}\le b_i$.
Classical works by Arrow and Debreu \cite{arrow1954existence} and McKenzie \cite{Mckenzie1954,Mckenzie1959} showed the existence of market equilibrium under mild assumptions, using Kakutani's fixed point theorem.

Equilibrium constitutes the ideal limit behavior of markets; existence proofs do not explain how such a limit can be attained.
Investigating market dynamics  has been an important topic since the early days: Walras \cite{walras1896elements} introduced the {\em t\^atonnement} process, a natural dynamics of supply and demand. This can be seen as a multi-round auction process, where in each round an auctioneer announces the current prices. At these prices, each agent submits their most preferred bundle of goods.  Prices are adjusted in light of these bids: prices of overdemanded goods are increased and prices of underdemanded goods are decreased.

Samuelson~\cite{Samuelson} formulated a continuous version of  t\^atonnement as a dynamical system. Works by Arrow and Hurwitz~\cite{ArrowH60}, Arrow, Block, and Hurwitz~\cite{ArrowBH} introduced the {\em weak gross substitutability (WGS)} property as a sufficient condition for convergence to an equilibrium. Agent's demands are said to be WGS if the demand for any good does not increase when its price increases while the rest of the prices remain unchanged.
Such a property gives a sound justification of the t\^atonnement price changes. 
However, it is a nontrivial requirement and there are important examples of demands 
that are not WGS. Scarf~\cite{Scarf60} showed that t\^atonnement may not converge for non-WGS demands.

\medskip
Classical market equilibrium models became subject of renewed interest in the optimization and theoretical computer science communities, starting from the 1991 paper by Megiddo and Papadimitriou \cite{Megiddo1991}. Formulating the existence of market equilibrium as a computational search problem raises intriguing questions. Many of these works treat market equilibrium as a centrally coordinated computational problem: can a {central authority} compute a market equilibrium given perfect information on all agents' utilities? Surprisingly, even in such a centrally coordinated setting, and even for some of the simplest non-WGS demands, computing  approximate equilibria turn out to be  complete problems for certain complexity classes. On the positive side, this line of investigation led to remarkable algorithmic developments for various market models.
See e.g., \cite{brainard2005compute,ChenDDT09,codenotti2004computation,CodenottiSVY06,DevanurPSV08,DuanM15,garg2017settling,GargV19,JainV06,VaziraniY11,Ye08}. 
For WGS utilities, the first polynomial-time computability of market equilibria was established by Codenotti, Pemmaraju, and Varadarajan \cite{codenotti2005polynomial}. 
A simple ascending-price algorithm using central coordination was given by Bei, Garg, and Hoefer~\cite{BeiGH19}.

In most market settings we cannot assume the level of central coordination needed for many of the above algorithms. 
In such markets, one has to investigate distributed mechanisms in a decentralized environment with limited coordination. 
Codenotti, McCune, and Varadarajan \cite{codenotti2005market} gave a simple discrete variant of the  t\^atonnement  algorithm that  converges to an approximate equilibrium for WGS utilities (see also \cite[Section 6.3]{nisan2007algorithmic}). This was followed by a number of papers providing t\^atonnement algorithms for various classes of utility functions and restricted models, some of them substantially weakening the need for central coordination among agents, see e.g., \cite{Avigdor2014,Cheung2019,Cheung2012,Cole2008,Fleischer2008}.

\medskip
{\em Auction algorithms} form an even simpler subclass of t\^atonnement-type algorithms. These algorithms are decentralized and require only local coordination between agents. Agent may take goods from others by outbidding them, i.e., offering  slightly higher prices. 
While  prices in t\^atonnement may increase as well as decrease,  prices in auction algorithms may only go up. For exchange market models, the first such algorithm was established for linear utilities---of the form $u(x)=\sum_{j\in G} v_{j}x_{j}$---by Garg and Kapoor \cite{GargK06} (see also \cite[Section 5.12]{nisan2007algorithmic}). The algorithm was later improved~\cite{garg2006price} and generalized to separable concave gross substitute utility functions~\cite{garg2004auction}, to a subclass of non-separable gross-substitutes called \emph{uniformly separable}~\cite{garg2007market}, and to a production model with linear production constraints and linear utilities~\cite{kapoor2007auction}. 

Auction algorithms have been widely used beyond  exchange markets, and studied in different contexts in optimization and economics. Bertsekas \cite{Bertsekas1981,Bertsekas1990} introduced auction algorithms for assignment and transportation problems. Closely related algorithms were introduced for markets with indivisible goods---further discussed in Section~\ref{sec:further}---by Kelso and Crawford \cite{Kelso1982}, and Demange, Gale, and Sotomayor \cite{Demange1986}. 

\subsection{Our contributions}
We present a new auction algorithm that computes an approximate market equilibrium in exchange markets for arbitrary WGS utilities, assuming a suitable oracle representation.  This settles an open question raised in~\cite{garg2007market}.  The result affirms the natural intuition that the WGS property should suffice for auction algorithms: A main invariant in auction algorithms is that at every price increase, the agents will still hold on to the goods they have purchased previously at lower prices. This property is almost identical to the definition of the WGS property; nevertheless, making an auction algorithm work for general WGS utilities requires new technical ideas. 

The previously mentioned auction algorithms operate with two prices for each good $j$, a lower price $p_j$ and a higher price $(1+\epsilon)p_j$. This technique was used   for linear~\cite{GargK06}, for separable~\cite{garg2006price}, and for  uniformly separable utilities~\cite{garg2004auction}.
However, this simple approach does not seem to be applicable for the general WGS case, and we need to  use a more fine-grained pricing approach. In our algorithm, each agent maintains individual prices for each good $j$ in the range  $[p_j,(1+\epsilon)p_j]$ for the `market price' $p_j$. The main invariant in our algorithm is that each agent maintains a subset of an optimal bundle with respect to these individual prices.  

Each agent updates their individual prices using a subroutine called \pFindNewPrices{}. The general algorithm in Section~\ref{section:exchange} relies on this subroutine and its running time for finding an $\varepsilon$-approximate equilibrium is $O\left(\frac{nmT_F}{\epsilon^2}\cdot \log
  \left(\frac{p_{\max}}{p_{\min}}\right)\right)$, where $T_F$ denotes a running time bound on  \pFindNewPrices{} and $p_{\max}$ and $p_{\min}$ are lower and upper bounds on the prices in an approximate equilibrium (Theorem~\ref{thm:running-oracle}).

 \paragraph{Demand systems}
 Our algorithm uses only local coordination between agents.
 However, agents should update their individual prices according to certain requirements. These are captured by 
 \pFindNewPrices{}; implementing this subroutine depends on the particular demand system. 

First, one needs to clarify how the preferences are represented in the model. 
WGS utilities in the literature are usually given in an explicit form such as CES (constant elasticity of substitution) or Cobb-Douglas utilities, see Section~\ref{sec:examples}. 
This is in contrast with the setting of markets with indivisible goods, where the common model is via a value or demand oracle \cite{Leme2017}, since direct preference elicitation, that is, the explicit description of the valuation function would be exponential. 
The class of continuous WGS functions also appears to be very rich, and hence an oracle approach seems more appropriate to devise algorithms for this class.

We model the agent preferences by \emph{demand oracles} (Definition~\ref{def:demand-oracle}). A demand oracle may be implemented by solving a utility maximizing convex problem, but may be of a different form. We discuss this in Section~\ref{sec:examples}, also exhibiting a class of WGS demand systems where our model is applicable, but do not appear to have a simple closed form representation.

A natural parametrization of WGS demand systems is by \emph{price elasticity} (Definition~\ref{def:flexible}) that bounds the change in the demands as a function of the price changes. In Section~\ref{section:newPrices-bounded},
we implement  \pFindNewPrices{} by a simple iterative application of the demand oracle for the case of bounded price elasticities.

We present additional implementations for the case when the price elasticity can be unbounded. 
Linear utilities constitute an important such class.
Lemma~\ref{lemma:linearPrices} gives a direct, linear time implementation of \pFindNewPrices{} for linear utilities.

In Section~\ref{section:gale} we consider Gale demand systems introduced  by Nesterov and Shikhman \cite{nesterov2018computation}. Such demand systems are given by  a convex program. Accordingly, we use a convex programming approach to implement \pFindNewPrices{}.

\paragraph{Spending restricted equilibrium and Nash social welfare}
\label{paragraph:NSW}
Our motivation for considering Gale demand systems comes from an application  to the \emph{Nash social welfare (NSW)} problem. In this problem, we need to allocate a set of indivisible goods to agents in order to maximize the geometric mean of their valuations.  

A useful relaxation of the NSW problem turns out to be a so-called \emph{spending-restricted (SR) equilibrium} under Gale demand systems. Spending-restricted equilibria were introduced by Cole and Gkatzelis \cite{cole2015approximating} as a key tool in finding the first constant-factor approximation algorithm for this problem with additive valuations. The same equilibrium concept was used in several other approximation algorithms for the NSW problem, see e.g.,~\cite{anari2018nash,ChaudhuryCGGHM18,cole2017convex,garg2018approximating}. 

The SR-equilibrium is a variant of the \emph{Fisher market model}, a special case of the exchange market model. In the Fisher model, the agents do not arrive with an initial endowment of goods but with a fixed budget to spend on the available set of goods. The SR-equilibrium differs in that the available amount of each good $j$ is influenced by the price $p_j$, namely, it is set as $\min\{1,1/p_j\}$. In other words, once the price of the good reaches 1, the seller will only sell an amount of total value 1.
Auction algorithms are well-suited for SR-equilibrium computation: once the price of a good goes above one, we can naturally decrease the total available amount of these goods within the auction framework.  

We design a polynomial time constant-factor approximation algorithm for the NSW problem under capped separable piecewise linear concave (SPLC) valuations by rounding an SR equilibrium under Gale demand systems to an approximately optimal solution.  Previous algorithm for this problem takes pseudopolynomial time~\cite{ChaudhuryCGGHM18}.  A key for this result is finding an approximate SR equilibrium in polynomial-time for which we use a modification of our auction algorithm.  
Interestingly, the capped separable piecewise linear concave valuations satisfy the WGS property under Gale demand systems, but not in the ``standard'' demand system setting~\eqref{prog:optimalBundle}. 

\subsection{Further related work}\label{sec:further}
\paragraph{Proportional response dynamics}
 \emph{Proportional response} is a distributed market mechanism introduced by Zhang~\cite{Zhang11} in the context of Fisher markets. 
In contrast with t\^atonnement and auctions, there is no direct price mechanism.
 In each round, agents bid on goods in proportional to the utility they receive from them in the previous round; the goods are then allocated in proportion of the agents' bids. Proportional response is known to converge to a market equilibrium in a variety of Fisher markets~\cite{BirnbaumDX11,CheungCT18,CheungHN19}, and some special cases of exchange markets~\cite{BranzeiDR19,BranzeiMN18,WuZ07}.

\paragraph{Markets with indivisible goods} Auction algorithms have been widely studied in the context of markets with {\em indivisible goods}. 
There are significant differences between the settings with divisible and indivisible goods. In the indivisible setting, equilibria are known to exist only in restricted settings. 
Kelso and Crawford \cite{Kelso1982} introduced (discrete) gross substitute utilities as a class where an equilibrium is guaranteed to exist, and  a simple auction algorithm can be used to 
find an approximate equilibrium. As shown by Gul and Stacchetti \cite{Gul1999,Gul2000}, the discrete gross substitutes property is in essence a necessary and sufficient condition for the existence of an equilibrium and for an auction algorithm to work. We refer the reader to the survey by Paes Leme \cite{Leme2017} on the role of gross substitute utilities in markets with indivisible goods, and their connections to  discrete convex analysis.

Whereas the definitions of discrete gross substitutes and continuous WGS utilities are very similar, there does not appear to be a direct connection between these notions. The main difference is in the utility concepts: for indivisible markets, the standard model is to maximize the valuation minus the price of the set at given prices, whereas  standard divisible market models operate with {\em fiat money}: the prices appear via the budget constraints but not in the utility value. Still, our result can be interpreted as the continuous analogue of the strong link between auction algorithms and the gross substitutes property for markets with indivisible goods: we show that auction algorithms are applicable for the entire class of WGS utilities for markets with divisible goods. We suspect that the converse should also be true, namely, that the applicability of auction algorithms should be limited to WGS utilities. In contrast, t\^atonnement algorithms have been successfully applied beyond the WGS class, see e.g., \cite{Cheung2019,Cheung2012,Fleischer2008}.

\paragraph{Graphical exchange economies} Subsequently to the  preliminary version of this work~\cite{GargHV21}, Andrade, Frongillo, Gorokhovsky, and Srinivasan~\cite{andrade2021graphical} studied graphical exchange markets with resale. Here, agents may only trade with their neighbors in a graph. 
They show the existence of such an equilibrium and give an auction algorithm for finding an approximate market equilibrium in such markets assuming that agents have WGS demands.

\medskip
\paragraph{Organization} The rest of the paper is structured as follows. Sections~\ref{section:models}--\ref{sec:findnewprices} deal with the auction algorithm for exchange markets under WGS demands. In particular, Section~\ref{section:models} defines the exchange market model and provides examples of WGS demand systems. Section~\ref{section:exchange} presents the auction algorithm for exchange markets. Section~\ref{sec:findnewprices} presents different ways of implementing \pFindNewPrices{}--the key subroutine of the algorithm.

Sections~\ref{sec:fisher}--\ref{section:BASPLC} present the modified auction algorithm for finding SR-equilibria in Fisher markets, and how this can be used for designing a polynomial-time constant-factor approximation algorithms for the NSW problem under capped SPLC utilities. In particular, Section~\ref{sec:fisher} defines the NSW problem and its connection with SR-equilibria.  Section~\ref{section:SR} presents the modified auction algorithm for SR-equilibria. Section~\ref{section:BASPLC} describes the approximation algorithm for the NSW problem.
\smallskip

A preliminary version of this paper appeared in~\cite{GargHV21}.

\section{The exchange market and demand systems}\label{section:models}
We use $\R_+$ for the nonnegative reals, and for a positive integer $k$, let $[k]=\{1,2,\ldots,k\}$. 
We consider a market with a set of agents $A=[n]$ and a set of divisible goods
$G=[m]$.
Each agent $i\in [n]$ arrives at the market with an initial endowment of
goods $e^{(i)} \in \R_+ ^ m$. We let $e = \sum_{i = 1}^n e^{(i)}$ denote the total amount of the goods. We assume $e_j>0$ for each $j\in[m]$.
A \emph{bundle} $x$ is a non-negative vector $x \in \R^m_+$.
We say that a bundle of goods $y\in\R_+^m$ \emph{dominates} the bundle $x\in\R_+^m$ if $x\le y$.

Given a non-negative price vector $p \in \R^m_+$, the budget of agent $i$ at prices $p$ is defined as $b_i(p)=\pr{p}{e^{(i)}}$; we simply write $b_i$ if the prices are clear from the context.
It follows that $\pr{p}{e}= \sum_{i=1}^n \pr{p}{e^{(i)}} = \sum_{i=1}^n b_i$.

We specify the markets via {\em demand systems}. 
 A {\em demand system} is a function $D : \R^{m+1}_+ \to 2^{\R^m_+}$; $D(p,b)$ denotes the set of preferred bundles of an agent at prices $p$ that are affordable within budget $b$. Here $2^{\R^m_+}$ denotes the family of all subsets of $\R^m_+$. Bundles in $D(p,b)$ are called \emph{optimal bundles} or \emph{demand bundles} at prices $p$ and budget $b$.
The demand system is \emph{simple}
if $|D(p,b)| =1$ for all $(p, b) \in \R^{m+1}$; for such demand systems, we will also use $D(p,b)$ to
denote this single bundle. 
We make two assumptions on the demand systems.

\begin{assumption}[Scale invariance]\label{assumption1}
For every agent $i\in[n]$, $(p,b)\in\R^{m+1}_+$, and  $\alpha > 0$,
$D_{i}(p, b) = D_{i}(\alpha p, \alpha b)$ holds.
\end{assumption}
That is, we require that the demand is homogeneous of degree $0$; informally, the demand does not depend on the currency.
This is a standard assumption in microeconomics, see e.g.,~\cite{arrow1958stability,devanur2004spending,Eisenberg1961,matsuyama2017beyond}.

\begin{assumption}[Non-satiation]
\label{non-satiation}
For every agent $i\in[n]$, $(p,b)\in\R^{m+1}_+$,  and every
$x\in D_i(p,b)$, we have $\pr{p}{x}=b$.
\end{assumption}
That is, in every
optimal bundle the agents must fully spend their budgets. This is a
standard assumption for exchange markets as it  is necessary for
the fundamental theorems of welfare economics (see e.g., \cite[Chapter
16]{Mas1995book}).\footnote{We note that this assumption can be replaced 
by a weaker one in the case of Fisher markets, see~\cite{GargHV21,garg2019auctionARXIV}.}

\medskip
A common way to define 
demand systems is by utility functions. By a \emph{utility function} we mean a function $u : \R^m_+ \to \R_+$ that  is concave, continuous, non-decreasing, and $u(0)=0$.  The corresponding demand system is 
\begin{equation}\label{prog:optimalBundle}
\begin{aligned}
D^u(p,b):=\arg\max_{x \in \R^m_+} \left\{u(x): \pr{p}{x}
\le b\right\}\, .
\end{aligned}
\end{equation}
An important example is the \emph{linear demand system}  defined by a  linear utility function $u(x)=\pr{v}{x}$
for $v\in\R^m_+$. The corresponding demand system $D^u(p,b)$ is the set of all fractional assignment of goods maximizing $v_j/p_j$ (bang-per-buck) with a total price $b$. Thus, this demand system is not simple.

For demand systems in the form \eqref{prog:optimalBundle}, Assumption~\ref{assumption1} is immediate, and Assumption~\ref{non-satiation} holds if $u(x)$ is strictly monotone increasing. The demand system is simple if $u(x)$ is strictly convex.

\subsection{Exact and approximate market equilibria}
\begin{definition}[Market equilibrium]\label{def:market-eq}
Consider an exchange market with a set $A=[n]$ of agents, a set $G=[m]$ of goods, initial  endowments $e^{(i)} \in \R_+ ^ m$, and $e = \sum_{i = 1}^n e^{(i)}$. 
Let $D_i(p,b)$ denote the demand system of agent $i\in A$.
The prices $p\in \R^m_+$ and bundles $x^{(i)} \in \R^m_+$
 form a \emph{market equilibrium} if
 \begin{enumerate}[label=(\roman*)]
 \item  $x^{(i)}\in D_i(p,\pr{p}{e^{(i)}})$ for all $i\in A$, and
\item  $\sum_{i=1}^n x^{(i)}_j\le e_j$, with equality whenever $p_j>0$, for all $j\in G$.
\end{enumerate}
\end{definition}

That is, $p$ and the optimal bundles $x^{(i)}$ form an equilibrium if no good is overdemanded and goods at a positive price are fully sold. Note that this implies that every agent fully spends their budget.

We relax this to the following notion of $\epsilon$-approximate equilibrium:

\begin{definition}[Approximate market equilibrium]\label{def:approx-eq} Consider the same setting as in Definition~\ref{def:market-eq}.
For $\epsilon\ge 0$, the prices $p\in \R_+^m$ and 
bundles $x^{(i)} \in \R^m_+$ 
 form an \emph{$\epsilon$-approximate market equilibrium} if
\begin{enumerate}[label=(\roman*)]
\item  \label{def:subset} $x^{(i)}\le z^{(i)}$ for some $z^{(i)}\in
  D_i(p^{(i)},\pr{p}{e^{(i)}})$, where $p\le p^{(i)}\le (1+\epsilon)
  p$, 
\item\label{def:no-over} $\sum_{i=1}^n x^{(i)}_j\le e_j$, and
\item \label{def:leftover} $\pr{p}{e-\sum_{i=1}^n x^{(i)}}\le
  \epsilon \pr{p}{e}$.
\end{enumerate}
\end{definition}

That is, every agent owns a subset of their optimal bundle at prices
that are within a factor $(1+\epsilon)$ from $p$, and all goods are
nearly sold: the value of the unsold goods is at most an $\epsilon$
fraction of the total value of the goods. The total value of the goods ``taken
away'' from the near-optimal bundles of the agents is $\sum_{i=1}^n
\pr{p}{z^{(i)}-x^{(i)}}$. 
Parts \ref{def:subset} and
\ref{def:leftover}, together with the fact that $\pr{p^{(i)}}{z^{(i)}} \le \pr{p}{e^{(i)}}$ for all $i$, 
imply that this amount is at most $\epsilon \pr{p}{e}$. In particular, $\epsilon=0$ corresponds to an exact market equilibrium as in Definition~\ref{def:market-eq}.

Condition \ref{def:subset} can be seen as a natural extension
of the corresponding approximate optimality conditions in {previous auction algorithms}~\cite{GargK06,garg2007market,garg2004auction}. 
For linear utilities, {Garg and Kapoor}~\cite{GargK06} require the approximate maximum bang-per-buck condition $v_{ij}/p_{j} \le (1+\epsilon) v_{ik} / p_k$ for any agent $i$, goods $j$ and $k$ such that $x_{ik}>0$. In other words, the goods purchased by agent $i$ according to this definition are maximum bang-per-buck with respect to some prices $p^{(i)}$ such that
$p\le p^{(i)}\le (1+\epsilon)p$.

Condition \ref{def:leftover} corresponds to the definition of approximate
equilibrium in~\cite{devanur2003improved}
and~\cite{ghiyasvand2012simple}.
This notion is weaker than the ones used in
\cite{GargK06,garg2007market,garg2004auction}. The most important
difference is that the latter papers guarantee that each agent
recovers approximately their optimal utility. Such a property could be
achieved by strengthening the bound in \ref{def:leftover} from
$\epsilon \pr{p}{e}$ to $\epsilon p_{\min} e_{\min}$, where $p_{\min}$
is the minimum price and $e_{\min}$ is the smallest total fractional
amount in the initial endowment of any agent. However, this would come
at the expense of substantially worse running time guarantees in our
algorithmic framework.

\medskip
An important special case of exchange markets are \emph{Fisher markets}, where $e^{(i)}=\frac{b_i}{\sum_{i=1}^n b_i} e$ for each $i\in [n]$, where $b_i>0$. That is, the initial endowments include every good in the same proportion. By appropriately scaling the prices, we can interpret the $b_i$'s as fixed budgets, and an exchange market equilibrium can be written as follows.

\begin{definition}[Fisher market equilibrium]\label{def:fisher-market-eq}
Consider a Fisher market with a set $A=[n]$ of agents, a set $G=[m]$ of goods, and budgets $b_i>0$, $i\in[n]$. 
Let $D_i(p,b)$ denote the demand system of agent $i\in A$.
The prices $p\in \R^m_+$ and bundles $x^{(i)} \in \R^m_+$
 form a \emph{Fisher market equilibrium} if
 \begin{enumerate}[label=(\roman*)]
 \item  $x^{(i)}\in D_i(p,b_i)$ for all $i\in A$, and
\item  $\sum_{i=1}^n x^{(i)}_j\le e_j$, with equality whenever $p_j>0$, for all $j\in G$.
\end{enumerate}
\end{definition}

\subsection{The weak gross substitutes property}
\label{sec:examples}
We next introduce the class of demand systems investigated in this paper.
\begin{definition}\label{def:WGS}
The demand system $D(p,b)$ is a   \emph{weak gross substitutes (WGS)  demand system} if for any
 $(p, b) \in \R_+^{m+1}$, any $x \in D(p, b)$, and 
 any $p' \ge p$ and $b'\ge b$, there exists $y \in D(p', b')$ such that $y_j \ge x_j$ whenever $p'_j=p_j$.

Further, we say that the utility function $u : \R^m_+ \to \R_+$ satisfies the WGS property if the corresponding demand system $D^u(p,b)$ as in \eqref{prog:optimalBundle} is a WGS demand system.
\end{definition}

We use an oracle model to represent the demand systems. We require access to the allocations guaranteed by Definition~\ref{def:WGS}.

\begin{definition}[Demand oracle]\label{def:demand-oracle}
For a WGS demand system $D(p,b)$, a {\em WGS demand oracle} requires in the input two vectors
$(p, b)$, $(p',b') \in \R_+^{m+1}$ such that $(p',b')\ge (p,b)$, 
and a vector $x\in D(p,b)$. The oracle outputs a vector $y\in D(p',b')$ such that 
that $y_j \ge x_j$ whenever $p'_j=p_j$.
\end{definition}

The complex form of the definition is due
to the possible non-uniqueness of demand bundles.
For simple demand systems, it suffices to specify $(p', b')\in \R_+^{m+1}$ in the input;  the output is the unique vector $D(p',b')$.

Consider a demand system $D^u(p,b)$ as in \eqref{prog:optimalBundle} for a utility function $u : \R^m_+ \to \R_+$. If this is not a simple demand system, we can implement the demand oracle by adding the constraints $y_i\ge x_i$ for every $i$ with $p'_i=p_i$ to the convex optimization problem in~\eqref{prog:optimalBundle}.

\paragraph{Examples of WGS utilities} 
Some classical examples in the literature are as follows.
\begin{itemize}
\item As previously mentioned, the {\em linear utility function} is given by $u(x) =
 \pr{v}{x}$ for  $v \in \R^m_+$. 
\item The {\em Cobb-Douglas} utility function is specified by parameters $\alpha\in\R_+^m$, $\sum_{j=1}^m \alpha_j = 1$ as
\[
u(x) :=
  \prod_{j=1}^m x_{j}^{\alpha_{j}}\, .
  \]
This is a simple demand system with $x=D^u(p,b)$ such that
$\displaystyle x_j = b\alpha_j/p_j$ for all $j\in[m]$.
\item The {\em constant elasticity of substitution (CES)} utility function
  is specified by parameters $\beta\in\R^m_+$ such that $\sum_{j=1}^m \beta_j =
  1$, and $\sigma\in\R_+$ as  
  \[
  u(x) := \left( \sum_{j=1}^m
    \beta_j^{\frac{1}{\sigma}} x_{j}^{\frac{\sigma -1}{\sigma}}
  \right)^{\frac{\sigma}{\sigma -1}}\, ,
  \]
   This is also a simple demand system with $x=D^u(p,b)$ such that $\displaystyle x_j = \frac{ \beta_j p_j^{- \sigma} b}{\sum_{k=1}^m \beta_k p_k^{1-\sigma}}$ for all $j\in[m]$.
	The CES demand system satisfies the WGS property if and only if $\sigma > 1$. 
\item The {\em nested CES} utility function is defined recursively (see~\cite{JainV06} for more details). Any CES function is a nested CES function. If $g, h_1, \dots, h_t$ are nested CES functions, then $f(x)= \max g(h_1(x^1), \dots, h_t(x^t))$ over all $x^1, \dots, x^t$ such that $\sum_{k=1}^t x^k = x$, is a nested CES function. In a well-studied special case, each good $j$ can only be used in at most one of the $h_i$'s; see e.g.,~\cite{Keller76}.
\end{itemize}

\paragraph{Convex combinations of demand systems}
Given two WGS utility functions $u$ and $u'$, the demand system
corresponding to their sum $u+u'$ may not be WGS. In contrast, taking convex combinations of simple WGS demand systems retains this property; the following proposition is easy to verify.
\begin{proposition}
Let $D(p,b)$ and $D'(p,b)$ be two simple WGS demand systems and  $0\le \lambda\le 1$.
Let us define the demand system $D''=\lambda
D+(1-\lambda) D'$ by
\[
D''(p,b):=D(p, \lambda b) + D'(p,(1-\lambda) b)\, .
\]
Then, $D''$ is a simple WGS demand system.
\end{proposition} 
This enables the
construction of some interesting demand systems. For example, Matsuyama and Ushchev~\cite{matsuyama2017beyond} consider hybrids of CES and
Cobb-Douglas demands, where the demand system can be given as 
 \[
 x_j = \frac{b}{p_j}\left[\lambda \alpha_j + (1-\lambda)
 \frac{ \beta_j p_j^{1- \sigma}}{\sum_k \beta_k p_k^{1-\sigma}}\right]
 \, ,\] 

 for $\beta\in\R^m_+$, $\sum_{j=1}^m \beta_j =
  1$, $\sigma>1$,  $\alpha\in\R_+^m$, $\sum_{j=1}^m \alpha_j = 1$, and $0\le \lambda\le 1$.\footnote{We note that this demand function does not seem to correspond to a nested CES utility function.} 
 
 Note that if $D=D^u$ and $D'=D^{u'}$ for some concave utility
 functions $u$ and $u'$, the demand system $\lambda
D+\lambda' D'$ will in general not correspond to the utility function 
$\lambda u+\lambda' u'$. It is not clear whether one can
explicitly write utility functions corresponding to such convex combinations.

Using a demand oracle model, our algorithm is applicable to  convex combination of simple demand oracles.

\paragraph{Separable and uniformly separable WGS utility functions}
The auction algorithm for linear utilities~\cite{GargK06}
was later extended to separable WGS utility functions~\cite{garg2004auction}, that is,
$u=\sum_{j\in G} u_j$ where each $u_j$ is a WGS utility function
depending only on good $j$. This model was further generalized  to {\em uniformly separable} WGS utility
functions~\cite{garg2007market}, 
that is, $\frac{\partial u(x)}{\partial x_j}= f_{j}(x_j)
g(x)$, where each $f_j$ is a strictly decreasing function. 
This class already includes CES and Cobb-Douglas utilities; however,
it does not appear to extend to demand systems obtained as their convex
combinations, where even the explicit form of the utility function is
unclear. Further, the running time bound stated in
\cite{garg2007market} is unbounded for the CES and Cobb-Douglas cases; 
see Appendix~\ref{section:running-times} for further discussion.

\subsection{Price elasticity of demands} 
A commonly studied property of demand systems is 
{\em price elasticity}.
For simple demand systems that are differentiable, the usual definition of the price
 elasticity of good $j$  with respect to the price of good $k$ is
$e_{j,k}=\partial \log D_j(p,b)/\partial \log p_k$, where $D_j(p,b)$
is the unique demand for good $j$ at prices $p$ and budget
$b$. The WGS property guarantees that $e_{j,k}\ge 0$ if $j\neq k$, and
consequently, $e_{k,k}\le 0$. 

The following definition  does not assume simplicity or differentiability of the demand system. It corresponds to 
$e_{k,k}\ge -f$ for all $k \in [m]$, in the above case.
\begin{definition}\label{def:flexible}
Consider a WGS demand system $D(p,b)$. For some $f>0$, we say that the {\em elasticity of
  $D(p,b)$ is at least $-f$} if the following holds.
For any $(p,b)\in \R^{m+1}_+$ and $x\in D(p,b)$, $j\in [m]$ and $\mu\ge 1$, let us define
\begin{equation}\label{eq:price-inc}
p'_j=\begin{cases} \mu p_k&\mbox{if }k=j\, ,\\
p_k &\mbox{otherwise}.\end{cases}
\end{equation}
Then, there exists a bundle $x' \in D(p', b)$ such that $x'_j \ge x_j/\mu^f$ and $x'_k\ge x_k$ for every $k\neq j$.\

\end{definition}

It is easy to check that the linear demand systems do not satisfy this property for any finite $f$, as the demand for a good may drop to zero by an arbitrary small price increase. We include the proof of the following well-known statement to illustrate this concept.

\begin{lemma}\label{lem:CES-flex}
The Cobb-Douglas demand system has elasticity at
least $-1$, and 
the CES demand system with parameter $\sigma>1$  has elasticity at
least  $-\sigma$.
\end{lemma}

\begin{proof}
The optimal bundle for a Cobb-Douglas utility function is $x=D(p,b)$ with  $\displaystyle x_\ell = b\alpha_\ell/p_\ell$ for $\ell\in[m]$. 
Increasing the price  of a good by a factor $\mu\ge 1$ corresponds to a decrease in the demand by the same factor. Thus, the elasticity is at least $-1$.

\smallskip
The optimal bundle for CES utilities  is  $x=D(b,p)$ with   $\displaystyle x_\ell = \frac{ \beta_\ell p_\ell^{- \sigma} b}{\sum_{k=1}^m \beta_k p_k^{1-\sigma}}$ for  $\ell\in[m]$.
Select any  good $j\in [m]$ and $\mu\ge 1$, and let $p'$ be defined as in \eqref{eq:price-inc}. Let $x'=D(p',b)$ denote the optimal bundle. 
Using $\sigma>1$, we get 
\begin{align*}
x'_j = \frac{ \beta_j \mu^{-\sigma}p_j^{- \sigma} b}{\sum_{k\neq j} \beta_k p_k^{1-\sigma} + \beta_j \mu^{1-\sigma}p_k^{1-\sigma}} 
= \frac{ \beta_j p_j^{- \sigma} b}{\mu^{\sigma}\sum_{k\neq j} \beta_k p_k^{1-\sigma} + \beta_j \mu p_k^{1-\sigma}}
> \frac{ \beta_j p_j^{- \sigma} b}{\mu^{\sigma} \sum_{k} \beta_k p_k^{1-\sigma}} = \frac{x_j}{\mu^{\sigma}}  \, ,
\end{align*}
verifying that the CES demand system has elasticity at least $-\sigma$.
\end{proof}

Our next lemma allows us to derive price elasticity bounds for convex combinations of simple demand systems.

\begin{lemma}\label{lemma:convexComb}
  Let $D$ and $D'$ be simple demand systems with elasticity at least $-f$ and $-f'$, respectively. Let $0\le \lambda\le 1$.
  Then the demand system $\lambda D + (1-\lambda)D'$ has elasticity at least $\min\{-f, -f'\}$.
\end{lemma}
\begin{proof}
Let $D''=\lambda D + (1-\lambda)D'$ and  $f'' = \max\{f, f'\}$.
Let $(p,b)\in\R^{m+1}_+$,
$x = D(p, \lambda b)$ and $x' = D''(p,(1-\lambda) b)$. 
Then, $x''=x+x' = D'(p, b)$.

Let $j\in[m]$, $\mu \ge 1$ and define $p'$ as in~\eqref{eq:price-inc}.
As the elasticity of $D$ is at least $-f$,  for $y = D(p', \lambda b)$ we have $y_j \ge {x_j}/{\mu^{f}} \ge x_j/{\mu^{f''}}$.
Analogously, for $y' = D'(p', (1-\lambda) b)$ we have $y'_j \ge {x'_j}/{\mu^{f''}}$.
Thus, $y_j+y'_j \ge (x_j+x'_j)/{\mu^{f''}}$.
Since $y+y'= D''(p, b)$, 
we conclude that the elasticity of $D''$ is at least $-f''$.  
\end{proof}

\subsection{Gale demand systems}
\label{sec:GaleDemandSystems}
Recall that for a utility function $u:\R^m_+ \to \R_+$,
we can obtain demand systems from utilities using the convex program \eqref{prog:optimalBundle} that maximizes the utility subject to the budget constraint.

Fisher market equilibria can be formulated by the well-known  Eisenberg--Gale convex program  \cite{Eisenberg1959} for many important cases:
\begin{equation}\label{eq:EG}
\max \sum_{i=1}^n b_i \log u_i(x_i)\,\quad \mathrm{s.t.}\quad \sum_{i=1}^n x_i\le e\, .
\end{equation}
Eisenberg~\cite{Eisenberg1961} showed that the optimal solutions to this program, together with the prices corresponding to the optimal Lagrangian multipliers, form a Fisher market equilibrium if the utilities are homogeneous of degree one---that is, $u_i(\alpha x)=\alpha u_i(x)$ for every $x\in\R^m$ and $\alpha>0$, $i\in [n]$.
This class includes many important examples such as linear, Cobb--Douglas, and CES utilities.

Nevertheless, solutions to \eqref{eq:EG} may not correspond to a Fisher market equilibrium in general. 
Nesterov and Shikhman~\cite{nesterov2018computation} showed, using Lagrangian duality, that the optimal solutions to \eqref{eq:EG} always form a Fisher market equilibrium for 
{\em Gale demand systems} defined as:
\begin{equation}\label{galeObjective}
G^u (p, b) := \argmax_{x \in \R^m_+} \, b \log u(x) - \pr{p}{x}  \, .
\end{equation}
The following connection explains why the Eisenberg--Gale program can be used for demand systems of the form  \eqref{prog:optimalBundle} for homogeneous degree one utilities. 
The proof follows easily from Lagrangian duality and Euler's homogeneous function theorem~\cite[Section 6.2]{nisan2007algorithmic}.  
\begin{lemma}\label{lem:gale-equiv}
Let $u$ be a utility function that is homogeneous of degree one and differentiable. Then, for any $(p,b)\in\R^{m+1}_+$, the optimal solutions to the systems \eqref{prog:optimalBundle} and 
\eqref{galeObjective} coincide.
\end{lemma}

Nesterov and Shikhman~\cite{nesterov2018computation} study Gale equilibria (equilibrium under Gale demand systems) as well as the
more general concept of Fisher-Gale equilibria; they also give a
t\^atonnement type algorithm for finding such an equilibrium.

\smallskip
Gale demand systems have already been used for designing approximation algorithms for the NSW problem using its connection with SR-equilibria under Gale demand systems. We will adopt our auction algorithm for SR-equilibria under WGS Gale demand systems, and use it to present a constant-factor approximation algorithm for the NSW problem under capped-SPLC utilities. We elaborate further in Section~\ref{sec:fisher}.

\section{The auction algorithm}
\label{section:exchange}
Algorithm~\ref{exchangeMarkets} describes the auction algorithm. It outputs a $4\epsilon$-approximate market equilibrium for an accuracy  parameter $0<\epsilon<0.25$ specified in the input.
We use the notation $e$, $e^{(i)}$, $D_i$ as in Definitions~\ref{def:market-eq} and \ref{def:approx-eq}. We introduce some notation and formulate key invariants.

\begin{enumerate}[label=(\alph*)]
\item\label{prop:price-inc}
We maintain a price vector $p$ called  {\em market prices}, 
initialized as $p_j=1$ for all $j\in [m]$.\footnote{Recall from
Assumption~\ref{assumption1} that if there exist market clearing prices that are strictly positive, we can also assume that these prices are at least 1.
Even though there might be goods priced at 0 in an equilibrium, we can always find an $\epsilon$-approximate market equilibrium where all prices are positive.}
Prices may only increase, and remain integer powers of  $(1+\epsilon)$.
\item\label{prop:fully-sold} No good is oversold, i.e., at most $e_j$ amount is sold of each good $j=1,2,\ldots,m$. The market price is $p_j=1$ for every good $j$ that is not fully sold.
\item\label{prop:individual} The budget of agent $i$ is $b_i=\pr{p}{e^{(i)}}$. 
  Every agent $i\in [n]$ maintains \emph{individual prices} $p^{(i)}\in\R_+^m$
  such that $p\le p^{(i)}\le (1+\epsilon)p$. We let
  \[
  L_i  := \{j \in [m] : p^{(i)}_j < (1+\epsilon)p_j \}\, \quad \mbox{and}\quad   H_i := [m] \setminus L_i\, .
  \]
\item\label{prop:dominate} Every agent $i\in[n]$
   owns a bundle of goods
  $c^{(i)}\in\R_+^m$ that is dominated by a  bundle $x^{(i)} \in D_i(p^{(i)},b_i)$, i.e., an optimal bundle with respect to the individual prices $p^{(i)}$ and the budget $b_i$.
  We call $x^{(i)}$ the \emph{desired bundle}.
  \item\label{prop:pay} For the amount $c^{(i)}_j$ of good $j$, agent $i$ pays $p_j$ if $j\in L_i$ and $(1+\epsilon)p_j$ if $j\in H_i$.\footnote{This is in contrast with \cite{GargK06} and the other previous
auction algorithms where $i$ may pay $p_j$ for some amount of good $j$
and $(1+\epsilon)p_j$ for another amount.} The \emph{surplus} of agent $i$ is 
\[
s_i:=b_i-\sum_{j \in L_i} c^{(i)}_j  p_j - (1+\epsilon)\sum_{j\in
      H_i}   c^{(i)}_j  p_j\, .
\]
\end{enumerate}
Before giving an overview of the algorithm, we formulate the termination condition.
\begin{lemma}\label{lem:term}
Assume that \ref{prop:price-inc}--\ref{prop:pay} hold as above. Then $s_i\ge 0$ for all $i\in [n]$. Moreover, if 
\[
\sum_{i=1}^n s_i\le 3\epsilon\pr{p}{e}\]
 for the input bundle $e\in\R^m_+$, then the market prices $p$ and allocations $c^{(i)}$, $i=1,2,\ldots,n$ form a $4\epsilon$-approximate market equilibrium.
\end{lemma}
\begin{proof} 
Let $c^{(i)}$ be the bundle of goods agent $i$ owns. By invariant~\ref{prop:dominate}, 
there exists a desired bundle $x^{(i)}$ dominating $c^{(i)}$.
The bundle $x^{(i)}$ is affordable for $i$ at prices $p^{(i)}$, and thus by invariants~\ref{prop:dominate} and~\ref{prop:pay} 
the same bundle is affordable for $i$ at prices $p_j$ for $j\in L_i$ and $(1+\epsilon)p_j$ for $j \in H_i$. 
It follows that $i$ can afford $c^{(i)}$ at prices $p_j$ for $j\in L_i$ and $(1+\epsilon)p_j$ for $j \in H_i$. Hence, $s_i \ge 0$.

\smallskip
Condition  \ref{def:subset} in Definition~\ref{def:approx-eq} is immediate from invariants \ref{prop:individual} and \ref{prop:dominate}, and condition~\ref{def:no-over} follows from \ref{prop:fully-sold}. It is left to verify condition \ref{def:leftover}. We can write
\[
\begin{aligned}
\pr{p}{e-\sum_{i=1}^n c^{(i)}}&=\sum_{i=1}^n \pr{p}{e^{(i)}}-\sum_{i=1}^n \pr{p}{c^{(i)}}=\sum_{i=1}^n \left( b_i - \pr{p}{c^{(i)}}\right)\\
&=\sum_{i=1}^n\left( s_i +\epsilon \sum_{j\in H_i}   c^{(i)}_j  p_j\right)=\sum_{i=1}^n  s_i +\epsilon  \pr{p}{e}\le 4\epsilon  \pr{p}{e}\, . 
\end{aligned}\]
\end{proof}

We now give an overview of the algorithm. 
The individual prices  $p^{(i)}$ are updated by the key subroutine \pFindNewPrices{} that outputs prices and bundles as specified below. In Section~\ref{sec:findnewprices}, we provide implementations for different classes of demand systems.

\medskip
\begin{center} \fbox{\begin{minipage}{0.87\textwidth} \noindent
{\sf Subroutine} {\pFindNewPrices{}} 

\smallskip
\begin{tabular}{ r l  }
{\bf Input:} & Agent $i\in[n]$, market prices $p\in\R^m_+$, individual prices $p^{(i)}\in\R^m_+$ \\ 
            & such that $p\le p^{(i)}\le (1+\epsilon)p$, budget $b_i\in\R_+$, and bundle $c^{(i)}\in\R_+^m$.\\

{\bf Output:} & Prices $\tilde p\in\R^m_+$ and bundle $y\in\R^m_+$ such that \\
&\begin{minipage}{0.8\textwidth}
\begin{enumerate}[label=(\Alph*), leftmargin=1cm]
  \item\label{cond:still-buy} $y \in D_i(\tilde p, b_i)$ and $y \ge c^{(i)}$, and
  \item \label{cond:higher-price} $p^{(i)}\le \tilde p\le
          (1+\epsilon)p$, and $\tilde p_j = (1+\epsilon)p_j$ whenever $y_j > \left(1+\epsilon\right) c^{(i)}_j $.
\end{enumerate}
\end{minipage}
\end{tabular}
\end{minipage}}\end{center}

\smallskip
The new individual prices will be set as $\tilde p$ and the new desired bundle as $y$.
Property \ref{cond:higher-price} requires that if 
 agent $i$ wants significantly more of good $j$ than the current amount $c^{(i)}_j$, then they are  willing to pay the higher price $(1+\epsilon)p_j$.

\medskip
The auction algorithm (Algorithm~\ref{exchangeMarkets}) considers the agents one-by-one in steps. 
A step gives an agent $i$ a chance to spend her surplus money $s_i$ to obtain more goods. 
If $s_i>0$, agent $i$ calls  \pFindNewPrices{($i, p^{(i)}, p, b_i, c^{(i)}$)} to obtain new individual prices $\tilde p$
and desired bundle $y$. At the end of their step, they update $p^{(i)}=\tilde p$. 

Given $\tilde p$ and $y$, agent $i$ considers  all goods with $\tilde p_j=(1+\epsilon)p_j$ one-by-one, and tries to purchase 
$y_j-c^{(i)}_j$ amount using the \pOutbid{} procedure.
First, if $\sum_{i=1}^n c^{(i)}_j<e_j$, i.e., if there is any unsold amount of good $j$, they purchase from such amounts.
If they still want more, they will outbid other agents who have been paying the
lower price $p_j$ for this good, 
by offering the higher price $(1+\epsilon)p_j$. 
Goods with  $\tilde p_j<(1+\epsilon)p_j$ are ignored: no additional amount is purchased from such goods.

If after the calls to \pOutbid{}, a good $j$ is only being sold at the higher price $(1+\epsilon)p_j$, then we call the \pRaisePrice{} procedure to increase the market price from $p_j$
to $(1+\epsilon)p_j$, and update the budgets and surpluses accordingly.
The algorithm terminates once the total surplus of the agents drops below $3 \epsilon  \pr{p}{e}$; according to Lemma~\ref{lem:term}, the 
current prices and allocations 
form a $4\epsilon$-approximate equilibrium.

\medskip
We now formulate the main running time statement. This depends on the running time $T_F$ of the subroutine \pFindNewPrices{}.
We assume that $T_F=\Omega(m)$, since the output needs to return an $m$-dimensional vector of goods.

We also use an upper bound on $p_{\max}/p_{\min}$---the ratio   of the largest and smallest nonzero prices at some $\epsilon$-equilibrium. 
An upper bound on $p_{\max}/p_{\min}$ may be obtained for the specific demand systems, e.g., 
for demand systems arising from linear utilities~\cite{GargK06}.
Alternatively, one can follow the approach of 
Codenotti, McCune, and Varadarajan~\cite{codenotti2005market,codenotti2005polynomial} by adding a dummy
agent with a Cobb--Douglas demand system and an initial endowment of a
small fraction of all goods. In the presence of such an agent, we can
obtain a strong bound on
$p_{\max}/p_{\min}$, at the expense of obtaining a slightly worse approximation
guarantee. We describe the construction in
Appendix~\ref{section:dummy}. 

Note that it is the ratio $p_{\max}/p_{\min}$ rather than the actual values of  $p_{\max}$ and $p_{\min}$ that matter: by Assumption~\ref{assumption1}, for (approximate-)equilibrium prices $p$, $\alpha p$ also gives (approximate-)equilibrium prices with the same allocation, 
for any $\alpha>0$. In our algorithm, the minimum price will remain $1$ throughout, see Lemma~\ref{lem:minprice}.
\begin{theorem}\label{thm:running-oracle}
Assume all agents have WGS demand systems.
Let $T_F$ be an upper bound on the running time of the
subroutine \textnormal{\pFindNewPrices{}}  and assume that $T_F = \Omega(m)$.
Algorithm~\ref{exchangeMarkets} finds a $4\epsilon$-approximate market
equilibrium in time 
$O\left(\frac{nmT_F}{\epsilon^2}\cdot \log
  \left(\frac{p_{\max}}{p_{\min}}\right)\right).$
\end{theorem}

One particular implementation of \pFindNewPrices{} in Section~\ref{section:newPrices-bounded} is given for bounded elasticities, see Lemma~\ref{lem:findNewPrices-running} for the running time bound.
Recall the elasticity bound $f$ from Definition~\ref{def:flexible}. 
\begin{theorem}\label{thm:alg-main}
If all agents have WGS demand systems with elasticity at least $-f$ for some $f>0$, then an $\epsilon$-approximate equilibrium can
be computed in time 
$O\left(\frac{nm^2 f\cdot T_D}{\epsilon^2}\cdot \log
  \left(\frac{p_{\max}}{p_{\min}}\right)\right),$
where $T_D$ is the time needed for one
call to the demand oracle. 
\end{theorem}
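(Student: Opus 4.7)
The plan is to derive Theorem~\ref{thm:alg-main} as a direct corollary of Theorem~\ref{thm:running-oracle} by exhibiting an implementation of the subroutine \pFindNewPrices{} whose running time is $T_F = O(mf\cdot T_D)$. Substituting this bound into the complexity $O\!\left(nmT_F/\epsilon^2 \cdot \log(p_{\max}/p_{\min})\right)$ of Theorem~\ref{thm:running-oracle} immediately yields the claimed $O\!\left(nm^2 f T_D/\epsilon^2 \cdot \log(p_{\max}/p_{\min})\right)$ running time; the factor $4$ in the $4\epsilon$-approximation of Theorem~\ref{thm:running-oracle} is absorbed by running the algorithm with parameter $\epsilon/4$. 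So the entire burden is in designing and analyzing \pFindNewPrices{} under the elasticity bound from Definition~\ref{def:flexible}.

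For the subroutine itself, I would use an iterative price-raising scheme driven by calls to the demand oracle. Starting from the current individual prices $p^{(i)}\in[p,(1+\epsilon)p]$ and current holdings $c^{(i)}$, each iteration picks a coordinate $j$ whose individual price has not yet saturated at $(1+\epsilon)p_j$, multiplies $p^{(i)}_j$ by a factor $(1+\mu)$ for $\mu=\Theta(\epsilon/f)$, and invokes the demand oracle to update the bundle. Two structural facts make this work: (i) by WGS, the demand for every good with unchanged price does not decrease, so the invariant $x^{(i)}_k \ge c^{(i)}_k$ is automatic for $k\neq j$; and (ii) by the elasticity assumption, the demand for the raised coordinate $j$ shrinks by at most a factor $(1+\mu)^f = 1+O(\epsilon)$, which is sufficiently close to $1$. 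Each individual price can be multiplied by $(1+\mu)$ at most $\lceil \log_{1+\mu}(1+\epsilon)\rceil = O(f)$ times before hitting the cap $(1+\epsilon)p_j$, so the loop performs at most $O(mf)$ oracle calls, giving $T_F = O(mf\cdot T_D)$.

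The main obstacle is preserving the invariant $x^{(i)}\ge c^{(i)}$ on the coordinate $j$ whose price is raised and arguing correct termination. The elasticity guarantee only gives $x'_j \ge x_j/(1+\mu)^f$, not $x'_j\ge c^{(i)}_j$ directly; I would handle this by picking $\mu$ small enough that the cumulative demand loss accrued across all $O(f)$ price hikes on a single coordinate stays within the $(1+\epsilon)$ slack that \pOutbid{} and the outer auction loop maintain between $x^{(i)}_j$ and $c^{(i)}_j$. The subroutine terminates either because the demand oracle returns a bundle dominating $c^{(i)}$ that consumes the surplus budget $b_i$, or because every individual price has saturated at $(1+\epsilon)p$, in which case the Algorithm~\ref{exchangeMarkets}'s outer loop triggers a global market-price increase in the next step. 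Combining the $O(mf\cdot T_D)$ bound on $T_F$ with Theorem~\ref{thm:running-oracle} then yields the stated complexity.
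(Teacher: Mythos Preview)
Your high-level plan---deriving Theorem~\ref{thm:alg-main} from Theorem~\ref{thm:running-oracle} by implementing \pFindNewPrices{} in time $O(mf\cdot T_D)$ via iterated single-coordinate price hikes of multiplicative step roughly $(1+\epsilon)^{1/f}$---is exactly the paper's approach, and the iteration count ($O(f)$ hikes per coordinate, $O(mf)$ oracle calls total) is correct.

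The gap is in how you preserve $y\ge c^{(i)}$ on the raised coordinate. You propose to rely on a ``$(1+\epsilon)$ slack that \pOutbid{} and the outer auction loop maintain between $x^{(i)}_j$ and $c^{(i)}_j$'', but no such slack is maintained: after \pOutbid{} it is entirely possible that $c^{(i)}_j = x^{(i)}_j$ (the agent obtained the full desired amount of good $j$), so your cumulative-loss argument has nothing to absorb. Raising the price of such a $j$ by any positive factor can drive the new demand below $c^{(i)}_j$, violating condition~\eqref{cond:still-buy}. Likewise your termination criteria are off: by non-satiation the budget is always fully consumed, and full saturation of all individual prices need not (and typically does not) occur.

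The fix, which is what the paper does in Lemma~\ref{lem:findNewPrices-running}, is to make the slack a \emph{trigger} rather than an assumed invariant: raise $\tilde p_j$ only on coordinates where the \emph{current} bundle satisfies $y_j > (1+\epsilon)c^{(i)}_j$, i.e.\ exactly where condition~\eqref{cond:higher-price} is violated. A single hike by $(1+\epsilon)^{1/f}$ then shrinks $y_j$ by at most a factor $(1+\epsilon)$ via the elasticity bound, so $y'_j > c^{(i)}_j$ is preserved step by step; no cumulative accounting is needed. The while loop is simply ``while~\eqref{cond:higher-price} is violated'', so termination is automatic with~\eqref{cond:higher-price} satisfied, and~\eqref{cond:still-buy} is maintained inductively. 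With this correction the rest of your argument goes through unchanged.
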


We give an overview of the running times of the previous auction
algorithms in Appendix~\ref{section:running-times}.

\begin{algorithm}[!htb]
\raggedright
\SetKwProg{Init}{Initialize}{}{}
\DontPrintSemicolon
\SetAlgoLined
\let\oldnl\nl
\newcommand{\nonl}{\renewcommand{\nl}{\let\nl\oldnl}}
    
\KwIn{ Demand systems $D_i$, and the endowment vectors $e^{(i)}$,
      and $\epsilon\in (0,0.25)$.} 
\KwOut{A $4\epsilon$-approximate market equilibrium.}
\Init{}{
\lFor{$j\in[m]$}{$e_j \gets \sum_{i=1}^n e^{(i)}_j$ ; $p_j \leftarrow 1$ ; $w_j\gets e_j$ ; $l_j\gets 0$} 
\For{$i\in[n]$}{$b_i\gets \sum_{j=1}^n e^{(i)}_j$ ; $s_i\gets b_i$ \;
\lFor{$j\in [m]$}{$p^{(i)}_j\gets 1$ ; $c^{(i)}_j\gets 0$ }}
}
    \While{$\sum_{i=1}^n s_i \le 3\epsilon \pr{p}{e} $}{  
        Select next agent $i\in [n]$ with $s_i > 0$. \tcp*[l]{Step for agent $i$.}\label{aStep}
        ($\tilde p, y) \leftarrow $  \pFindNewPrices($i,p, p^{(i)},  b_i, c^{(i)}$)\;
        \For{$j = 1$ \KwTo $m$}{
            \uIf (\tcp*[f]{Case 1}){  $p^{(i)}_j < (1+\epsilon )p_j$ and $ \tilde p_j = (1+ \epsilon)p_j$}{
            $s_i \leftarrow s_i - c^{(i)}_j \cdot\epsilon p_j$ ;
            $l_j \leftarrow l_j -c^{(i)}_j$  
            \tcp*[l]{$i$ pays $(1+\epsilon)p_j$ instead of $p_j$.}
            \pOutbid($i$, $j$, $y_j - c^{(i)}_j$)\;
            }\ElseIf (\tcp*[f]{Case 2}) {  $p^{(i)}_j = (1+\epsilon) p_j$ and $ \tilde p_j =  (1 + \epsilon)p_j$}{
            \pOutbid($i$, $j$, $y_j - c^{(i)}_j$)\;
            }
            \tcp*[h]{Skip the goods with $p^{(i)}_j < (1+\epsilon) p_j$ and $ \tilde p_j < (1+\epsilon) p_j$.}\tcp*[r]{Case 3}
            \lIf{$w_j + l_j = 0$}{\pRaisePrice($j$)}
        }
    $p^{(i)} \leftarrow \tilde p$\;
    }
    \Return $p$, $\{p^{(i)}\}_{i\in[n]}$ and $\{c^{(i)}\}_{i \in [n]}$\label{sum-s}
  \caption{Auction algorithm\label{exchangeMarkets}}
\end{algorithm}

\begin{procedure}[h]
\DontPrintSemicolon
    \SetAlgoLined   
       \KwIn{ Agent $i\in[n]$, good $j\in [m]$, amount $\gamma>0$.} 
    $z\gets \gamma$\;
    \If(\tcp*[f]{a part of $j$ is unsold}){$w_j > 0$}{\label{if-first}
        $\mu \gets \min \{w_j, z\}$\;
        $w_j \leftarrow w_j - \mu$\;
         $c^{(i)}_j \leftarrow c^{(i)}_j
        + \mu$\;
      $s_i \leftarrow s_i -  (1+\epsilon)\mu $\tcp*[l]{here $p_j = 1$}\label{if-last}
       $z \leftarrow z- \mu$\;
    }
    \While {$z > 0$ and $l_j > 0$}{\label{l:while-first}
        Let $k\in [n]$ be such that  $c^{(k)}_j > 0$ and $j\in L_k$ \label{line:select-agent} \;
         $\mu\gets  \min \{c^{(k)}_j, z\}$\;
         $l_j \leftarrow l_j - \mu$ \;
          $c^{(k)}_j \leftarrow c^{(k)}_j - \mu$;\ $c^{(i)}_j \leftarrow c^{(i)}_j + \mu$\label{line:outbid-zero}\tcp*[l]{$i$ outbids $k$}
        $s_k \leftarrow s_k + \mu  p_j$;\ $s_i \leftarrow s_i - (1+\epsilon)\mu  p_j$\; 
        \label{l:while-last}
         $z \leftarrow z - \mu$\; 
    }
\caption{Outbid($i$, $j$, $\gamma$) \label{procOutbid}}
\end{procedure}

\begin{procedure}[h]
\DontPrintSemicolon
    \SetAlgoLined   
       \KwIn{ Good $j\in [m]$.} 
       \For{$k\in [n]$}{
            $b_k \leftarrow b_k + \epsilon p_je^{(k)}_j$\label{budgetUpdate}\;
            $s_k \leftarrow s_k + \epsilon p_je^{(k)}_j$\;
            $p^{(k)}_j \leftarrow (1+\epsilon)p_j$\label{priceIncreaseNext}\;
            }
        $p_j \leftarrow (1+\epsilon)p_j$; \,$l_j\gets e_j$\,  \label{priceIncrease}      
\caption{RaisePrice($j$) \label{procRaiseprice}}
\end{procedure}

\subsection{Description of the algorithm}
We now give a more detailed overview of the algorithm. 
Recall the notation and invariants \ref{prop:price-inc}--\ref{prop:pay} at the start of this section. 

For each good $j=1,2,\ldots,m$, we partition the total amount as $e_j=w_j+l_j+h_j$ according to the price it is sold at:
\begin{itemize}
	\item amount $w_j$ is the unsold part of the good,
	\item amount $l_j$ is sold at the lower price $p_j$, and
	\item amount $h_j$ is sold at the higher price $(1+\epsilon)p_j$.
\end{itemize}
We only explicitly maintain $w_j$ and $l_j$ in the algorithm.
Recall from \ref{prop:fully-sold} that $w_j\ge 0$ and $p_j=1$ as long as $w_j>0$. 
We further maintain  $w_j + l_j > 0$ at the beginning of every step, i.e., there is always a part of the good
that is unsold or owned by an agent at the lower price. 

\paragraph{The Outbid subroutine}
Procedure~\pOutbid$(i,j,\gamma)$,
controls how the ownership of goods may change. 
This is called when agent $i$ would like to purchase an additional amount $\gamma$ of good $j$.

If $w_j>0$, there is some unsold amount of the good, 
then agent $i$ starts by purchasing $\min\{w_j,\gamma\}$ of this amount. 
Recall that $p_j=1$ at this point due to invariant \ref{prop:fully-sold}.

If agent $j$ still requires more of good $j$, 
we consider agents $k$ one-by-one who are paying the lower price $p_j$ for good $j$, i.e., $j\in L_k$.
Agent $i$ may take over some of this amount by offering a higher
price  $(1+\epsilon)p_j$.

\paragraph{The RaisePrice subroutine}
Procedure~\pRaisePrice$(j)$ is called when $w_j+l_j=0$ for a good $j$, i.e., it is only sold at the higher price $(1+\epsilon)p_j$. In this case, we increase the market price to $(1+\epsilon)p_j$, set all individual prices $p_j^{(k)}$ to this value, and 
update the budgets and surpluses of all agents whose initial endowment contains good $j$. We also set $l_j=e_j$.

\paragraph{Steps}
The algorithm terminates as soon as the
total surplus drops below $3\epsilon \pr{p}{e}$.
We consider agents $i$ with $s_i>0$ one-by-one. 
By invariant \ref{prop:dominate}, the agent is buying a bundle
$c^{(i)}\le x^{(i)}$ for some $x^{(i)} \in D_i(p^{(i)}, b_i)$. 
The subroutine \pFindNewPrices{($i, p^{(i)}, p, b_i, c^{(i)}$)}
delivers new prices $\tilde p$ and a bundle $y$ satisfying  
Conditions \ref{cond:still-buy} and \ref{cond:higher-price} described above.

Conditions \ref{cond:still-buy} asserts 
 that the current bundle $c^{(i)}$ of agent $i$ is still dominated by a desired bundle $y$ at the increased prices $\tilde p$.
Condition \ref{cond:higher-price} guarantees that $\tilde p \ge p^{(i)}$, and whenever an agent wants
to buy more of some good than they already own at least by a factor
$(1+\epsilon)$, then they are willing to pay the higher price $(1+\epsilon)p_j$ for it. 
(They might already be paying the increased price to start with if $p_j^{(i)} = (1+\epsilon)p_j$. 
In this case $\tilde p_j = (1+\epsilon)p_j = p_j^{(i)}$.)

The above properties suggest the following update rules for each good
$j\in [m]$.

\smallskip
\noindent \textit{Case 1. $p^{(i)}_j < (1+\epsilon) p_j$ and $\tilde p_j = (1+\epsilon)p_j$.} 
The good $j$ was in $L_i$ and needs to be moved to $H_i$, i.e., agent $i$ used to pay $p_j$ but now is willing to pay the higher price for $j$.  
We charge agent $i$ the price $(1+\epsilon)p_j$ for the amount $c^{(i)}_j$ they already own instead of $p_j$.
Additionally, agent $i$ outbids on good $j$ the amount $\min\{y_j-c^{(i)}_j,w_j+l_j\}$.

\smallskip
\noindent\textit{Case 2. $p^{(i)}_j = (1+\epsilon)p_j$ and $\tilde p_j = (1+\epsilon)p_j$.}
The good $j$ was in $H_i$ and stays in $H_i$, i.e., agent $i$ continues to pay the higher price.
The agent $i$ keeps the amount $c^{(i)}_j$ of good $j$ that they already had and
outbids for as much as they can from the other agents, i.e., the amount $\min\{y_j-c^{(i)}_j,w_j+l_j\}$.

\smallskip
\noindent\textit{Case 3. $p^{(i)}_j < (1+\epsilon)p_j$ and $\tilde p_j < (1+\epsilon) p_j$.} 
The good $j$ remains in $L_i$, i.e., agent $i$ continues to pay the lower price. 
By \ref{cond:higher-price}, we must have
$c^{(i)}_j \le y_j \le (1+\epsilon) c^{(i)}_j$; the agent will
not seek to buy more of these goods.

If the above updates result in $w_j + l_j = 0$ for good $j$, then we call \pRaisePrice$(j)$ to increase the market price.
Once all of the goods have been considered we set $p^{(i)}= \tilde p$ and update 
$c^{(i)} $ as the current allocation.

\paragraph{Rounds} 
In the analysis, it will be useful to organize the steps into rounds as in~\cite{GargK06}. 
A \emph{round} consists of all agents making a step once (or being skipped if $s_i = 0$). 

\subsection{Analysis}
We start with a high-level overview of the analysis.  Lemma~\ref{lem:invariants} verifies that all invariants~\ref{prop:price-inc}--\ref{prop:pay} are maintained. Lemma~\ref{lem:minprice} shows that the minimum price remains $p_j=1$ throughout. Note that---in accordance with invariant \ref{prop:fully-sold}---it suffices to show that $w_j>0$ for some good $j$ holds at any point. In other words, the algorithm terminates if all goods are fully sold.

Lemma~\ref{lemma:roundsInIteration} is the key argument in the running time bound: 
it shows that the number of rounds between two calls to \pRaisePrice{} is bounded by $2/\epsilon$.  
The idea is that as long as the market prices do not increase,  the total budget $(\sum_{i=1}^n b_i)$ remains the same, 
while the total money spent on the goods is increasing due to outbidding. 
In every outbid, an agent pays more by a factor $(1+\epsilon)$ for the goods they purchase. Thus, in $n$ consecutive steps the total surplus decreases approximately by a factor $(1+\epsilon)$.
This eventually leads to either a market price increase or reaches an approximate equilibrium.

\begin{lemma}\label{lem:invariants}
If all agents have WGS demand systems, then the invariants
\ref{prop:price-inc}-\ref{prop:pay}  hold after every step.
\end{lemma}
\begin{proof}
\begin{description}
\item[~\ref{prop:price-inc}] This is immediate.
\item[~\ref{prop:fully-sold}] The algorithm maintains  $e_j=w_j+l_j+h_j$ for all goods $j=1,2,\ldots,m$. Further, $w_j,l_j,h_j\ge 0$.  The amount sold is $l_j+h_j\le e_j$, hence, no good is oversold. 
The amount $w_j$ is monotone decreasing throughout. This is guaranteed by property \ref{cond:still-buy} of the procedure
\pFindNewPrices{}, and the fact that $c^{(i)}_j$ may only decrease if
another $c^{(k)}_j$ increases by the same amount.
 Thus, if $w_j=0$ at any point of the algorithm, good $j$ remains fully sold in all remaining steps. 
Note that the market price $p_j$ is first increased from the initial value $p_j=1$ once $w_j=l_j=0$.
\item[~\ref{prop:individual}] The budgets are updated in \pRaisePrice{}, adding $\epsilon p_j e^{(i)}_j$ to $b_i$ when the price $p_j$ increases to $(1+\epsilon)p_j$. Thus, we maintain $b_i = \pr{p}{e^{(i)}}$.
The bounds $p\le p^{(i)}\le (1+\epsilon)p$ are immediate from condition~\ref{cond:higher-price} 
in \pFindNewPrices{} and in the procedure \pRaisePrice{}.
\item[~\ref{prop:dominate}]
Suppose these properties hold for every agent before a step of agent $i$. 
The requirements~\ref{cond:still-buy} and \ref{cond:higher-price} guarantee that $c^{(i)}$ is dominated by a bundle $x^{(i)}\in D_i(p^{(i)}, b_i)$ and prices satisfy $p\le p^{(i)}\le (1+\epsilon)p$, for each agent $i$. 
Moreover, if the budget $b_i$ is increased in line~\ref{budgetUpdate}, the invariant remains true by the WGS property. 

Now, consider an agent $k$ different from $i$. 
In the step, $k$ could lose a part of a good only through the
outbid and hence $c^{(k)}$ does not increase. As long as the prices $p^{(k)}$
do not change, \ref{prop:dominate} holds trivially.
The only time $p^{(k)}$ can change is the price increase step in line~\ref{priceIncreaseNext}, namely, if
$p_j$ increases to $(1+\epsilon)p_j$, it forces
$p^{(k)}_j=(1+\epsilon)p_j$. Note that the price increase only happens once $l_j=0$.
Assume we had $p^{(k)}_j<(1+\epsilon)p_j$ before the price increase,
that is, agent $k$ was buying good $j$ at the lower price $p_j$. By
$l_j=0$ and invariant \ref{prop:pay}, it follows that $c^{(k)}_j=0$ at this point. 
As the budgets may only increase, the WGS property
implies that after increasing $p^{(k)}_j$, the bundle $c^{(k)}$ will
still be dominated by an optimal bundle.
\item[~\ref{prop:pay}] It is straightforward to check that the form of the surplus is maintained. \qedhere
\end{description}
\end{proof}

\begin{lemma}\label{lem:minprice}
We have $\min\{p_j:\, j\in G\}=1$ at the beginning of every step. 
\end{lemma}
\begin{proof}
We show that a good $j$ with $w_j > 0$ exists in every step. 
The statement then follows by  \ref{prop:fully-sold}.
For a contradiction, suppose we reach the point where $w_j=0$ for all $j=1,2,\ldots,m$, and consider the first time this happens. 
At this point all of the goods are fully sold, i.e., $\sum_{i =1}^n c^{(i)} = e$.
Consider the total surplus at this point. 
We have 
\begin{align*}
\sum_{i = 1}^n s_i &= \sum_{i=1}^n \left(b_i - \sum_{j \in L_i} p_j c^{(i)}_j - \sum_{j \in H_i} (1+\epsilon)p_j c^{(i)}_j\right) 
\le \sum_{i=1}^n \left(b_i -\pr{p}{c^{(i)}}\right)\\ 
&= \sum_{i=1}^n b_i - \pr{p}{\sum_{i=1}^n c^{(i)}} = \sum_{i=1}^n b_i - \pr{p}{e} = 0\,.
\end{align*}
This contradicts the assumption $\sum_{i=1}^n s_i > 3\epsilon \pr{p}{e} $ at the beginning of every step.
\end{proof}

\begin{lemma}\label{lemma:roundsInIteration}
The number of rounds between two consecutive calls to \textnormal{\pRaisePrice{}} is at most $ 2/\epsilon$.
\end{lemma}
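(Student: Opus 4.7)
First I would note that within an iteration the potential $\Phi := \sum_j p_j h_j$ --- the total value of goods held at the higher price --- is monotonically non-decreasing and bounded above by $P := p^\top e$, since each \pOutbid{} call or Case~1 self-transition moves some value from the unsold or low-price pool into the high-price pool while the prices $p$ are fixed. My plan is to show that in each round of the iteration in which no price increase is triggered, $\Phi$ grows by at least a constant multiple of $\epsilon P$, which immediately yields the desired $2/\epsilon$-round bound.

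The first step is a per-step estimate. In a step of agent $i$ where \pOutbid{} succeeds fully (which is guaranteed while no price increase has occurred in the round), the non-satiation identity $\tilde p^\top y = b_i$ combined with condition~\eqref{cond:higher-price} of \pFindNewPrices{} (namely $\tilde p_j = (1+\epsilon) p_j$ whenever $y_j > (1+\epsilon) c^{(i)}_j$) gives $s_i^A \le \epsilon(2+\epsilon)\, b_i$ for the residual surplus: for every ``Case~3'' good the demand grows by at most a factor of $1+\epsilon$ while the actually paid price remains $p_j$ and $\tilde p_j - p_j < \epsilon p_j$, whereas for the other goods the paid price equals $\tilde p_j$. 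Since $s_i^B - s_i^A$ is exactly $\epsilon$ times the value of Case~1 self-transitions plus $(1+\epsilon)$ times the value of outbids by $i$, it is at most $(1+\epsilon)\Delta\Phi_i$, and so $\Delta\Phi_i \ge (s_i^B - \epsilon(2+\epsilon)\, b_i)/(1+\epsilon)$, where $s_i^B$ is $i$'s surplus immediately before their step.

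Summing over the agents $i$ processed in the round and using $\sum_{i\in I} s_i^B \ge \sum_i s_i^{r,\mathrm{start}}$ --- before an agent's own step their surplus can only grow by being outbid, and agents skipped at this round have $s_i \le 0$ --- together with $\sum b_i \le P$, gives $\Delta\Phi \ge (\sum s^{r,\mathrm{start}} - \epsilon(2+\epsilon)\, P)/(1+\epsilon)$. Since the termination check in line~\ref{sum-s} forces $\sum s > 3\epsilon P$ at the iteration's start and $\sum s$ is monotonically non-increasing within an iteration, the per-round $\Delta\Phi$ is at least a positive constant fraction of $\epsilon P$; combined with $\Phi \le P$ this yields at most $2/\epsilon$ rounds per iteration.

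The main obstacle will be the tail rounds where $\sum s$ may have dropped close to the $3\epsilon P$ termination threshold, in which case the above lower bound on $\Delta\Phi$ becomes weak. To handle this cleanly I plan to track in parallel the ``spending'' potential $\Psi := P - \sum_i s_i = \sum_j p_j l_j + (1+\epsilon)\Phi$, which is monotonically non-decreasing within an iteration, bounded above by $(1+\epsilon) P$, and starts below $(1-3\epsilon) P$; hence $\Delta\Psi^{\mathrm{iter}} \le 4\epsilon P$. A per-round lower bound $\Delta\Psi \ge \Omega(\epsilon^2 P)$, obtained from the same per-step estimate together with the identity $\Delta\Psi = \epsilon\Delta\Phi + A_1$ (where $A_1$ is the value transitioned from unsold to high-price), then yields the $O(1/\epsilon)$ bound in the worst case as well, closing the argument.
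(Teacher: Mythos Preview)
Your potential $\Phi=\sum_j p_j h_j$ and the per-step estimate are essentially the paper's approach: the paper uses $\varphi=(1+\epsilon)\Phi$ and proves $\Delta\varphi\ge s_i-2.25\epsilon b_i$ per step (Claim~\ref{claim:surplus}) via an equivalent computation. Your route through the residual surplus bound $s_i^A\le\epsilon(2+\epsilon)b_i$ and the identity $s_i^B-s_i^A\le(1+\epsilon)\Delta\Phi_i$ is a valid alternative derivation of the same inequality, so the first two paragraphs of your plan match the paper.

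The gap is in your reading of the control flow. After the \texttt{for}-loop over agents completes without a price increase, the algorithm returns to line~\ref{startIteration} and re-executes the termination test at line~\ref{sum-s}; hence $\sum_i s_i>3\epsilon P$ holds at the start of \emph{every} round the algorithm executes, not only at the start of the iteration. Combined with $s_i^B\ge s_i^{\text{round start}}$ (which you correctly note) and $\epsilon<0.25$, this gives
\[
\Delta\Phi_{\text{round}}\;\ge\;\frac{3\epsilon P-\epsilon(2+\epsilon)P}{1+\epsilon}\;>\;0.5\,\epsilon P
\]
for every round, and the $2/\epsilon$ bound follows immediately from $\Phi\le P$. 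This is exactly how the paper concludes (``As long as $\sum_i s_i>3\epsilon p^\top e$, \ldots''). Your worry about ``tail rounds'' is therefore unfounded, and the secondary $\Psi$ argument is unnecessary. As sketched, that argument would not close anyway: your only route to a per-round lower bound on $\Delta\Psi$ is through $\Delta\Phi$, and once $\sum s^{r,\text{start}}$ is allowed to drop below $\epsilon(2+\epsilon)P$ the bound on $\Delta\Phi$ becomes negative, so $\Delta\Psi=\epsilon\Delta\Phi+A_1$ yields nothing beyond $\Delta\Psi\ge 0$.
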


\begin{proof}
Let $p$ be the market prices after a call to \pRaisePrice{}, and consider a sequence of steps with the same market prices; 
consequently, the budget of every agent remains the same.
Consider a step of an agent $i$ during this sequence.
If $i$ buys $w_j+l_j$ of a good $j$, then \pRaisePrice{} is called and the sequence finishes. 
Thus, we can assume that during this sequence, every agent $i$ gets the amount of each good they desire.

Let $\varphi$ denote the total amount of money spent at a certain point
of this sequence of steps that is spent by the agents on higher price
goods. That is, 
\[
\varphi=(1+\epsilon)\sum_{i=1}^n \sum_{j\in H_i} c^{(i)}_j p_j.
\]
\begin{claim}\label{claim:surplus}
Let $s_i$ denote the surplus of agent $i$ at the beginning of their
step.
Then  the value of $\varphi$ increases at least by $(1+\epsilon)^2 s_i-2.25\epsilon b_i$ during agent $i$'s step.
\end{claim}
\begin{claimproof}
Recall Cases 1-3 in the description of the step. Let $T_k$ be the set of
goods that fall into case $k$, that is, $T_1\cup T_2\cup T_3=[m]$.
\begin{itemize}
\item  
If $j\in T_1$, then $(1+\epsilon)p_jy_j$ will be added to
$\varphi$ in the \pOutbid{} subroutine: In this case, the agent also
outbids itself, moving  good $j$ from $L_i$ to $H_i$.
\item If $j\in T_2$, then $(1+\epsilon)p_j( y_j-c^{(i)}_j)$
  will be added to $\varphi$ in the \pOutbid{} subroutine.
\item
If $j\in T_3$, then we do not increase
$\varphi$. Nevertheless, \ref{cond:higher-price} guarantees
that 
$\tilde  p_j (y_j- c^{(i)}_j)\le \epsilon \tilde p_j c^{(i)}_j.$
Consequently,
\begin{equation}\label{eq:T3}
\sum_{j\in T_3}\tilde  p_j (y_j- c^{(i)}_j)\le \epsilon \pr{\tilde p}{c^{(i)}}\, .
\end{equation}
\end{itemize}
Also note that $\tilde p_j=(1+\epsilon)p_j$ if $j\in T_1\cup T_2$.
Assumption~\ref{non-satiation} on non-satiation guarantees that
$\pr{\tilde p}{y}=b_i$. Let $\Delta\varphi$ denote  the increment in
$\varphi$; this can be lower bounded as
\[
\begin{aligned}
\Delta\varphi&=\sum_{j\in T_1} \tilde p_j y_j+\sum_{j\in T_2} \tilde
p_j(y_j-c^{(i)}_j) \pr{\tilde p}{y} -  \sum_{j\in T_3} \tilde p_j y_j - \sum_{j\in T_2} \tilde
p_jc^{(i)}_j \\
&\ge b_i - \sum_{j\in T_3} \tilde p_j (y_j - c^{(i)}_j) -\pr{\tilde
p}{c^{(i)}}
\ge b_i-(1+\epsilon) \pr{\tilde p}{c^{(i)}}\, ,
\end{aligned}
\] 
using \eqref{eq:T3}.
The money spent by the agent at the beginning of the step is
$b_i-s_i$. Good  $j$ is purchased at price
at least $p_j$ according to \ref{prop:pay}, and $\tilde p_j\le
(1+\epsilon) p_j$. 
Consequently, $\pr{\tilde p}{c^{(i)}}\le (1+\epsilon) (b_i-s_i)$. With
the above inequality, we obtain
\[
\Delta\varphi\ge b_i - (1+\epsilon)^2 (b_i-s_i)\ge (1+\epsilon)^2 s_i - (2\epsilon + \epsilon^2
)b_i \ge (1+\epsilon)^2 s_i - 2.25 \epsilon b_i,
\]
as $\epsilon< 0.25$. This completes the proof.
\end{claimproof}

As long as $\sum_{i=1}^n s_i>3\epsilon \pr{p}{e}$, the claim guarantees that $\varphi$ increases in every round by at least 
\[3(1+\epsilon)^2\epsilon \pr{p}{e} - 2.25\epsilon \sum_{i=1}^n b_i = 0.75 \epsilon \pr{p}{e}\,.\] 
Since
$\varphi\le (1+\varepsilon) \pr{p}{e}$,
the number of rounds until the next call to the \pRaisePrice{} is bounded
by $2/\epsilon$.
\end{proof}

We need one more lemma to bound the total running time of \pOutbid{}.
\begin{lemma}\label{lem:while-bound}
Between two calls to \textnormal{\pRaisePrice{}}, for every agent $k\in[n]$ and good $j\in [n]$, there can be only one occasion that during a call to \textnormal{\pOutbid{($i,j,\gamma$)}} for some $i\in[n]$ and $\gamma>0$, $c_j^{(k)}$ is set to 0 in line~\ref{line:outbid-zero}.
\end{lemma}
\begin{proof}
Assume  $c_j^{(k)}$ was set to 0 in a certain call to \pOutbid.
Before it can be set to 0 again,  agent $k$ must have obtained a positive amount of good $j$. This could only happen in a call to \pOutbid{($k,j,\gamma$)} for some $\gamma>0$. However, until the next call to \pRaisePrice{}, agent $k$ may only buy good $j$ at the higher price $(1+\epsilon)p_j$, and henceforth $j\in H_k$. Thus, agent $k$ cannot be selected again in line~\ref{line:select-agent} at a call to \pOutbid{($i,j,\gamma$)}.
\end{proof}

\begin{proof}[Proof of Theorem~\ref{thm:running-oracle}]
Lemma~\ref{lem:term} shows that at termination, the algorithm returns a $4\epsilon$-market equilibrium.

We first bound the total running time between two calls to \pRaisePrice{}. By Lemma~\ref{lemma:roundsInIteration}, there are at most $2/\epsilon$ rounds. Every round comprises $n$ steps, and every steps calls the procedure \pFindNewPrices{} exactly once.
Therefore, the time taken by \pFindNewPrices{} during this sequence of steps is $O(nT_F/\epsilon)$.

The total number of calls to \pOutbid{} is $m$ in each step, totaling to $O(nm/\epsilon)$. We bound the number of repeats in the `while' loop (lines~\ref{l:while-first}--\ref{l:while-last}) in all calls to \pOutbid{} between two calls to \pRaisePrice{}.
In a call to  \pOutbid{($i,j,\gamma$)}, in all but the final call to the `while' loop, we set $c^{(k)}_j=0$ for some agent $k$. By Lemma~\ref{lem:while-bound}, the total number of these events is at most $O(nm)$.
Hence, the number of repeats in the `while' loop between two calls to \pRaisePrice{} is  $O(nm/\varepsilon+nm)=O(nm/\varepsilon)$. Each repeat takes $O(1)$ time.

From the above, the total time of the \pOutbid{} calls is $O(nm/\epsilon)$ between two calls to \pRaisePrice{}. A call to  \pRaisePrice{} takes $O(nm)$ time.
Consequently, the
total time of such a sequence of steps is $O(nT_F /\epsilon+nm/\epsilon) = O(nT_F /\epsilon)$, using the assumption that $T_F = \Omega(m)$.

By Lemma~\ref{lem:minprice}, the minimum price remains at most $1$ throughout and therefore 
$p_{\max}$ is at most $p_{\max}/p_{\min}$.
Consequently,  \pRaisePrice{} can be called at most
$O(m \log_{1+\epsilon}(p_{\max}/p_{\min}))=O(\frac m\epsilon \log (p_{\max}/p_{\min}))$ times. The
claimed running time bound follows.
\end{proof} 

\section{Implementing the price update subroutine }\label{sec:findnewprices}
In this section, we present different approaches to implement \pFindNewPrices({$i, p, p^{(i)}, b_i, c^{(i)}$}).
Recall that the output prices $\tilde p\in\R^m_+$ and allocations $y\in\R^m_+$ must satisfy the following two requirements:
\begin{enumerate}[label=(\Alph*), leftmargin=1cm]
  \item $y \in D_i(\tilde p, b_i)$ and $y \ge c^{(i)}$, and
  \item $p^{(i)}\le \tilde p\le
          (1+\epsilon)p$, and $\tilde p_j = (1+\epsilon)p_j$ whenever $y_j > \left(1+\epsilon\right) c^{(i)}_j $.
\end{enumerate}

\smallskip
First, in Section~\ref{section:newPrices-bounded} we consider the setting of bounded elasticities. Recall from Lemma~\ref{lem:CES-flex} that this includes Cobb--Douglas utilities and CES utilities with parameter $\sigma>1$. Further, according to Lemma~\ref{lemma:convexComb}, we can use it for convex combinations of demand systems with bounded elasticities, even if they are not given in an explicit form \eqref{prog:optimalBundle}. The algorithm is a simple price increment procedure, making repeated calls to the demand oracle.
Whereas linear utilities do not have bounded elasticities, in
Section~\ref{sec:lin-price}, we give a simple direct algorithm for linear demand systems. 
Finally, in Section~\ref{section:gale}, we implement \pFindNewPrices{} for Gale demand systems. We obtain the new prices as the optimal Lagrangian multipliers of a convex program.

Note that for many utility functions, such as Cobb--Douglas or CES, we can use either of the methods in Section~\ref{section:newPrices-bounded} or  Section~\ref{section:gale}. The running time in Section~\ref{section:newPrices-bounded} depends linearly on the elasticity parameter $f$ and makes several calls to the demand oracle. Still, it could be faster than solving a convex program, e.g., if the demand oracle is given by an explicit formula.

It is possible to find further direct approaches  for particular demand systems, similarly to the approach in Section~\ref{sec:lin-price} for linear demand systems. For example, it is easy to devise an $O(m)$ time procedure for 
Cobb--Douglas demand systems, exploiting the fact that the optimal bundle allocates $\alpha^{(i)}_j b_i$ money for good $j$. Hence, each
price can be set independently of the others. 

We note that even though the above cases cover all standard examples of WGS systems, we do not have a general implementation for demand systems in the form \eqref{prog:optimalBundle}. 

\subsection{Demand systems with bounded elasticities}
\label{section:newPrices-bounded}
Let us assume that the demand system $D_i$ has elasticity at least
$-f$ for some $f>0$. The subroutine \pBEFindNewPrices$_f$($i,p, p^{(i)},  b_i, c^{(i)})$ (Algorithm~\ref{algo:newPrices}) is a simple price increment procedure.
First, we obtain $y\in D_i(p^{(i)},b_i)$ from the demand oracle with $y\ge c^{(i)}$. 
This is possible due to invariant
\ref{prop:dominate}, which guarantees that $c^{(i)}\le x^{(i)}$ for
some $x^{(i)}\le D_i(p^{(i)},b_i)$. Thus,  $y=x^{(i)}$ is itself a suitable choice.
Then, we iterate the following step. As long as
\ref{cond:higher-price} is violated for a good $j$, we increase
its price by a factor $(1+\epsilon)^{1/f}$ until it reaches the upper
bound  $(1+\epsilon)p_j$.

\begin{algorithm}[t]
    \SetAlgoLined
    \DontPrintSemicolon
    \KwIn{Agent $i\in [n]$,  market prices $p\in\R^m_+$, individual prices $p^{(i)}\in\R^m_+$ such that $p\le p^{(i)}\le (1+\epsilon)p$, budget $b_i\in\R_+$, and bundle $c^{(i)}\in\R_+^m$.} 
    \KwOut{Prices $\tilde p$ and bundle $y$.}
    Initialization: $\tilde p \gets p^{(i)}$ \; 
  Obtain $y\in D_i(\tilde p, b_i)$ from the demand oracle such that
   $y\ge c^{(i)}$ \;
    \While {$\exists j:  \tilde p_j < (1+\epsilon)p_j$ and $y_j>(1+\epsilon)c^{(i)}_j$}{
       $\tilde p_j\gets \min \{ (1+\epsilon)^{1/f}\tilde p_j, (1+\epsilon)p_j\}$ \;
 Obtain  $y'\in D_i(\tilde p, b_i)$ from the demand oracle such that $y'_k\ge y_k$ for
       $k\neq j$ \;
  $y\gets y'$ \;
    }
\Return $(\tilde p, y)$ \;
\caption{\pBEFindNewPrices$_f$($i,p, p^{(i)},  b_i, c^{(i)}$)}\label{algo:newPrices}
\end{algorithm}

\begin{lemma}\label{lem:findNewPrices-running}
Assume the demand system $D_i$ has elasticity at least
$-f$ for some $f>0$.
Algorithm~\ref{algo:newPrices} terminates with $\tilde p$ and $y$
satisfying \ref{cond:still-buy} and \ref{cond:higher-price} in
time $O(mf\cdot T_D)$,
where $T_D$ is the time for a call to the demand oracle.
\end{lemma}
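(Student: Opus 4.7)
The plan is to verify the two output conditions (A)/(B) and the termination/running-time bound, in that order.

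For condition (B), the argument is essentially bookkeeping. Since $\tilde p$ is initialized to $p^{(i)}$ and is only ever increased (coordinate-wise) inside the while loop, and each increment to $\tilde p_j$ is capped at $(1+\epsilon)p_j$, the bounds $p^{(i)} \le \tilde p \le (1+\epsilon)p$ hold throughout. The while loop terminates exactly when, for every $j$, either $\tilde p_j = (1+\epsilon)p_j$ or $y_j \le (1+\epsilon)c^{(i)}_j$, which is precisely the implication required in \eqref{cond:higher-price}.

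For condition (A), I would establish the invariant $y \ge c^{(i)}$ and $y \in D_i(\tilde p, b_i)$ throughout. At initialization the invariant holds by Invariant~\eqref{prop:dominate}, since $c^{(i)} \le x^{(i)}$ for some $x^{(i)} \in D_i(p^{(i)}, b_i)$, and the WGS demand oracle (called at unchanged prices $p^{(i)}$) can return a bundle dominating $x^{(i)}$, hence $c^{(i)}$. For the inductive step, consider a single while-loop iteration that raises $\tilde p_j$ by a factor $1+\mu$ with $1+\mu \le (1+\epsilon)^{1/f}$, so $(1+\mu)^f \le 1+\epsilon$. Since only $\tilde p_j$ changes, the WGS property (via the demand oracle applied to the current bundle $y$) produces $y' \in D_i(\tilde p', b_i)$ with $y'_k \ge y_k \ge c^{(i)}_k$ for all $k \ne j$. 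For coordinate $j$, the elasticity hypothesis on $D_i$ yields a bundle with $j$-th coordinate at least $y_j/(1+\mu)^f \ge y_j/(1+\epsilon)$, and combined with the while-loop precondition $y_j > (1+\epsilon)c^{(i)}_j$, this gives the needed $y'_j > c^{(i)}_j$, preserving the invariant.

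For termination and running time, each good $j$ can be the ``pivot'' chosen in the while loop at most $O(f)$ times: its price starts at $\tilde p_j = p^{(i)}_j \ge p_j$, is multiplied by $(1+\epsilon)^{1/f}$ on each non-capping update, and is strictly bounded above by $(1+\epsilon)p_j$. So after at most $\lceil f \rceil$ geometric increments (plus possibly one cap step) it saturates at $(1+\epsilon)p_j$ and can no longer be a pivot. Summing over $j$, the total number of while-loop iterations is $O(mf)$; each performs one demand-oracle call of cost $T_D$ and $O(m)$ bookkeeping. Including the single initial oracle call, the total running time is $O(mf \cdot T_D)$.

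The main subtlety is the combination of WGS and the elasticity bound inside one oracle call: the WGS guarantee secures coordinates $k \ne j$, while the elasticity of $D_i$ is what prevents the $j$-th coordinate from collapsing below $c^{(i)}_j$ when $\tilde p_j$ rises. For simple demand systems the returned bundle is unique, so both properties automatically hold; in the non-simple case one must assume the demand oracle returns a bundle realizing both the WGS monotonicity on unchanged coordinates and the elasticity lower bound on the perturbed coordinate, which is a natural strengthening consistent with how the oracle is used in the algorithm.
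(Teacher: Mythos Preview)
Your proof is correct and follows essentially the same approach as the paper's: condition~\eqref{cond:higher-price} from the termination criterion of the while loop, condition~\eqref{cond:still-buy} by maintaining the invariant $y\ge c^{(i)}$ via WGS on unchanged coordinates and the elasticity bound on the raised coordinate, and the $O(mf)$ iteration bound from the fact that each $\tilde p_j$ can be multiplied by $(1+\epsilon)^{1/f}$ at most $O(f)$ times before reaching its cap. Your closing remark about non-simple demand systems---that the oracle must simultaneously realize the WGS monotonicity on $k\neq j$ and the elasticity lower bound on $j$---is a point the paper leaves implicit; you are right to flag it, and for simple demand systems (the main case of interest for bounded elasticity) there is no issue.
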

We assume that $T_D=\Omega(m)$, since the demand oracle needs to
output an $m$-dimensional vector.
\begin{proof}
The bound on the number of iterations is clear: since we have $p\le
\tilde p\le (1+\epsilon)p$ throughout, the price of every
good can increase at most $f$ times. Condition \ref{cond:still-buy} is
satisfied due to the WGS property and the bound on the demand
elasticity. When increasing $\tilde p_j$, the demand $y_k$ for $k\neq
j$ is non-decreasing as guaranteed by the demand oracle. Further,
$y_j$ may decrease only by a factor $(1+\epsilon)$, and since we had
$y_j>(1+\epsilon)c_j^{(i)}$ before the price update, we still have
$y_j>c_j^{(i)}$ after the price update. Condition \ref{cond:higher-price}
is satisfied at termination since the while loop keeps running as
long as
it is violated. Checking the while condition each time requires
$O(m)$ time; however, this will be dominated by the time $T_D$ according to
the comment on $T_D=\Omega(m)$ above. 
\end{proof}

\subsection{Linear demand systems}\label{sec:lin-price}
We now give a simple direct
implementation of \pFindNewPrices{} for linear demand systems.

\begin{lemma}\label{lemma:linearPrices}
\textnormal{\pFindNewPrices{}} can be implemented in $O(m)$ for a linear demand system
corresponding to the utility function $u(x) = \pr{v}{x}$.
\end{lemma}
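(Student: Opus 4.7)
The plan is to exploit the maximum-bang-per-buck (MBB) structure of linear demand: at any prices $q$ and budget $b$, a demand bundle spends exactly $b$ on goods that maximize the ratio $v_j/q_j$. Let $S := \{j : c^{(i)}_j > 0\}$. Since $c^{(i)} \le x^{(i)}$ for some $x^{(i)} \in D_i(p^{(i)}, b_i)$, every $j \in S$ must be MBB at $p^{(i)}$: setting $M := \max_\ell v_\ell/p^{(i)}_\ell$, we have $v_j/p^{(i)}_j = M$ for all $j \in S$.

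Condition (A) forces every $j \in S$ to remain MBB at the new prices $\tilde p$, so if $M^*$ denotes the MBB value at $\tilde p$, then necessarily $\tilde p_j = v_j/M^*$ for each $j \in S$. I choose
\[
M^* := \max\left\{\max_{\ell\in [m]} \frac{v_\ell}{(1+\epsilon)p_\ell},\ \frac{v^\top c^{(i)}}{b_i}\right\},
\]
set $\tilde p_j := v_j/M^*$ for $j \in S$, and $\tilde p_j := \max\{p^{(i)}_j,\, v_j/M^*\}$ for $j \notin S$. The first term in the max guarantees $v_\ell/M^* \le (1+\epsilon)p_\ell$ for every $\ell$, so $\tilde p \le (1+\epsilon)p$; the second term guarantees $\tilde p^\top c^{(i)} = v^\top c^{(i)}/M^* \le b_i$, so $c^{(i)}$ remains affordable. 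A direct check using $v^\top c^{(i)} \le v^\top x^{(i)} = M\,b_i$ (by full spending of $x^{(i)}$ at $p^{(i)}$ and $c^{(i)} \le x^{(i)}$) yields $M^* \le M$, hence $\tilde p \ge p^{(i)}$ on all coordinates.

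To construct $y$, start from $c^{(i)}$ and spend the residual $b_i - \tilde p^\top c^{(i)} \ge 0$ entirely on one good $\ell^* \in \argmax_{\ell\in[m]} v_\ell/((1+\epsilon)p_\ell)$; by construction $\tilde p_{\ell^*} = (1+\epsilon)p_{\ell^*}$ and $\ell^*$ is MBB at $\tilde p$ whenever $M^*$ equals the first term of the max. Then $y$ is supported on MBB goods of $\tilde p$ and satisfies $\tilde p^\top y = b_i$, so $y \in D_i(\tilde p, b_i)$ with $y \ge c^{(i)}$, verifying condition (A). For condition (B), $y_j > (1+\epsilon)c^{(i)}_j$ can hold only at $j = \ell^*$ (elsewhere $y_j = c^{(i)}_j$), and there $\tilde p_{\ell^*} = (1+\epsilon)p_{\ell^*}$. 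In the corner case where $M^* = v^\top c^{(i)}/b_i$ strictly exceeds $\max_\ell v_\ell/((1+\epsilon)p_\ell)$, the residual equals $0$, so $y = c^{(i)}$ and (B) is vacuous.

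The main obstacle is balancing the two competing requirements on $M^*$: we want $M^*$ small, so that the prices of $S$ rise as much as the $(1+\epsilon)p$ cap allows, but also large enough that $c^{(i)}$ stays within the budget at $\tilde p$. Taking the maximum of the two natural lower bounds reconciles them in one step, and every bookkeeping step (computing $M^*$, filling in $\tilde p$, and forming $y$) is a single linear scan of $[m]$, giving total running time $O(m)$.
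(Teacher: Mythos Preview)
Your argument is correct and is essentially the same idea as the paper's: exploit the MBB structure of linear demand, raise the prices of the current MBB goods until one of them hits the cap $(1+\epsilon)p_j$ (or the budget is exhausted), and dump the entire residual on a single capped good. The paper describes this as a continuous price-raising process in which goods are added to the MBB set along the way and the first stopping event is detected; you instead compute the terminal MBB level
\[
M^* = \max\Bigl\{\max_\ell \tfrac{v_\ell}{(1+\epsilon)p_\ell},\ \tfrac{v^\top c^{(i)}}{b_i}\Bigr\}
\]
in closed form and write down the resulting $\tilde p$ and $y$ directly. Unwinding the paper's process gives the same $\tilde p$ and $y$ (your formula $\tilde p_j=\max\{p^{(i)}_j,v_j/M^*\}$ for $j\notin S$ exactly captures which goods would have joined the MBB set before termination), so the two presentations are equivalent; your closed-form version is a bit more explicit and makes the $O(m)$ bound immediate.
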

\begin{proof}
Recall that for linear utilities $y\in D_i(\tilde p,b)$, $y_{j}>0$ if
and only if $j\in\arg\max_k v_k/p_k$, called \emph{maximum bang-per-buck
goods (MBB)}. We initialize  $\tilde p=p^{(i)}$, and
 let $S\subseteq [m]$ denote the set of
MBB goods. Thus, $y_j=0$ for all $j\notin S$.
We start increasing the prices of all
goods $j\in S$ at the same rate $\alpha$. Once a good outside $S$
becomes MBB, we include it in the set $S$ and also start raising its price.
We terminate when the budget is exhausted or when the price $\tilde p_k$ for a good $k\in S$ reaches 
the upper bound $(1+\epsilon)p_k$. In the latter case, we return the bundle
$y_j=c^{(i)}_j$ if $j\neq k$, and set $y_k= (b_i-\sum_{j\neq k}\tilde p_j
c_j)/\tilde p_k$; clearly, $y_k\ge c^{(i)}_k$. 
These prices and allocations
satisfy \ref{cond:still-buy} and \ref{cond:higher-price}; in fact,
we obtain \ref{cond:higher-price} in the stronger form that $\tilde
p_j=(1+\epsilon)p_j$ whenever $y_j>c^{(i)}_j$. We need to add a good to  $S$
at most $m$ times, and thus we can implement the procedure in $O(m)$ time.
\end{proof}



\subsection{Gale demand systems}\label{section:gale}
We now show that the subroutine \pFindNewPrices{} can be implemented
for Gale demand systems via convex programming. According to
Lemma~\ref{lem:gale-equiv}, 
this result is also applicable for demand systems given in the form 
\eqref{prog:optimalBundle} for
utility functions that are
homogeneous of degree one, in which case the optimal solutions to
\eqref{prog:optimalBundle}  and \eqref{galeObjective} coincide.

Let the utility function $u:\R^m_+\to \R_+$ be strictly concave and differentiable. Strict concavity implies that the demand system is simple: $|G^u(p,b)|=1$ for all $(p,b)\in \R^m_+$.

We implement a stronger and more general form of  \pFindNewPrices{}, with an arbitrary vector $q\in\R^m_+$, $q\ge p$ in place of $(1+\epsilon)p$.

Assume we are given $b\in\R_+$, $p,q,c \in \R^m_+$ such that $p \le q$, and moreover assume that $c\le x$ for some  $x \in G^u(p, b)$. The goal is to find $\tilde p$ and $y$ such that
\begin{enumerate}[label=(\Alph*\textsc{\char13})]
\item\label{firstg} $y \ge c$ where $y \in G^u(\tilde p, b)$, and
\item\label{secondg} $p\le \tilde p \le q$ and $\tilde p _j = q_j$ whenever $y_j > c_j$.
\end{enumerate} 

\smallskip
In the following convex program, the agent is allowed to buy a good $j$ at two prices: 
amount $y'_j$ at price $p_j$ and amount $y''_j$ at price $q_j$; 
the amount at the lower price $p_j$ is capped at $c_j$.
\begin{equation}\label{prog:GaleConstrained}
\begin{aligned}
\max~~  b \ln u(y) & - \pr{p}{y'} - \pr{q}{y''} \\
y&=y'+y''\\
y' &\le c \\
y', y'' &\ge  0 \, .
\end{aligned}
\end{equation}

We show that the optimal solution to this program, along with the
prices obtained from the KKT conditions satisfy  \ref{firstg} and \ref{secondg}.

Since all constraints are linear, strong duality holds.
Let $y^* = y' + y''$ be an optimal solution of~\eqref{prog:GaleConstrained}. By the KKT conditions, 
there exists $\alpha\in\R_+^m$ such that for any $j\in [m]$,
\begin{enumerate}[label=(\roman*)]
\item\label{l1} $b\cdot \frac{\partial_j u(y^*)}{u(y^*)} \le \min \{ \alpha_j + p_j, q_j\}$,
\item\label{l2} $b\cdot \frac{\partial_j u(y^*)}{u(y^*)} = \alpha_j + p_j$ whenever $y'_j>0$,
\item\label{l3} $b\cdot \frac{\partial_j u(y^*)}{u(y^*)} = q_j$ whenever
  $y''_j>0$, and 
\item \label{l4} $y'_j=c_j$ whenever $\alpha_j>0$.
\end{enumerate}
Note that in an optimal solution we must have $y'_j>0$ whenever $c_j>0$ and $y^*_j>0$.
We define the prices $\tilde{p}_j$ as 
\[
\tilde p_j :=
\left\{
  \begin{array}{ll}
    q_j & \mbox{if } c_j = 0 \mbox{ and } y^*_j > 0 \,, \\ 
    {\alpha_j}+p_j  & \mbox{otherwise}  \,.
  \end{array}
\right.
\] 

\begin{lemma} \label{lemma:GaleShadowPrices}
The allocations $y^*$ and prices $\tilde p$ satisfy \ref{firstg} and \ref{secondg}.
\end{lemma}
\begin{proof}
Since all constraints are linear, strong duality holds for \eqref{galeObjective} as well as for \eqref{prog:GaleConstrained}.
We start with \ref{secondg}. 
Let $j \in [m]$. If $c_j = 0$ and $y^*_j > 0$, then $\tilde p_j = q_j$ and thus \ref{secondg} holds by definition.

Suppose $c_j > 0$. In case $y''_j>0$, \ref{l2} and \ref{l3} imply that $\tilde p_j=\alpha_j+p_j=q_j$. This holds whenever $y^*_j>c_j$. 
It is left to show that that $p_j\le \tilde p_j \le q_j$ in case  $y^*_j \le c_j$.
If $y^*_j > 0$, this follows from \ref{l1} and \ref{l2}. 
If $y^*_j = y'_j = 0$, by $c_j > 0$ and \ref{l4} we have $\alpha_j = 0$ and thus $\tilde p_j = p_j$.

\smallskip
For \ref{firstg}, we first show $y^*\in G^u(\tilde p,b)$. 
By the KKT conditions for \eqref{galeObjective}, we have $y^*\in G^u(\tilde p,b)$ if and only if for all $j \in [m]$ it holds: 
\begin{enumerate}[label=(G\arabic*)]
  \item\label{c:g1} $\frac{b \partial_j u(y^*)}{u(y^*)} \le \tilde p_j$, and 
  \item\label{c:g2} $\frac{b \partial_j u(y^*)}{u(y^*)} = \tilde p_j$ whenever $y^*_j>0$.
\end{enumerate}
Let $j \in [m]$. 
The condition $\frac{b \partial_j u(y^*)}{u(y^*)} \le \tilde p_j$ follows by definition of $\tilde p$ and by \ref{l1}.
Suppose $y^*_j = y'_j + y''_j > 0$.
If $c_j=0$, then $y''_j>0$ and $\tilde p_j=q_j$; \ref{c:g2} follows by \ref{l3}. If $c_j>0$, then we must have $y'_j>0$ and $\tilde p_j=p_j+\alpha_j$. Thus, \ref{c:g2} follows from \ref{l2}.

It remains to show that $y^*\ge c$. We prove by contradiction: assume
that $y^*_j<c_j$ for some good $j$. In particular, $c_j > 0$.
This implies $\alpha_j=0$ by \ref{l4}, yielding $\tilde p_j=p_j$. 
By the strict concavity assumption, $y^*$ is the unique optimal bundle in $G^u(\tilde p, b)$. 
Using the WGS property for $(p,b)$ and $(\tilde p, b)$ we have
$y^*_j\ge x_j$ since $p_j = \tilde p_j$. We obtain a contradiction
to $y^*_j<c_j\le x_j$. 
\end{proof} 

\section{Nash social welfare and spending restricted equilibrium}
\label{sec:fisher}


In the rest of the paper we focus on the Nash social welfare (NSW) problem and related spending restricted (SR) equilibria. 
In this section, we provide the necessary definitions. For a more detailed introduction and significance of the NSW problem we refer to~\cite{anari2017nash,cole2015approximating,garg2018approximating}.

In the Nash social welfare (NSW) problem, the objective is to allocate $m$ indivisible goods to $n$ agents ($m\ge n$). Each agent is equipped with a utility function over the subsets of goods. The goal is to find a partition $S_1\cup S_2\cup \ldots S_n=[m]$ of the goods in order to maximize the geometric mean of the agents' utilities, $\left(\prod_{i=1}^nu_i(S_i)\right)^{1/n}$.

The NSW problem is NP-hard already for linear (additive) utilities, that is, if $u_i(S)=\sum_{j\in S} v_{ij}$ for each agent $i\in [n]$.  Here, we focus on the constant factor approximability of the problem.

The first constant factor approximation for this problem was given by Cole and Gkatzelis \cite{cole2015approximating} for the problem under linear utilities.  Their approach solves a continuous relaxation that corresponds to a divisible market problem, and rounds an optimal fractional  solution.  Recall that a continuous utility function for agent $i$ is give by $u_i(x)=\sum_{j\in S} v_{ij}x_{ij}$.  Then, the natural relaxation is exactly the program \eqref{eq:EG} with all $b_i=1$ and $e_j = 1$.  In other words, the natural relaxation is a market equilibrium in the induced Fisher market.  However, it is easy to see that this relaxation has an unbounded integrality gap.  To avoid this issue, Cole and Gkatzelis \cite{cole2015approximating} introduced the notion of {\em spending restricted equilibrium} that we now define in a slightly more general form. 

\begin{definition}[SR-equilibrium]\label{def:SR}
Consider an Fisher market with a set $A=[n]$ of agents, budgets $b_i$ for $i\in A$, a set $G=[m]$ of goods.  Let $D_i(p,b)$ denote the demand system of agent $i\in A$.  The prices $p\in \R^m_+$ and bundles $x^{(i)} \in \R^m_+$ form a \emph{spending restricted (SR) market equilibrium} if
 \begin{enumerate}[label=(\roman*)]
 \item  $x^{(i)}\in D_i(p,b_i)$ for all $i\in A$, and
\item  $\sum_{i=1}^n x^{(i)}_j\le a_j := e_j \cdot \min\{1, 1/p_j\}$, with equality whenever $p_j>0$, for all $j\in G$.
\end{enumerate}
For given prices $p$, we say that $a_j$ is the \emph{available amount} of good $j$.
\end{definition}

For each good $j$, the spending on a good $j$ in an SR-equilibrium is bounded by $e_j$.
We note that unlike the standard market equilibrium,
the set of SR-equilibria can be non-convex already for {\em capped
linear utilities} as shown in~\cite{garg2018approximating}. 
Capped linear utilities are defined as $u(x)=\min(c_i,\sum_j v_{j}x_j)$ for $(c, v)\in \R_+^{m+1}$.

Cole and Gkatzelis~\cite{cole2015approximating} first compute an
SR-equilibrium for linear utilities, 
and show that this can be rounded to an integer solution of
cost at most $2e^{1/e}$ times the optimal NSW solution. 
Spending restrictions cannot be directly added to~\eqref{eq:EG} since they involve the Lagrange multipliers $p$. 
In~\cite{cole2015approximating}, an SR-equilibrium was found via an extension of algorithms by Devanur et al.~\cite{DevanurPSV08} 
and Orlin~\cite{orlin2010improved} for market equilibria for Fisher markets with linear utilities.

Subsequent work by Cole
et al.~\cite{cole2017convex} showed that a spending restricted
equilibrium for the linear markets can be obtained as an optimal solution of a 
convex program (extending a convex formulation of linear Fisher market
equilibria that is different from~\eqref{eq:EG}), and
also improved the approximation guarantee to 2 (the current best factor 
is 1.45~\cite{barman2018finding}). However, this convex formulation is only known to
work for linear utility functions.

 Further work has studied the NSW problem for more general utilities, following the same strategy of first solving a
 SR-equilibrium problem then rounding. 
 Anari et al.~\cite{anari2018nash} studied NSW with 
 {\em separable, piecewise-linear concave (SPLC)} utilities. 
 Garg et al.~\cite{garg2018approximating} studied capped linear utilities. 
 Both papers find (exact or approximate) the corresponding SR-equilibria via fairly complex combinatorial algorithms. 

\paragraph{The Gale demand systems and NSW} The demand systems used in~\cite{anari2018nash,garg2018approximating} do not correspond to the demand system obtained by maximizing the utility function subject to the budget constraint~\eqref{prog:optimalBundle}. 
For capped linear utilities~\cite{garg2018approximating}, 
one needs additional condition that if an agent reaches their utility cap, 
then they will aim to minimize the money spent on achieving this level of utility. 
For SPLC utilities~\cite{anari2018nash}: intuitively,  each agent has a \emph{utility price} for each good and 
the agent is maximizing their utility subject to the budget constraint 
with respect to the utility prices (the utility price is always at least the market price).
In both cases, the total spending of the agents can be below their budgets. 
A natural unified way of capturing these equilibrium concepts is as follows. 
We require that at given prices $p$, agents maximize $\log u_i(x) - p^\top x $ instead of~\eqref{prog:optimalBundle}.
That is, the right notion for the NSW problem seems to be Gale demand system, which we restate here. 
\begin{equation*}
G^u (p, b) = \argmax_{x \in \R^m_+} \, b \log u(x) - p^\top x  \, .
\end{equation*}

Thus, the algorithms in~\cite{anari2018nash,garg2018approximating} 
compute and round an (approximate) SR-equilibrium under Gale demand system to an approximate solution of the NSW problem.
Interestingly, Gale demand systems arising from SPLC and capped additive utilities satisfy the WGS property 
as we will see in Section~\ref{section:BASPLC}.
In contrast, the demand systems arising from SPLC and capped additive utilities in the usual setting~\eqref{prog:optimalBundle} do not satisfy the WGS property.

\medskip
In the rest of the paper, we first modify the auction algorithm for SR-equilibria in Section~\ref{section:SR} 
and then in Section~\ref{section:BASPLC} we give the approximation algorithm for the NSW problem.
We will use the modified auction algorithm to find an approximate SR-equilibrium under Gale demand system of {\em capped SPLC} utilities, the common generalization of the utilities appearing in~\cite{anari2018nash} and~\cite{garg2018approximating}.
In order to apply the auction algorithm, we show that the Gale demand system of {\em capped SPLC} utilities satisfy the WGS property and show how to implement \pFindNewPrices{} in this case. 

Combining the above with a similar rounding as in~\cite{garg2018approximating}, we obtain a constant-factor approximation algorithm for maximizing NSW in polynomial-time when agents have capped SPLC utilities and goods come in multiple copies. The previous algorithm for this setting~\cite{ChaudhuryCGGHM18} runs in pseudopolynomial time. For the special case of linear utilities,~\cite{BeiGHM19} gives such an algorithm.

\paragraph{Existence of SR-equilibria} 
We note that whereas an equilibrium will always exist for WGS
utilities, the existence of an SR-equilibrium is a nontrivial
question. 
For example, suppose an agent $i$ has budget $b_i$ and 
Cobb-Douglas utility function $\prod_{j=1}^m (x^{(i)}_j)^{\beta_j}$,
where $\sum_j \beta_j = 1$, such that $\beta_j >\frac{e_j}{b_i}$ for
some $j$. 
Then the agent $i$ would like to spend at least $\beta_j b_i > e_j$ on
good $j$ for any prices $p$, but the total money that can be spent on
this good is $e_j$. 
Hence, there doesn't exist any SR-equilibrium in this case.

While we do not have general necessary and sufficient conditions on
the existence of an SR-equilibrium, we show that the models
previously studied in the context of NSW admit an {SR-}equilibrium.

\paragraph{Approximate spending-restricted equilibrium}
We use an extension of Definition~\ref{def:approx-eq}.
The main difference is that we
require  all goods to be fully consumed.

\begin{definition}[Approximate SR-equilibrium]\label{def:approx-SReq} 
Consider the same setting as in Definition~\ref{def:SR}.
For an $\epsilon>0$, prices $p\in \R^m$ and 
bundles $x^{(i)} \in \R^m_+$  form an \emph{$\epsilon$-approximate SR-equilibrium} if
\begin{enumerate}[label=(\Roman*)]
\item \label{def:SRsubset} $x^{(i)}\le z^{(i)}$ for some $z^{(i)}\in
  D_i(p^{(i)}, b_i)$, where $p\le p^{(i)}\le (1+\epsilon)
  p$, 
\item \label{def:SRno-over} $\sum_{i=1}^n x^{(i)}_j = a_j := e_j \cdot \min\{1, 1/p_j\}$ for all $j$, and
\item \label{def:SRleftover} $\sum_{j=1}^m p_j \left(\sum_{i=1}^n z^{(i)}_j - a_j\right)\le
  \epsilon \sum_{i=1}^{n} b_i$.
\end{enumerate}
\end{definition}

\section{The auction algorithm for spending restricted equilibria}\label{section:SR}\label{section:spendingRestricted}
We present a modification of Algorithm~\ref{exchangeMarkets} for finding an approximate SR-equilibrium in a Fisher market where each agent satisfies the WGS property. The necessary changes are fourfold.
\begin{itemize}
\item The budgets $b_i$ are constant throughout the algorithm and are part of the input. As such, they do not depend on the prices of goods in the market. 
\item We need to make sure that in an SR-equilibrium exactly $e_j\cdot \min \{1, 1/p_j\}$ of a good is sold. 
\item The initialization needs to be changed since the prices are not scale-invariant as in exchange markets: We cannot assume that there exists an SR-equilibrium with $p_j \ge 1$ for all $j$.
\item  We do not make Assumption~\ref{non-satiation}
on non-satiation. We only use the following weaker assumption, namely
that after the prices increase, the spending of every agent is
non-decreasing. For capped linear utilities,
Assumption~\ref{non-satiation} does not hold, whereas this
weaker assumption is true.

\end{itemize}
\begin{assumption}\label{assumption:budgetIncreases}
Let $(p, b) \in \R^{m+1}$ and $x \in D(p, b)$. If $q \ge p$ and $y \in D(q, b)$, then $\pr{q}{y} \ge\pr{p}{x}$. 
\end{assumption}

Algorithm~\ref{SRauction} describes the algorithm for finding SR-equilibria when agents have WGS demands and uses
the same subroutine \pFindNewPrices{}.
It outputs a $4\epsilon$-approximate market equilibrium. 
We start by listing the invariants.

\begin{enumerate}[label=(SR\alph*)]
\item\label{SRprop:price-inc}
We maintain a price vector $p$ called  {\em market prices}.
Prices may only increase, and remain integer powers of  $(1+\epsilon)$.
\item\label{SRprop:fully-sold} The amount of each good $j$ being sold is exactly $a_j = e_j \cdot \min\{1, 1/p_j \}$.
\item\label{SRprop:individual} Every agent $i\in [n]$ maintains \emph{individual prices} $p^{(i)}\in\R_+^m$
  such that $p\le p^{(i)}\le (1+\epsilon)p$. We let
  $
  L_i  := \{j \in [m] : p^{(i)}_j < (1+\epsilon)p_j \}$ and $ H_i := [m] \setminus L_i\, .$
\item\label{SRprop:dominate} Every agent $i\in[n]$
   owns a bundle of goods
  $c^{(i)}\in\R_+^m$ that is dominated by a \emph{desired bundle} $x^{(i)} \in D_i(p^{(i)},b_i)$.
  \item\label{SRprop:pay} For the amount $c^{(i)}_j$ of good $j$, agent $i$ pays $p_j$ if $j\in L_i$ and $(1+\epsilon)p_j$ if $j\in H_i$.
  The \emph{relative surplus} of agent $i$ is $s^r_i := \pr{p^{(i)}}{x^{(i)}} -\sum_{j \in L_i} c^{(i)}_j  p_j - (1+\epsilon)\sum_{j\in
      H_i}   c^{(i)}_j  p_j$. 
\end{enumerate}
The relative surplus is the difference between the money the agent would like to spend
and what they are actually spending.
Under Assumption~\ref{non-satiation}, $s^r_i=s_i$ holds; we need to make
the distinction since we do not assume non-satiation.

\begin{lemma}\label{lem:SRterm}
Assume that \ref{SRprop:price-inc}--\ref{SRprop:pay} hold. Then $s^r_i\ge 0$ for all $i\in [n]$. Moreover, if 
$\sum_{i=1}^n s_i\le 3\epsilon (\sum_{i=1}^n b_i)$
then the prices $p$ and allocations $c^{(i)}$, $i\in[n]$ form a $4\epsilon$-approximate SR-equilibrium.
\end{lemma}
\begin{proof} 
By \ref{SRprop:individual}--\ref{SRprop:pay} it follows that $s^r_i \ge 0$.

Condition \ref{def:SRsubset} in Definition~\ref{def:approx-SReq} is immediate from invariants \ref{SRprop:individual} and \ref{SRprop:dominate}, and condition~\ref{def:SRno-over} follows from \ref{SRprop:fully-sold}. It is left to verify condition \ref{def:SRleftover}. We can write
\[
\begin{aligned}
 \pr{p}{\left(\sum_{i=1}^n x^{(i)} \right) - a}&=\sum_{i=1}^n \pr{p}{x^{(i)}}-\sum_{i=1}^n \pr{p}{c^{(i)}} \le\sum_{i=1}^n \pr{p^{(i)}}{x^{(i)}}-\sum_{i=1}^n \pr{p}{c^{(i)}}\\
&\le\sum_{i=1}^n\left( s_i +\epsilon \sum_{j\in H_i}   c^{(i)}_j  p_j\right)=\sum_{i=1}^n  s_i +\epsilon  \sum_{i=1}^n b_i \le 4\epsilon  \sum_{i=1}^n b_i\, . 
\end{aligned}\qedhere\]
\end{proof}
\begin{algorithm}[!htb]
\raggedright
\SetKwProg{Init}{Initialize}{}{}
\DontPrintSemicolon
\SetAlgoLined
\let\oldnl\nl
\newcommand{\nonl}{\renewcommand{\nl}{\let\nl\oldnl}}
    
\KwIn{Demand systems $D_i$, budgets $b_i$ and $\epsilon\in (0,0.25)$.} 
\KwOut{A $4\epsilon$-approximate SR-equilibrium.}
\Init{}{}
    \While{$\sum_{i=1}^n s^r_i \le 3\epsilon \pr{p}{e} $}{  
        Select next agent $i\in [n]$ with $s^r_i > 0$. \tcp*[l]{Step for agent $i$.}\label{SRaStep}
        ($\tilde p, y) \leftarrow $  \pFindNewPrices($i,p, p^{(i)},  b_i, c^{(i)}$)\;
        {$\tilde s := s^r_i + \pr{\tilde p}{y} - \pr{p^{(i)}}{x^{(i)}}$}\;
        \For{$j = 1$ \KwTo $m$}{
            \uIf (\tcp*[f]{Case 1}){  $p^{(i)}_j < (1+\epsilon )p_j$ and $ \tilde p_j = (1+ \epsilon)p_j$}{
            $\tilde s \leftarrow \tilde s - c^{(i)}_j \cdot\epsilon p_j$ ;
            $l_j \leftarrow l_j -c^{(i)}_j$  
            \tcp*[l]{$i$ pays $(1+\epsilon)p_j$ instead of $p_j$.}
            \pOutbid($i$, $j$, $y_j - c^{(i)}_j$)\;
            }\ElseIf (\tcp*[f]{Case 2}) {  $p^{(i)}_j = (1+\epsilon) p_j$ and $ \tilde p_j =  (1 + \epsilon)p_j$}{
            \pOutbid($i$, $j$, $y_j - c^{(i)}_j$)\;
            }
            \tcp*[h]{Skip the goods with $p^{(i)}_j < (1+\epsilon) p_j$ and $ \tilde p_j < (1+\epsilon) p_j$.}\tcp*[r]{Case 3}
            \lIf{$l_j = 0$}{\pRaisePrice($j$)}
        }
    $p^{(i)} \leftarrow \tilde p$ and update $s^r_i$\;
    }
    \Return $p$, $\{p^{(i)}\}_{i\in[n]}$ and $\{c^{(i)}\}_{i \in [n]}$\label{SRsum-s}
  \caption{Auction algorithm for SR-equilibrium\label{SRauction}}
\end{algorithm}

\begin{procedure}[h]
\DontPrintSemicolon
    \SetAlgoLined   
       \KwIn{ Agent $i\in[n]$, good $j\in [m]$, amount $\gamma>0$.} 
    $z\gets \gamma$\;
    \While {$z > 0$ and $l_j > 0$}{\label{SRl:while-first}
        Let $k\in [n]$ be such that  $c^{(k)}_j > 0$ and $j\in L_k$ \label{SRline:select-agent} \;
         $\mu\gets  \min \{c^{(k)}_j, z\}$\;
         $l_j \leftarrow l_j - \mu$ \;
          $c^{(k)}_j \leftarrow c^{(k)}_j - \mu$;\ $c^{(i)}_j \leftarrow c^{(i)}_j + \mu$\label{SRline:outbid-zero}\tcp*[l]{$i$ outbids $k$}
          $s^r_k \leftarrow s^r_k + \mu  p_j$;\ $\tilde s \leftarrow \tilde s - (1+\epsilon)\mu  p_j$\; 
        \label{SRl:while-last}
         $z \leftarrow z - \mu$\; 
    }
\caption{Outbid($i$, $j$, $\gamma$) \label{SRprocOutbid}}
\end{procedure}

\begin{procedure}[h]
\DontPrintSemicolon
    \SetAlgoLined   
       \KwIn{ Good $j\in [m]$.} 
       \For{$i\in [n]$}{
            $p^{(k)}_j \leftarrow (1+\epsilon)p_j$\label{SRpriceIncreaseNext}\;
            \lIf{$(1+\epsilon)p_j > 1$}{$c^{(i)}\leftarrow c^{(i)}/(1+\epsilon)$}
            }
        $p_j \leftarrow (1+\epsilon)p_j$; recalculate $s^r_i$ for all agents; $a_j = e_j \cdot \min\{1, 1/p_j\}$; \,$l_j\gets a_j$\,  \label{SRpriceIncrease}      
\caption{RaisePrice($j$) \label{SRprocRaiseprice}}
\end{procedure}

\subsection{Description of the algorithm}
\paragraph{Initialization}
In the case of exchange markets, we used Assumption~\ref{assumption1} to state that approximate equilibrium prices $\ge \mathds{1}$ exist, and then we were able to initialize the algorithm by setting all prices to $1$.
This is not viable for Fisher markets, where even the total budget might be smaller than $m$. 
Instead, we assume that we are given some initial, small enough prices
$\bar p< \1$  and optimal bundles $x^{(i)}\in D_i(\bar p, b_i)$ such that
$\sum_{i=1}^n x^{(i)} \ge e$ and initialize 
$p^{(i)} = \bar p$ for all $i$, and set all $c^{(i)}$'s so that  $c^{(i)} \le x^{(i)}$ and $\sum_i c^{(i)} = e$.
For example, this is achieved whenever there is a single agent that overdemands all the goods under arbitrary low prices, i.e.,
there is $i \in [n]$ and $\bar p<\1$ such that $x^{(i)} \ge e$ for $x^{(i)} \in D_i(\bar p, b_i)$. 

\begin{remark}\label{remark:emptyInitialization}
A simple alternative initialization is to set the price of good $j$ as $p_j=\frac{\epsilon}{e_j\cdot m}\sum_{i} b_i$, and start with allocations $c^{(i)}=0$. The drawback is that we would obtain a slightly weaker equilibrium at termination. Part~\ref{def:SRno-over} of Definition~\ref{def:approx-SReq} requires that the available amount of each good is fully sold; we would need to weaken this property to say that the total price of all unsold goods would be $\le\epsilon \sum_{i} b_i$. Below, we describe the analysis for the case where initially all goods are fully sold, but it can be easily adapted to this version.
\end{remark}

For each good $j=1,2,\ldots,m$, we partition the available amount as $a_j=l_j+h_j$ according to the price it is sold at:
amount $l_j$ is sold at the lower price $p_j$; and
amount $h_j$ is sold at the higher price $(1+\epsilon)p_j$.
We only explicitly maintain $l_j$ in the algorithm.
We further maintain $l_j > 0$ at the beginning of every step, i.e., there is always a part of the good
that is unsold or owned by an agent at the lower price. 

\paragraph{The Outbid subroutine}
As before ~\pOutbid$(i,j,\gamma)$,
controls how the ownership of goods may change. 
When agent $i$ would like to purchase an additional amount $\gamma$ of good $j$,
we consider agents $k$ one-by-one who are paying the lower price $p_j$ for good $j$, i.e., $j\in L_k$.
Agent $i$ may take over some of this amount by offering a higher price  $(1+\epsilon)p_j$.

\paragraph{The RaisePrice subroutine}
Procedure~\pRaisePrice$(j)$ is called when $l_j=0$ for a good $j$, i.e., it is only sold at the higher price $(1+\epsilon)p_j$. 
In this case, we increase the market price to $(1+\epsilon)p_j$, set all individual prices $p_j^{(k)}$ to this value, and 
if $(1+\epsilon)p_j> 1$ we decrease the value $c^{(i)}_j$ of each agent $i$ by factor $(1+\epsilon)$.
We also recalculate the relative surpluses of the agents using the definition and set $l_j=a_j$.

\paragraph{Steps}
The algorithm terminates once as soon as the
total relative surplus drops below $3\epsilon \pr{p}{e}$.
We consider agents $i$ with $s^r_i>0$ one-by-one and offers them a chance to spend more money on the goods by outbidding.
The outbidding proceeds same as in the case of exchange markets by using the \pFindNewPrices{} subroutine.  
Similarly as before, this process results in $l_j = 0$ for good $j$, then we call \pRaisePrice$(j)$ to increase the market price.

\subsection{Analysis}
As previously mentioned, an ($\epsilon$-)SR equilibrium may not exist
at all. In such cases, our algorithm will never terminate, increasing
the prices unlimitedly. 
We give the running time in terms of the
ratio $p_{SR\max}/p_{\min}$. 
Here, $p_{\min}=\min_j \bar p_j$, is the
smallest one among the initial prices, and $p_{SR\max}$ is an upper
bound on the prices in the algorithm; note that we may have $p_{SR\max}=\infty$.
In Section~\ref{section:price-bound}, we give a bound in terms of the
maximum and minimum values of the partial derivatives of the utility function.

\begin{theorem}\label{thm:SRrunning-oracle}
Let 
$T_F$ be an upper bound on the running time of the
subroutine \pFindNewPrices. Then there exists an auction 
algorithm that finds a {$4\epsilon$-approximate SR}
equilibrium in time 
\[
\displaystyle O\left(\frac{nmT_F}{\epsilon^2}\log \left(\frac{p_{SR\max}}{p_{\min}}\right)\right).
\]
\end{theorem}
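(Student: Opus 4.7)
The plan is to mirror the proof of Theorem~\ref{thm:running-oracle}, adapting each of its three ingredients to the spending-restricted setting: (i) the invariant analysis of Lemma~\ref{lem:invariants}; (ii) the rounds-per-iteration bound of Lemma~\ref{lemma:roundsInIteration} via the potential $\varphi=(1+\epsilon)\sum_i\sum_{j\in H_i} c^{(i)}_j p_j$; and (iii) the iteration bound coming from the price-ratio.

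First, I would check that invariants~\eqref{prop:low-highSR},~\eqref{prop:non-decSR},~\eqref{prop:dominate}, and~\eqref{prop:pay} are preserved. Parts~\eqref{prop:pay} and~\eqref{prop:dominate} are handled exactly as before, with two modifications. When a good $j$ sees its price jump from $p_j$ to $(1+\epsilon)p_j$, any agent $k$ who was at the lower price had $c^{(k)}_j=0$ by $l_j=0$, so~\eqref{prop:dominate} survives via the WGS property applied to the raise of $p^{(k)}_j$. Additionally, if the new $p_j$ exceeds $t_j$, the algorithm inserts a cut that drops $\sum_i c^{(i)}_j$ down to the new $a_j=t_j/p_j$; since each $c^{(i)}_j$ is only decreased, domination $c^{(i)}\le x^{(i)}$ is untouched, and invariants~\eqref{prop:low-highSR}--\eqref{prop:non-decSR} are restored by construction. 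The role of Assumption~\ref{non-satiation} in the original argument (at the recomputation of budgets between iterations) is now unnecessary because budgets are fixed.

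Next I would reprove the analogue of Claim~\ref{claim:surplus} with $s^r_i$ in place of $s_i$. The case analysis over $T_1,T_2,T_3$ is structurally identical, and using the property $\tilde p_j(y_j-c^{(i)}_j)\le \epsilon\tilde p_j c^{(i)}_j$ on $T_3$ from~\eqref{cond:higher-price} still yields $\Delta\varphi\ge \tilde p^\top y-(1+\epsilon)\tilde p^\top c^{(i)}$. Here is where the non-satiation equation $\tilde p^\top y=b_i$ from the original proof is unavailable: I would instead invoke Assumption~\ref{assumption:budgetIncreases} applied to $x^{(i)}\in D_i(p^{(i)},b_i)$ and $y\in D_i(\tilde p,b_i)$ with $\tilde p\ge p^{(i)}$ to obtain $\tilde p^\top y\ge {p^{(i)}}^\top x^{(i)}$. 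Combining with $\tilde p^\top c^{(i)}\le (1+\epsilon)({p^{(i)}}^\top x^{(i)}-s^r_i)$ (which follows from the definitions of $L_i,H_i$ together with $p\le p^{(i)}\le\tilde p\le(1+\epsilon)p$), a little algebra identical to the original gives $\Delta\varphi\ge s^r_i-2.25\epsilon b_i$. Since $\varphi\le \sum_i b_i$ and the termination condition is $\sum_i s^r_i\le 3\epsilon\sum_i b_i$, the same counting as in Lemma~\ref{lemma:roundsInIteration} limits each iteration to $2/\epsilon$ rounds. I expect this step to be the main obstacle: one must re-express all the quantities of the original argument in terms of the \emph{desired} spending ${p^{(i)}}^\top x^{(i)}$ rather than the budget $b_i$, and verify that Assumption~\ref{assumption:budgetIncreases} suffices everywhere non-satiation was previously used.

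Finally, the iteration count follows directly from the price-ratio bound: each iteration increases at least one $p_j$ by a factor $(1+\epsilon)$, and since all prices lie between $p_{\min}$ and $p_{SR\max}$, there are at most $O((m/\epsilon)\log(p_{SR\max}/p_{\min}))$ iterations. Each iteration makes $O(n/\epsilon)$ calls to \pFindNewPrices{}, costing $O(nT_F/\epsilon)$, and the total work of the \pOutbid{} subroutines is $O(nm/\epsilon)$ per iteration, dominated by $T_F=\Omega(m)$. Multiplying yields the claimed $O(nmT_F/\epsilon^2\cdot\log(p_{SR\max}/p_{\min}))$ bound. At termination, invariant~\eqref{prop:dominate} gives Definition~\ref{def:approx-SReq}\,(i), invariant~\eqref{prop:non-decSR} gives (ii), and the termination condition $\sum_i s^r_i\le 3\epsilon\sum_i b_i$, adjusted by the $(1+\epsilon)$-gap between $p$ and $p^{(i)}$ as in the exchange-market proof, yields (iii) with the approximation factor $4\epsilon$.
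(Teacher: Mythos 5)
Your proposal follows the paper's own proof essentially step for step: the same modified invariant analysis, the same potential $\varphi$ with $s^r_i$ in place of $s_i$ and Assumption~\ref{assumption:budgetIncreases} supplying $\tilde p^\top y\ge {p^{(i)}}^\top x^{(i)}$ exactly where non-satiation was used, the same $2/\epsilon$ rounds-per-iteration count, and the same final accounting that converts the termination condition into Definition~\ref{def:approx-SReq}(iii) with factor $4\epsilon$. I see no gaps; this is the paper's argument.
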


\begin{lemma}\label{lem:SRinvariants}
If all agents have WGS demand systems, then the invariants
\ref{SRprop:price-inc}-\ref{SRprop:pay}  hold after every step.
\end{lemma}
\begin{proof}
\begin{description}
\item[~\ref{SRprop:price-inc}] This is immediate.
\item[~\ref{SRprop:fully-sold}] The algorithm maintains  $a_j=l_j+h_j= e_j \cdot \min\{1, 1/p_j\}$ for all goods.
This is guaranteed by property~\ref{cond:still-buy} of the procedure
\pFindNewPrices{}, the fact that $c^{(i)}_j$ may only decrease if
another $c^{(k)}_j$ increases by the same amount, and by the procedure \pRaisePrice{} if $p_j$ becomes more than $1$.
\item[~\ref{SRprop:individual}] 
The bounds $p\le p^{(i)}\le (1+\epsilon)p$ are immediate from condition~\ref{cond:higher-price} 
in \pFindNewPrices{} and from the procedure \pRaisePrice{}.
\item[~\ref{SRprop:dominate}]
Suppose these properties hold for every agent before a step of agent $i$. 
The requirements~\ref{cond:still-buy} and \ref{cond:higher-price} guarantee that $c^{(i)}$ is dominated by a bundle $x^{(i)}\in D_i(p^{(i)}, b_i)$ and prices satisfy $p\le p^{(i)}\le (1+\epsilon)p$, for each agent $i$. 

Now, consider an agent $k$ different from $i$. 
In the step, $k$ could lose a part of a good only through the
outbid and hence $c^{(k)}$ does not increase. As long as the prices $p^{(k)}$
do not change, \ref{SRprop:dominate} holds trivially.
The only time $p^{(k)}$ can change is the price increase step in line~\ref{SRpriceIncreaseNext}, namely, if
$p_j$ increases to $(1+\epsilon)p_j$, it forces
$p^{(k)}_j=(1+\epsilon)p_j$. Note that the price increase only happens once $l_j=0$.
Assume we had $p^{(k)}_j<(1+\epsilon)p_j$ before the price increase,
that is, agent $k$ was buying good $j$ at the lower price $p_j$. By
$l_j=0$ and invariant \ref{SRprop:pay}, it follows that $c^{(k)}_j=0$ at this point. 
The WGS property implies that after increasing $p^{(k)}_j$, the bundle $c^{(k)}$ will
be still dominated by an optimal bundle.
\item[~\ref{SRprop:pay}] It is straightforward to check that the form of the relative surplus is maintained. \qedhere
\end{description}
\end{proof}

The bound on the number of rounds two consecutive calls to \textnormal{\pRaisePrice{}} is the same as the one for Algorithm~\ref{exchangeMarkets}. They differ slightly due to using Assumption~\ref{assumption:budgetIncreases} instead of Assumption~\ref{non-satiation}. 

\begin{lemma}\label{lemma:SRroundsInIteration}
The number of rounds between two consecutive calls to \textnormal{\pRaisePrice{}} is at most $ 2/\epsilon$.
\end{lemma}

\begin{proof}
Let $p$ be the market prices after a call to \pRaisePrice{}, and consider a sequence of steps with the same market prices.
Consider a step of an agent $i$ during this sequence.
If $i$ buys $l_j$ of a good $j$, then \pRaisePrice{} is called and the sequence finishes. 
Thus, we can assume that during this sequence, every agent $i$ gets the amount of each good they desire.

Let $\varphi$ denote the total amount of money spent at a certain point
of this sequence of steps that is spent by the agents on higher price
goods. That is, 
\[
\varphi=(1+\epsilon)\sum_{i=1}^n \sum_{j\in H_i} c^{(i)}_j p_j.
\]

\begin{claim}
Let $s^r_i$ denote the relative surplus of agent $i$ at the beginning of their step.
Then  the value of $\varphi$ increases by at least $s^r_i-2.25\epsilon b_i$ during agent $i$’s step.
\end{claim} 

\begin{claimproof}
Recall Cases 1-3 in the description of the step. Let $T_k$ be the set of
goods that fall into case $k$, that is, $T_1\cup T_2\cup T_3=[m]$.
\begin{itemize}
\item  
If $j\in T_1$, then $(1+\epsilon)p_jy_j$ will be added to
$\varphi$ in the \pOutbid{} subroutine: In this case, the agent also
outbids itself, moving  good $j$ from $L_i$ to $H_i$.
\item If $j\in T_2$, then $(1+\epsilon)p_j( y_j-c^{(i)}_j)$
  will be added to $\varphi$ in the \pOutbid{} subroutine.
\item
If $j\in T_3$, then we do not increase
$\varphi$. Nevertheless, \ref{cond:higher-price} guarantees
that 
$\tilde  p_j (y_j- c^{(i)}_j)\le \epsilon \tilde p_j c^{(i)}_j.$
Consequently,
\begin{equation}\label{eq:SRT3}
\sum_{j\in T_3}\tilde  p_j (y_j- c^{(i)}_j)\le \epsilon \pr{\tilde p}{c^{(i)}}\, .
\end{equation}
\end{itemize}
Also note that $\tilde p_j=(1+\epsilon)p_j$ if $j\in T_1\cup T_2$.
Let $\Delta\varphi$ denote the increment in
$\varphi$; this can be lower bounded as
\[
\begin{aligned}
\Delta\varphi&=\sum_{j\in T_1} \tilde p_j y_j+\sum_{j\in T_2} \tilde
p_j(y_j-c^{(i)}_j)
= \pr{\tilde p}{y} -  \sum_{j\in T_3} \tilde p_j y_j - \sum_{j\in T_2} \tilde
p_jc^{(i)}_j \\
&\ge  \pr{\tilde p}{y} - \sum_{j\in T_3} \tilde p_j (y_j - c^{(i)}_j) - \pr{\tilde p}{c^{(i)}}
\ge \pr{\tilde p}{y} -(1+\epsilon) \pr{\tilde p}{ c^{(i)} }\enspace ,
\end{aligned}
\] 
using \eqref{eq:SRT3}.
The money spent by the agent at the beginning of the step is
$\pr{p^{(i)}}{x^{(i)}} - s^r_i$. Good  $j$ is purchased at price
at least $p_j$ according to \ref{prop:pay}, and $\tilde p_j\le
(1+\epsilon) p_j$. 
Consequently, $\pr{\tilde p}{c^{(i)}} \le (1+\epsilon) (\pr{p^{(i)}}{x^{(i)}}  - s^r_i)$.
Assumption~\ref{assumption:budgetIncreases} yields $\pr{\tilde
p}{y}\ge \pr{p^{(i)}}{x^{(i)}} $. Therefore, using 
$\epsilon<0.25$, we obtain
\[
\Delta\varphi\ge \pr{\tilde p}{y} - (1+\epsilon)^2 \left(\pr{p^{(i)}}{x^{(i)}}  - s^r_i\right) 
\ge s^r_i + \pr{\tilde p}{y} - (1+\epsilon)^2\pr{p^{(i)}}{x^{(i)}} 
\ge s^r_i - 2.25 \epsilon  \pr{\tilde p}{y} 
\ge  s^r_i - 2.25 \epsilon b_i\,.
\]
In the last inequality we used $\pr{p}{y}\le b_i$. This completes the proof.
\end{claimproof}

As long as $\sum_{i=1}^n s_i^r>3\epsilon \sum_{i=1}^n b_i$, the claim guarantees that $\varphi$ increases in every round by at least 
\[3\sum_{i=1}^n b_i - 2.25\epsilon \sum_{i=1}^n b_i = 0.75 \epsilon \sum_{i=1}^n b_i\,.\] 
Since
$\varphi\le \sum_{i=1}^n b_i$,
the number of rounds until the next call the \pRaisePrice{} is bounded
by $2/\epsilon$.
\end{proof}

We need one more lemma to bound the total running time of \pOutbid{}.
\begin{lemma}\label{lem:SRwhile-bound}
Between two calls to \textnormal{\pRaisePrice{}}, for every agent $k\in[n]$ and good $j\in [m]$, there can be only one occasion that during a call to \textnormal{\pOutbid{($i,j,\gamma$)}} for some $i\in[n]$ and $\gamma>0$, $c_j^{(k)}$ is set to 0 in line~\ref{SRline:outbid-zero}.
\end{lemma}
\begin{proof}
Assume  $c_j^{(k)}$ was set to 0 in a certain call to \pOutbid.
Before it can be set to 0 again,  agent $k$ must have obtained a positive amount of good $j$. This could only happen in a call to \pOutbid{($k,j,\gamma$)} for some $\gamma>0$. However, until the next call to \pRaisePrice{}, agent $k$ may only buy good $j$ at the higher price $(1+\epsilon)p_j$, and henceforth $j\in H_k$. Thus, agent $k$ cannot be selected again in line~\ref{SRline:select-agent} at a call to \pOutbid{($i,j,\gamma$)}.
\end{proof}

\begin{proof}[Proof of Theorem~\ref{thm:SRrunning-oracle}]
Lemma~\ref{lem:SRterm} shows that at termination, the algorithm returns a $4\epsilon$-market equilibrium.

Let us start by bounding the total running time between two calls to \pRaisePrice{}. By Lemma~\ref{lemma:SRroundsInIteration}, there are at most $2/\epsilon$ rounds. Every round comprises $n$ steps, and every steps calls the procedure \pFindNewPrices{} exactly once.
Therefore, the time taken by \pFindNewPrices{} during this sequence of steps is $O(nT_F/\epsilon)$.

The total number of calls to \pOutbid{} is 
$m$ in each step, totaling to $O(nm/\epsilon)$.
Let us bound the number of repeats in the `while' loop (lines~\ref{SRl:while-first}--\ref{SRl:while-last}) in all calls to \pOutbid{} between two calls to \pRaisePrice{}
In a call to  \pOutbid{($i,j,\gamma$)}, in all but the final call to the `while' loop, we set $c^{(k)}_j=0$ for some agent $k$. 
By Lemma~\ref{lem:SRwhile-bound}, the total number of these events is at most $O(nm)$.
Hence, the number of repeats in the `while' loop between two calls to \pRaisePrice{} is  $O(nm/\varepsilon+nm)=O(nm/\varepsilon)$. Each repeat takes $O(1)$ time.

From the above, the total time of the \pOutbid{} calls is $O(nm/\epsilon)$ between two calls to \pRaisePrice{}. A call to  \pRaisePrice{} takes $O(nm)$ time.
Consequently, the
total time of such a sequence of steps is $O(nT_F /\epsilon+nm/\epsilon) = O(nT_F /\epsilon)$, using the assumption that $T_F = \Omega(m)$.

By definition $p_{\min}$ is the minimum price at the initialization and $p_{SR\max}$ is the maximum price of any good reached in the algorithm. Consequently,  \pRaisePrice{} can be called at most $O(m \log_{1+\epsilon}(p_{SR\max}/p_{\min}))=O(\frac m\epsilon \log (p_{SR\max}/p_{\min}))$ times. The claimed bound follows.
\end{proof}

\subsection{Conditions on the existence of SR-equilibria}\label{section:price-bound}
We now present a general bound on the value of $p_{SR\max}$.
Suppose that the demand system of each agent $i$ is provided in terms
of a monotone concave and differentiable utility function $u_i$ in the form
\eqref{prog:optimalBundle}. We now assume that each $u_i$ is
differentiable. The arguments here can be easily adopted for the
non-differentiable case by using subgradients.
We let 
\begin{equation}\label{eq:v-max}
\begin{aligned}
D:=\frac{\max_i b_i}{p_{\min}},\quad v_{i \max} := \max_{j} \partial_j u_i(0), & \quad v_{i \min} := \min_{j} \{ \partial_j u_i(D\cdot \1): \partial_j u_i(0)>0\},\\
 & \quad V_{\max} := \max_i \frac{v_{i\max}}{v_{i\min}},\\
 & \quad e_{\max} := \max_j {e_j}.
\end{aligned}
\end{equation}
Note that if $\partial_j u_i(0)=0$, then agent $i$ is not interested
in good $j$ at all.  In case $\partial_j u_i(0)>0$ we say that agent
$i$ {\em is interested} in good $j$. 
 Note that $D$ is an upper bound on the
amount of any single good that any agent could buy.


\paragraph{A necessary condition for the existence of SR-equilibria} 
The condition $\sum_i b_i \le
\sum_j e_j $ is necessary for the existence of an
SR-equilibrium, since $\sum_j e_j$ is the total amount of money that
can be spent on the goods. One can formulate an extension of this,
that amounts to Hall's condition in a certain graph. Let $(A\cup G,E)$
denote the bipartite graph where the two classes $A$ and $G$ represent
the agents and goods, respectively, we add an edge $(i,j)\in E$ if $\partial_j
u_i(0)>0$, that is, if agent $i$ is interested in good $j$. For a
subset $S\subseteq A$, we let $\Gamma(S)\subseteq G$ denote the set of
neighbors in this graph. Then, Hall's condition, that is, 
\begin{equation}\label{eq:Hall}
\sum_{i\in S} b_i \le \sum_{j\in \Gamma(S)} e_j, \quad \forall
S\subseteq A
\end{equation}
 is a necessary
condition on the existence of an SR-equilibrium. Note that
this condition is not sufficient: it holds for the example of
Cobb-Douglas utilities, 
where no SR-equilibrium exists, as explained before Definition~\ref{def:approx-SReq}.

\paragraph{Upper bounds on the prices} 
We now give a bound on $p_{SR\max}$ in terms of $V_{\max}$
and $e_{\max}$. We first consider the case when every agent is
interested in all goods. 
In this case, \eqref{eq:Hall} reduces to the
case when $\Gamma(S)$ contains all goods.
Note that the bounds are finite only if $v_{i\min}>0$, and $v_{i\max}$
is finite. For the Cobb-Douglas utilities, $v_{i\max} = \infty$.

\begin{restatable}{lemma}{pSRmaxBound}\label{lemma:pSRmaxBound}
Assume the demand systems of the agents are given in form
\eqref{prog:optimalBundle} for  monotone concave and differentiable utility functions $u_i$.
\begin{enumerate}[label=(\roman*)]
\item  Suppose that every agent is interested in every good, that is,
  $\partial_j u_i(0)>0$ for every agent $i$ and every good $j$. Assume that 
$\sum_i b_i \le \sum_j e_j $. Then, the prices throughout
the auction algorithm remain bounded by $(1+\epsilon)^2 e_{\max} V_{\max}$.
\item Assume condition \eqref{eq:Hall} holds with strict inequality
  for all $S\subseteq A$. Then, the prices throughout
the auction algorithm remain bounded by $(1+\epsilon)^{n} e_{\max} V_{\max}^{n-1}$.
\end{enumerate}
The same bounds are valid for any $\epsilon$-SR equilibrium.
\end{restatable}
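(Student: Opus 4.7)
The strategy is a contradiction via money conservation, combined with a Hall-type chain argument for part (ii). Let $j^{*}$ attain the maximum price at some execution point, and consider the moment immediately before the last time $p_{j^{*}}$ was increased; the same analysis applies to $\epsilon$-SR equilibria with minor modifications. At that moment $h_{j^{*}} = a_{j^{*}} > 0$, $l_{j^{*}} = 0$, so at least one agent $i_1$ has $c^{(i_1)}_{j^{*}} > 0$, which implies $x^{(i_1)}_{j^{*}} > 0$ at individual price $p^{(i_1)}_{j^{*}} = (1+\epsilon)\, p_{j^{*},\text{old}}$. The KKT conditions for \eqref{prog:optimalBundle} applied to $x^{(i_1)}$ yield a Lagrange multiplier $\lambda_{i_1}$ with $\lambda_{i_1} = \partial_{j^{*}} u_{i_1}(x^{(i_1)})/p^{(i_1)}_{j^{*}} \le v_{i_1\max}/((1+\epsilon)\, p_{j^{*},\text{old}})$. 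For any good $l$ in which $i_1$ is interested, complementary slackness combined with $\partial_l u_{i_1}(x^{(i_1)}) \ge v_{i_1\min}$ (using concavity and $x^{(i_1)} \le D\mathds{1}$) forces $p_l \ge p^{(i_1)}_l/(1+\epsilon) \ge p_{j^{*},\text{old}}/V_{\max}$.

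For part (i), since every agent is interested in every good, this bound extends to every $l\in[m]$. Assuming for contradiction $p_{j^{*},\text{new}} > (1+\epsilon)^2 t_{\max} V_{\max}$ yields $p_l > (1+\epsilon)\,t_{\max} \ge t_l$ for every $l$, so every good is in the spending-restricted regime with $p_k a_k = t_k$. Then each good's revenue equals $p_k l_k + (1+\epsilon) p_k h_k = t_k + \epsilon p_k h_k$, so the total revenue equals $\sum_k t_k + \epsilon \sum_k p_k h_k \ge \sum_k t_k + \epsilon\, t_{j^{*}} > \sum_k t_k \ge \sum_i b_i$, using $p_{j^{*},\text{old}} h_{j^{*}} = t_{j^{*}}$. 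This contradicts the elementary identity that total revenue equals total agent spending, which is at most $\sum_i b_i$.

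For part (ii), the KKT bound only propagates along edges of the interest graph $(A\cup G, E)$, so I would iteratively grow the agent set: $S_1=\{i_1\}$ and $S_{k+1} = S_k \cup \{i : \exists j\in \Gamma(S_k),\ x^{(i)}_j>0\}$. This stabilizes within $n$ steps at some $S$ with $|S|\le n$ such that every agent purchasing any good in $\Gamma(S)$ lies in $S$. Iterating the single-step KKT bound, each newly included agent multiplies the price-degradation factor by at most $(1+\epsilon) V_{\max}$, giving $p_l \gtrsim p_{j^{*},\text{old}}/((1+\epsilon)^{|S|-1} V_{\max}^{|S|-1})$ for all $l\in\Gamma(S)$. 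Assuming $p_{j^{*},\text{new}}$ exceeds the stated bound forces every $l\in \Gamma(S)$ into the spending-restricted regime. By construction of $S$, no agent outside $S$ buys from $\Gamma(S)$, so the revenue from $\Gamma(S)$ equals the spending of agents in $S$ and is at most $\sum_{i\in S} b_i$. The same revenue computation as in part (i) shows this revenue strictly exceeds $\sum_{l\in\Gamma(S)} t_l$, and combined with strict Hall's inequality $\sum_{i\in S}b_i < \sum_{l\in \Gamma(S)} t_l$ this yields the contradiction.

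The main obstacle is the careful bookkeeping of degradation factors across the chain in part (ii), so that the final exponents match the stated $(1+\epsilon)^n V_{\max}^{n-1}$: one must track whether each intermediate agent in the chain pays the market price $p_l$ or the individually boosted price $(1+\epsilon)p_l$, since each such boost contributes an additional $(1+\epsilon)$ factor, and when $j^{*}$ itself enters $G_{i_1}$ one of the $V_{\max}$ factors can be absorbed. A secondary subtlety is that the KKT step implicitly uses $v_{i\min} > 0$ and $v_{i\max} < \infty$, which is exactly what the finiteness of $V_{\max}$ in the hypothesis enforces (and which explicitly excludes, e.g., the Cobb--Douglas case).
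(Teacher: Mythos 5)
Your proposal is correct and takes essentially the same route as the paper: the KKT ratio bound $p_{j'}/p_{j''}\le(1+\epsilon)V_{\max}$ for goods linked through a common buyer, a money-conservation argument showing some good's price stays at most $(1+\epsilon)t_{\max}$ (localized via the reachability set $S$ and Hall's condition for part (ii)), and propagation of the ratio bound along the resulting chain; your iterative construction of $S$ is exactly the paper's directed-graph reachability, just started from the agent rather than the good. The exponent bookkeeping you flag as the remaining obstacle in part (ii) is treated equally informally in the paper (which simply asserts a path of length at most $2(n-1)$), so this is not a gap relative to the reference argument.
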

\begin{proof}
Let us first consider {\em (i)}.
Let $p$ denote the market prices at a certain point of the algorithm,
or at an $\epsilon$-SR equilibrium, 
and  let $p_{SR\min}$ be the minimal price among those. 
Observe that this might be different from $p_{\min}$, since $p_{\min}$ is the minimal price at initialization. Let
$\ell$ be a good with $p_\ell=p_{SR\min}$. 

We use the KKT conditions of convex
program~\eqref{prog:optimalBundle}. We let $\beta^{(i)}$ denote the
Lagrange multiplier of the budget constraint for agent $i$.
Then, $\partial_j u_i(x^{(i)}) \le \beta^{(i)} p^{(i)}_j$ for all goods $j$; and equality holds whenever $x^{(i)}_j > 0$.
Recall that each good $j$ is owned by some agent
during the algorithm as well as in an $\epsilon$-SR-equilibrium. 

Consider a good $j$, and let $i$ be an agent buying $j$, i.e.,
$c^{(i)}_j > 0$ and therefore $x^{(i)}_j > 0$.
By the above, 
${p^{(i)}_j}/{p^{(i)}_\ell} \le {\partial_j
  u_i(x^{(i)})}/{\partial_\ell u_i(x^{(i)})}$. 
The assumption that every
agent is interested in every good means that  $v_{i \min} =
\min_{j} \partial_j u_i(D\cdot \1)$.
Since $x^{(\ell)}\le D\cdot \1$, concavity implies $\partial_\ell
u_i(x^{(i)})\ge v_{i\min}$. 
We also get $\partial_j u_i(x^{(i)})\le v_{i \max}$. 
Consequently, ${p^{(i)}_j}/{p^{(i)}_l} \le {v_{i\max}}/{v_{i\min}} \le V_{\max}$.
Finally, since $p \le p^{(i)} \le (1+\epsilon)p$, we have ${p_j} \le (1+\epsilon) p_{SR\min} V_{\max}$ for any good $j$. 

The proof is completed by showing that $p_{SR\min}\le
(1+\epsilon)e_{\max}$. 
To prove this, we first show that once $p\ge e$, the algorithm
terminates. Indeed, if $p\ge e$, then the agents spend $\sum_j e_j$ in
total, since the amount $a_j = e_j \cdot \min\{1, 1/p_j\}$ is always fully sold. 
The condition $\sum_i b_i \le \sum_j e_j$ shows that agents
cannot have any surplus at this point. 
Thus, once the lowest price rises above
$e_{\max}$, the algorithm terminates. Since the prices increase in steps of
$(1+\epsilon)$, we get that $p_{SR\min} \le (1+\epsilon)e_{\max}$. 

\medskip

Let us now consider part {\em (ii)}. We take the bipartite graph
$(A \cup G,E)$, and on the same set of nodes we define a directed graph
as follows. We orient all edges in $E$ from $A$ to $G$, and 
also add the arc $(j,i)$ whenever $x_j^{(i)}>0$. Fix any good $j$, and
let  $S$ be the set of agents in
$A$ reachable from $j$ in this directed graph. 
Note that the set of goods reachable from $j$ will be precisely
$\Gamma(S)$. Let $\ell\in \Gamma(S)$ be the good with the
lowest price $p_\ell$. As above, we can show that $p_\ell\le
(1+\epsilon)e_{\max}$, since $p\ge e$ is not possible. Indeed, once
$p\ge e$, then all the available amounts of goods in $\Gamma(S)$ are
fully sold, and their total value is $\sum_{j\in \Gamma(S)} e_j >
\sum_{i\in S} b_i$ by the assumption. By the definition of $S$, no
agent outside $S$ pays for goods in $\Gamma(S)$, leading to a contradiction.

The directed graph contains a path of length $\le 2(n-1)$ from $p_j$
to $p_\ell$. As in the proof of part {\em (i)}, one can argue that for
any two consecutive  goods $j'$ and $j''$ 
on this path, $p_{j'}/p_{j''}\le (1+\epsilon) V_{\max}$. This implies the bound.
\end{proof}

\paragraph{Bounding the prices for Gale demand systems}
Consider now the demand system $G^{u_i}(p,b_i)$ defined from a monotone
concave utility function by \eqref{galeObjective}.
The concavity implies that they will never spend more than $b_i$ in the optimal bundle, 
and moreover agent $i$ might spend strictly less than $b_i$.
Thus, even if $\sum_i b_i\le \sum_j e_j$ does not hold an SR-equilibrium could exist.

Still, we can obtain the same bounds as in
Lemma~\ref{lemma:pSRmaxBound} on the prices. The proof is identical,
noting that the KKT conditions for \eqref{galeObjective} also imply ${p^{(k)}_j}/{p^{(k)}_\ell} \le {\partial_j
  u_k(x^{(k)})}/{\partial_\ell u_k(x^{(k)})}$ if $x^{(k)}_j>0$, and
the fact that agent $i$ spends at most $b_i$ in their optimal bundle.

\begin{restatable}{lemma}{pSRmaxBound-Gale}\label{lemma:pSRmaxBound-Gale}
Assume {every agent has a  Gale demand system
\eqref{galeObjective}} for  monotone concave and differentiable utility functions $u_i$.
\begin{enumerate}[label=(\roman*)]
\item  Suppose that every agent is interested in every good, that is,
  $\partial_j u_i(0)>0$ for every agent $i$ and every good $j$. Assume that 
$\sum_i b_i \le \sum_j 1 $. Then, the prices throughout
the auction algorithm remain bounded by $(1+\epsilon)^2 e_{\max}
V_{\max}$.
\item Assume condition \eqref{eq:Hall} holds with strict inequality
  for all $S\subseteq A$. Then, the prices throughout
the auction algorithm remain bounded by $(1+\epsilon)^{n} e_{\max} V_{\max}^{n-1}$.
\end{enumerate}
The same bounds are valid for any $\epsilon$-SR equilibrium.
\end{restatable}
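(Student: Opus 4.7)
My plan is to transfer the proof of Lemma~\ref{lemma:pSRmaxBound} almost verbatim, modifying only the two ingredients that relied on specific features of the standard demand~\eqref{prog:optimalBundle}. The first is the KKT ratio inequality. For the Gale optimum, the first-order conditions on $\max_{x \ge 0}\, b \log u(x) - p^\top x$ read $\partial_j u(x) \le (u(x)/b)\, p_j$ with equality when $x_j > 0$, so dividing at a support good $j$ (equality) by any $\ell$ (inequality) yields $p^{(k)}_j/p^{(k)}_\ell \le \partial_j u_k(x^{(k)})/\partial_\ell u_k(x^{(k)})$ whenever $x^{(k)}_j > 0$ --- structurally the same inequality used in Lemma~\ref{lemma:pSRmaxBound}. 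The ``ratio'' portion of the argument therefore transfers unchanged: each good $j$ is owned by some agent $k$ and hence $x^{(k)}_j > 0$ by domination; monotone concavity together with $x^{(k)} \le D \cdot \1$ gives $\partial_j u_k(x^{(k)}) \le v_{k\max}$ and $\partial_\ell u_k(x^{(k)}) \ge v_{k\min}$; combining with $p \le p^{(k)} \le (1+\epsilon)p$ yields $p_j \le (1+\epsilon) p_{SR\min} V_{\max}$ in case~(i), and the path-of-length-$2(n-1)$ argument in the bipartite interest/ownership digraph gives $p_j \le (1+\epsilon)^{n-1} p_{SR\min} V_{\max}^{n-1}$ in case~(ii).

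The second ingredient I replace is the non-satiation identity $p^{(i)\top} x^{(i)} = b_i$. In the Gale setting only the weaker bound $p^{(i)\top} x^{(i)} \le b_i$ holds, and I will derive it from concavity. The Gale KKT conditions give $p^{(i)\top} x^{(i)} = (b_i / u_i(x^{(i)})) \cdot x^{(i)\top} \nabla u_i(x^{(i)})$; concavity of $u_i$ at $x^{(i)}$ and $0$ gives $u_i(0) \le u_i(x^{(i)}) - x^{(i)\top} \nabla u_i(x^{(i)})$, i.e., $x^{(i)\top} \nabla u_i(x^{(i)}) \le u_i(x^{(i)}) - u_i(0) \le u_i(x^{(i)})$ (using $u_i(0) \ge 0$), and therefore $p^{(i)\top} x^{(i)} \le b_i$. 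Combining with $c^{(i)} \le x^{(i)}$ yields $p^{(i)\top} c^{(i)} \le b_i$ as well.

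With these two substitutions the termination step of Lemma~\ref{lemma:pSRmaxBound} runs essentially line-by-line. Suppose $p_j \ge t_j$ for every good $j$ in case~(i), or for every $j \in \Gamma(S)$ in case~(ii), where $S$ is the set of agents reachable from the offending good in the interest/ownership digraph on $A \cup G$. Monotone concavity forces $x^{(i)}_j = 0$ whenever $\partial_j u_i(0) = 0$, so only agents in $S$ pay for goods in $\Gamma(S)$. The invariant $\sum_i c^{(i)}_j = t_j/p_j$ combined with per-unit prices at least $p_j$ produces $\sum_{j \in \Gamma(S)} t_j$ as a lower bound on the total money paid for those goods, while the spending bound of the previous paragraph gives $\sum_{i \in S} b_i$ as a matching upper bound. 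In case~(i) these must coincide, collapsing $\sum_i s^r_i \le 0$ and forcing termination; in case~(ii) this directly contradicts strict Hall. Either way $p_{SR\min} \le (1+\epsilon) t_{\max}$, and multiplying by the ratio bound gives the claimed $(1+\epsilon)^2 t_{\max} V_{\max}$ and $(1+\epsilon)^n t_{\max} V_{\max}^{n-1}$ estimates (and the same bounds apply at any $\epsilon$-SR equilibrium, since the KKT/concavity arguments are equilibrium-level statements). The only genuine technical obstacle --- really the only step that is not a direct copy of the earlier proof --- is the concavity derivation of $p^{(i)\top} x^{(i)} \le b_i$ for Gale optima.
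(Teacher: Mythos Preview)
Your proposal is correct and follows essentially the same approach as the paper. The paper's own ``proof'' of this lemma is a two-line remark that the argument of Lemma~\ref{lemma:pSRmaxBound} carries over once one observes (a) the Gale KKT conditions still yield $p^{(k)}_j/p^{(k)}_\ell \le \partial_j u_k(x^{(k)})/\partial_\ell u_k(x^{(k)})$ whenever $x^{(k)}_j>0$, and (b) concavity ensures each agent spends at most $b_i$ in their Gale-optimal bundle; you identify exactly these two substitutions and supply the concavity computation for (b), which the paper only asserts.
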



\section{Approximating Nash social welfare}\label{section:BASPLC}
As an application of the spending restricted auction algorithm in Section~\ref{section:spendingRestricted}, we give a polynomial-time $(2{e^{1/2e}}+\epsilon) \approx 2.404$-approximation algorithm 
for the NSW problem under capped separable piecewise linear concave (SPLC) utilities---the common generalization of the models in
\cite{anari2018nash} and \cite{garg2018approximating}.
We consider an instance of the NSW problem with $n$ agents and $m$ goods, in which we have $D_{j}$ units (copies) of good $j$.
Each agent $i$ has a capped SPLC utility function defined as follows (see Figure~\ref{figure}). 
For every good $j$, agent $i$ has $k_{ij}$
{\em segments} with \emph{strictly} decreasing utility rates
$u_{ij1}>u_{ij2}>\ldots>u_{ijk_{ij}}\ge 0$. 
Segment $t\in [k_{ij}]$ has length $d_{ijt}$ and agent $i$ values at $u_{ijt}$ each of the units in the segment.
Each variable $d_{ijt}$ takes positive integral values.
We assume that $\sum_{t\in[k_{ij}]} d_{ijt} = D_j$.
Furthermore,  agent $i$'s utility is capped at $U_i$, i.e., their utility is the minimum of $U_i$ and the sum of the
utilities accumulated from the goods.

\begin{figure}
  \centering
  \includegraphics[width=0.5\textwidth]{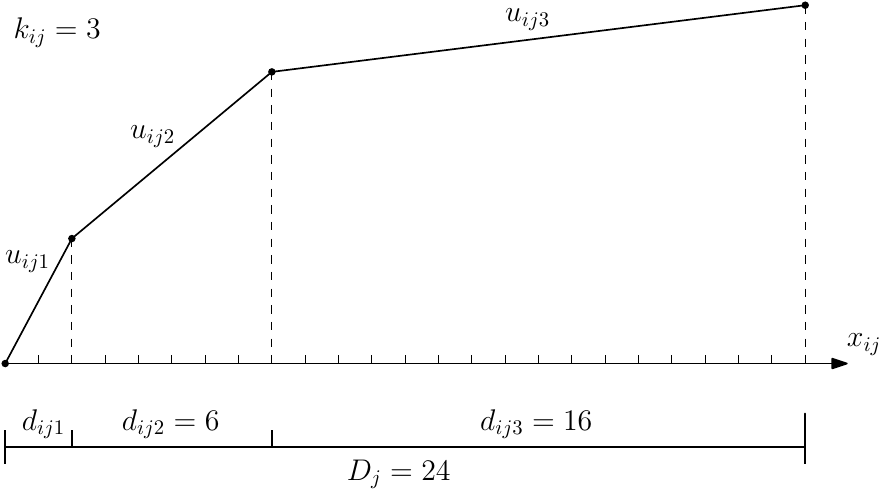}
  \caption{Agent $i$'s utility for good $j$.}
  \label{figure}
\end{figure}

Chaudhury et al.~\cite{ChaudhuryCGGHM18} gave a $e^{1/(1+\epsilon)e} \approx 1.44$-approximation algorithm for the problem,
while Anari et al.~\cite{anari2018nash} studied the problem with SPLC utilities ($U_i = \infty$) and gave a $2$-approximation algorithm. 
The running times of these algorithms depend linearly on $M$, where $M=\sum_{j\in[m]} D_j$. 
In other words, \cite{anari2018nash}  and~\cite{ChaudhuryCGGHM18} use segments of length 1.
Therefore, when multiple copies of a good have the same utility rate, their algorithms run in pseudopolynomial time. 
Using the auction algorithm, we give an approximation algorithm running in polynomial time:  the utility function is specified by the utility rate and the length of a segment rather than $d_{ijt}$ segments of length one with the same utility.
The approach consists of three parts:
\begin{itemize}
   \item Finding an SR-equilibrium for the instance of Fisher market arising as a relaxation of the NSW problem.
    As already mentioned in the introduction, the natural relaxation of the NSW problem uses the SR-equilibrium with respect to the Gale demand system, where each agent has budget $1$. We use the auction algorithm to find such an approximate SR-equilibrium $(x, p)$. 
    It is worth pointing out that this is the main reason why we obtain a better running time guarantee than the existing approaches.
    \item Upper bounding the optimal value of the NSW in terms of prices $p$.
    \item Rounding the allocation $x$.
 \end{itemize} 
The last two rely on the ideas originally given by Cole and Gkatzelis~\cite{cole2015approximating} and extended in~\cite{anari2018nash,garg2018approximating}.
More precisely, for the upper bound we follow~\cite{anari2018nash} and we explain how the rounding reduces to the case of capped linear utilities~\cite{garg2018approximating}.
For the sake of simplicity, we present an upper bound and the rounding for an exact SR-equilibrium similarly to the one in~\cite{garg2018approximating}.
The modification to an approximate SR-equilibrium is straightforward.
For the upper bound and rounding we make the assumption that $u_{ijt} \le U_i$, 
as we could redefine the utilities to $u_{ijt} \leftarrow \min\{u_{ijt}, U_i\}$ without changing 
the objective value of the feasible allocations for the NSW instance.

\subsection{SR equilibrium under Gale demand systems of a  capped SPLC}
\label{section:basplcEquilibrium}
We now consider the Gale demand system for {\em  capped SPLC}.
We first show that the corresponding demand system is WGS---thus we can use the auction algorithm; 
and then we give an implementation of the \pFindNewPrices{} subroutine for this demand
system. Note that the convex programming approach does not immediately
apply, since the utility function is not differentiable, and the
optimal bundle is not unique. Instead, we give a simple price
increment procedure, an extension of that in
Lemma~\ref{lemma:linearPrices} for linear utilities.
As both the WGS property and \pFindNewPrices{} refer to a fixed agent, 
we drop the term $i$ denoting the agent in the subscripts.

The Gale demand system $G^u(p,b)$ is defined as the set of optimal
solutions to the following formulation.\footnote{It can be easily verified, 
using the KKT conditions given below, that {\em admissible spendings} in 
\cite{anari2018nash} correspond to the case when $U=\infty$, and {\em
  modest and thrifty} demand bundles in \cite{garg2018approximating}
to the case when $k_j=1$ for all $j$ with $d_{j1}=\infty$.}
\begin{equation}\label{prog:SPLCcaps}
\begin{aligned}
\max~~ b\log \left(\sum_j \sum_t x_{jt} u_{jt} \right) &- \sum_j p_j \sum_{t = 1}^{k_j} x_{jt}\\
\mbox{s.t.} \qquad x_{jt} &\le d_{jt} \qquad \forall j \in [m],t \in [k_j]\\
\sum_{j=1}^{m} \sum_{t=1}^{k_j} x_{jt} u_{jt} &\le U \\
x &\ge 0 \,. 
\end{aligned}
\end{equation}

Let us now present the KKT conditions characterizing the optimal
solution $x^*$.
Let $r_{jt}$ be the Lagrange multipliers of the constraint $x_{jt} \le
d_{jt}$ and $\gamma$ the Lagrange multiplier of the utility constraint.
Recall that $u(x^*) = \sum_j \sum_t u_{jt} x^*_{jt} $. 
We have the following:
\begin{enumerate}[label=(\roman*)]
\item\label{splcKKTi} $\frac{b u_{jt}}{u(x^*)} \le r_{jt} +  p_j  + u_{jt}\gamma$,
\item\label{splcKKTii} $\frac{b u_{jt}}{u(x^*)} = r_{jt} + p_j +
  u_{jt}\gamma$ whenever $x^*_{jt} > 0$,
\item\label{splcKKTiii} $x_{jt}^* =d_{jt}$ whenever $r_{jt} > 0$, and
\item\label{splcKKTiv} $\sum_j \sum_t x^*_{jt} u_{jt} = U$ whenever $\gamma>0$.
\end{enumerate}

\begin{lemma}[WGS property]
The Gale demand system for  capped SPLC utilities satisfies the WGS property.
\end{lemma}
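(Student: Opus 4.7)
The plan is to verify the WGS definition directly: given $x \in G^u(p,b)$ and $(p',b') \geq (p,b)$, construct $y \in G^u(p', b')$ with $\sum_t y_{jt} \geq \sum_t x_{jt}$ for every good $j$ such that $p'_j = p_j$. First I would reduce the general case to two elementary moves by induction: (a) a single price coordinate $p_j$ strictly increases and all other parameters stay fixed; and (b) the budget $b$ increases while prices stay fixed. The composition of WGS along a finite chain of such moves recovers the general conclusion.

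For each elementary move I would exploit the KKT system \eqref{splcKKTi}--\eqref{splcKKTiv}. Eliminating $r_{jt}$ yields a convenient bang-per-buck reformulation: there exists a scalar $\mu = b/u(x^*) - \gamma \geq 0$ such that, at any optimum $x^*$, each segment $(j,t)$ is saturated when $u_{jt} > \mu p_j$, empty when $u_{jt} < \mu p_j$, and can be partial only on the equality threshold $u_{jt} = \mu p_j$. This structure mirrors the MBB characterization familiar from linear Fisher markets and is the same tool used in~\cite{anari2018nash} (no cap) and~\cite{garg2018approximating} (trivial segment structure). Case (a): a revealed-preference computation — adding the two Gale optimality inequalities at $(p,b)$ and $(p',b)$ — gives $(p'_j - p_j)(x_j - y_j) \geq 0$, so $\sum_t x^*_{jt} \geq \sum_t y^*_{jt}$ for the perturbed coordinate. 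Combining this with the bang-per-buck characterization forces $\mu$ to weakly increase when $p_j$ rises, so the threshold $1/\mu$ weakly drops and every segment on a good $k \neq j$ that was above or at the old threshold remains so at the new one; I would therefore build $y$ from $x^*$ by only shrinking good $j$'s coordinates and then filling segments on the other goods up to the new saturation set, producing an optimal $y$ that dominates $x^*$ coordinatewise outside $\{j\}$. Case (b) is symmetric: raising $b$ drives $b/u(x^*) - \gamma$ upward (again a rearrangement of the two Gale optimality inequalities, this time in $b$), lowers the threshold, and the same construction yields a componentwise-larger optimum.

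The main obstacle is the interplay with the cap constraint $\sum_{jt} u_{jt} x_{jt} \leq U$, because when the cap is active the multiplier $\gamma$ introduces an extra degree of freedom that decouples $\mu$ from $u(x^*)$. I expect to handle this by a continuous-path argument: parametrize $p_j(s) = (1-s)p_j + s p'_j$ for $s\in[0,1]$ (resp.\ $b(s)$ in Case (b)) and maintain an optimal $x(s)$ along the path. The path has only finitely many event times, namely (i) the values of $s$ at which a segment on good $j$ crosses the threshold, and (ii) the values at which the cap toggles between active and slack; between events $\mu(s)$ evolves smoothly and monotonically, and at each event I would verify, by updating the active set of saturated segments and using continuity of $x(s)$ on goods $k \neq j$, that $\sum_t x_{kt}(s)$ does not decrease. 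Passing to $s=1$ gives the required $y \in G^u(p', b')$. Checking that both the active-cap and slack-cap regimes are compatible at the toggling moment — i.e.\ that the one-sided monotonicity of $\mu$ extends across the transition — is the delicate step, but the bang-per-buck reformulation makes it essentially an accounting exercise on the KKT conditions.
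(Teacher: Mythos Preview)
Your strategy—reduce to a single coordinate increase, encode the optimum via the bang-per-buck threshold $\mu = b/u(x^*) - \gamma$, and argue that $\mu$ moves monotonically—is a legitimate route but different from the paper's, and the key step does not follow from what you wrote. The revealed-preference inequality $(p'_\ell - p_\ell)(x^*_\ell - y^*_\ell) \ge 0$ says only that the allocation of the perturbed good drops; it does not by itself force $\mu$ to rise. When the cap is active, $\mu' = b/U - \gamma'$ with $\gamma'$ a free Lagrange multiplier, and revealed preference does not pin $\gamma'$ down relative to $\gamma$. Your continuous-path proposal is precisely where the substance lies, and calling the cap-toggle verification ``essentially an accounting exercise'' understates it: you must exhibit a monotone selection of $\mu(s)$ across each transition, and that argument is not supplied.

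The paper avoids threshold-tracking altogether by a two-case split on the new optimal utility $\bar u$. If $\bar u < u(x^*)$, then $\bar u < U$, so the cap multiplier $\gamma'$ vanishes and the KKT conditions directly give $x'_{kt} = d_{kt}$ for every segment $(k,t)$ with $k \neq \ell$ and $x^*_{kt} > 0$ (since $u_{kt}/p_k \ge u(x^*)/b > \bar u/b$). If $\bar u = u(x^*)$, both optima achieve utility $\bar u$ at minimum cost under their respective prices, so the question reduces to WGS for the knapsack LP $\min \sum_j p_j \sum_t x_{jt}$ subject to $\sum_{j,t} u_{jt} x_{jt} = \bar u$ and $0 \le x_{jt} \le d_{jt}$; the greedy structure (fill segments in decreasing $u_{jt}/p_j$) makes this immediate, because raising $p_\ell$ only pushes good-$\ell$ segments later in the order while leaving the relative order of the rest intact. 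No continuity, no $\mu$-selection, no toggle analysis. If you pursue your route, proving $\mu$ monotone independently will force you through essentially this same split—uncapped via a threshold fixed-point, capped via the knapsack greedy—so the paper's decomposition is the more direct path.
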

\begin{proof}
Let us consider prices $p'$ defined by $p'_j = p_j$ for $j \in
[m]\setminus \{\ell\}$ and $p'_{\ell} > p_{\ell}$. We show
that there is an optimal bundle $x'$ at prices $p'$ such that \b{
$x'_{jt}\ge x_{jt}^*$} for all $j\neq \ell$ and all $t\in [k_j]$.
For prices $p'$, let $\overline u$ be the optimal utility in~\eqref{prog:SPLCcaps} and 
let $\gamma'$ be the Lagrange multiplier for the constraint on the maximum utility achieved.
We consider two cases.

\textit{Case 1: $\bar u < u(x^*_i)$.}
By~\ref{splcKKTii}, $x^*_{jt} > 0$ implies $\frac{u_{jt}}{p_j} \ge \frac{u(x^*)}{b}$.
Thus, we have $\frac{u_{jt}}{p'_j} = \frac{u_{jt}}{p_j} \ge \frac{u(x^*_i)}{b} >
\frac{\overline u}{b}$ for all $j, t$ with $x^*_{jt} > 0$ and $j\neq \ell$.

Moreover, by~\ref{splcKKTii} and~\ref{splcKKTiii}, 
if $\frac{u_{jt}}{p'_j}> \frac{\overline u}{b}\cdot
\left(1+\gamma'\cdot \frac{u_{jt}}{p'_j}\right)$ then $x'_{jt} =d_{jt}$.
By~\ref{splcKKTiv}, $\bar u < u(x^*_i) \le U$ implies that $\gamma' = 0$, 
and hence $x'_{jt}=d_{jt}$ for all $j, t$ with $x^*_{jt} > 0$ and $j\neq \ell$.
In other words, for every good $j$, $j\neq \ell$, every segment of the good that the agent was buying at prices $p$ 
is fully bought at prices $p'$. 
The lemma follows.

\textit{Case 2: $\bar u = u(x^*_i)$.}
It suffices to prove that the optimal solutions of the following knapsack linear program satisfy the WGS property.
\begin{equation}
\begin{aligned}
\min~~ \sum_j p_j \sum_{t = 1}^{k_j} & x_{jt}\\
\mbox{s.t.} \qquad x_{jt} &\le d_{jt} \qquad  \forall j \in [m],t \in [k_j]\\
\sum_{j=1}^m \sum_{t=1}^{k_j} x_{jt} u_{jt} &= \bar u\\
x &\ge 0 \,. 
\end{aligned}
\end{equation}
Suppose that the optimal solution $x$ is unique, then it can be build in a greedy fashion. 
Order the segments of all goods in a decreasing order of the fractions
$\frac{u_{jt}}{p_j}$. 
Then $x$ is obtained by purchasing the segments 
(i.e. allocating $x_{jt}=d_{jt}$) in the above order until the utility becomes $\bar u$; 
having in mind that the last purchased segment might be purchased only partially.

To prove the WGS property we consider increasing price $p_\ell$ of a good $\ell$.
The price increase will cause the segments corresponding to good $\ell$
 to move further back in the ordering while the relative order of all
rest of the segments remains unchanged. Hence,
by the greedy argument above, one can find
an optimal solution $x'$ with $x'_{jt}\ge x_{jt}$ for all $j\neq \ell$
and $t\in [k_j]$.

In the case there are multiple optimal solutions, a similar argument holds 
since two optimal solution 
differ only on a set of goods with the same ratio $\frac{u_{jt}}{p_j}$.
\end{proof}

As in Section~\ref{section:gale}, we show that the following slightly
more general version of  \pFindNewPrices{} can be implemented.
Let  $p,q,c \in \R^m_+$ and $x \in G^u(p, b)$ such that $p \le q$ and $c
\le x$. Find $\tilde p$ and $y$ such that
\begin{enumerate}[label=(\Alph*')]
\item\label{firsts} $y \ge c$ where $y \in G^u(\tilde p, b)$, and
\item\label{seconds} $p\le \tilde p \le q$ and $\tilde p _j = q_j$ whenever $y_j > c_j$.
\end{enumerate} 

\begin{lemma}[\pFindNewPrices{}]
\label{lemma:findNewPricesBASPLC}
The procedure \pFindNewPrices{} can be implemented in time $O(K)$
for Gale demand systems with capped SPLC utilities, where $K = \sum_{j 
\in [m]} k_j$ is the number of segments with \emph{different} marginal utility. 
\end{lemma}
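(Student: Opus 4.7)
The plan is to extend the greedy price-raising procedure of Lemma~\ref{lemma:linearPrices} (for linear utilities) to the budget-additive SPLC Gale setting. I maintain the invariant $y \in G^u(\tilde p, b)$ and $y \ge c$ throughout, starting from $(\tilde p, y) = (p, x)$. Define the raising set $S := \{j : y_j > c_j \text{ and } \tilde p_j < q_j\}$. If $S = \emptyset$ then the current $(\tilde p, y)$ already satisfies both (A') and (B'). Otherwise, I raise the prices of goods in $S$ simultaneously at the same multiplicative rate $\alpha > 1$, follow the parametric evolution of $y$ until some event triggers a structural change, and iterate.

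The evolution is controlled by the KKT conditions \eqref{splcKKTi}--\eqref{splcKKTiv}. Setting $\rho := b/u(y)$ and writing $\gamma \ge 0$ for the utility-cap multiplier, a segment $jt$ is active (that is, $y_{jt} > 0$) if and only if $u_{jt}(\rho - \gamma) \ge \tilde p_j$, with equality at the unique marginal segment. As $\tilde p_S$ scales by $\alpha$, the ratios $u_{jt}/\tilde p_j$ for $j \in S$ scale by $1/\alpha$, while those for $j \notin S$ stay fixed; $\rho$ and $\gamma$ adjust so that the total utility on active segments equals either $b/\rho$ (in the regime $\gamma = 0$) or $U$ (in the regime $\gamma > 0$). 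Between events, $y$ varies piecewise-linearly in $\alpha$.

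Four event types can trigger: (a) $\tilde p_j$ reaches $q_j$ for some $j \in S$, removing $j$ from $S$; (b) $y_j$ for some $j \in S$ drops to $c_j$, again removing $j$ from $S$; (c) a currently inactive segment's ratio catches up to the threshold, or an active segment's falls below, updating the active segment set; (d) $u(y)$ reaches the cap $U$, switching from $\gamma = 0$ to $\gamma > 0$. Using an appropriate data structure over the distinct segment ratios together with a pointer to the marginal segment, each event is processed in amortized $O(1)$ time. Since prices in $S$ only increase and the threshold moves monotonically within each regime, events of type (c) are bounded by $O(K)$ (each segment is activated or deactivated at most a constant number of times), type (a) and (b) each happen at most $m$ times, and (d) at most $O(1)$ times, giving total time $O(K)$.

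The main obstacle will be the careful handling of the utility-cap regime switch: before the cap binds, $\rho = b/u(y)$ is the relevant threshold and $\gamma = 0$; once the cap binds, $u(y) = U$ pins $\rho = b/U$ and the effective threshold becomes $\rho - \gamma$, with $\gamma$ evolving nontrivially as $\alpha$ increases. The transition between these two regimes, and the parametric update of $(\rho, \gamma)$ inside the capped regime, requires separate case analysis. A secondary subtlety is consistent tiebreaking when several segments share a ratio: ties must be broken so that $y \ge c$ is preserved, in particular by prioritizing segments of goods $j$ with $c_j > 0$. Upon termination with $S = \emptyset$, the final $(\tilde p, y)$ satisfies (A') by the maintained invariant and (B') by construction of $S$.
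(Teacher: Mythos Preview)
Your approach is genuinely different from the paper's, and as written it contains a gap.

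The paper does \emph{not} maintain $y\in G^u(\tilde p,b)$ throughout. Instead it initializes $y=c$ (not $y=x$), sets $\beta=(b/u(x)-\gamma)^{-1}$ from the KKT multipliers of $x$, and only \emph{increases} $y$ while decreasing $\beta$ and increasing prices. During the run, $y$ is sub-optimal; the conditions $u(y)\le\min\{U,b\beta\}$ together with the complementarity conditions on segments are what guarantee optimality at termination (when $u(y)=\min\{U,b\beta\}$). Because $y$ only goes up, every segment is saturated at most once, and the $O(K)$ bound is immediate.

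Your parametric scheme tracks the optimal bundle from $y=x$, letting $y_j$ decrease for $j\in S$. The gap is at event (b). When $y_j$ hits $c_j$ and you remove $j$ from $S$, the fractional segment $(j,t_j)$ has ratio exactly $\beta$ at that instant. You then freeze $\tilde p_j$ and keep raising the remaining $S$-prices, so $\beta$ continues to fall while $u_{jt_j}/\tilde p_j$ is now fixed. Hence the ratio immediately exceeds $\beta$, the segment saturates, $y_j>c_j$, and by your own definition $j$ is back in $S$. In a discrete event model this is an unbounded oscillation; in a continuous model it means $j$ sits on the boundary $y_j=c_j$ and its price must increase at a rate different from the common $\alpha$ --- a third regime you have not described. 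You also do not list ``good enters $S$'' as an event, even though WGS forces $y_k$ to rise above $c_k$ for fixed-price goods $k\notin S$. Without a clean treatment of these boundary goods, the claim that each segment is activated or deactivated $O(1)$ times is unsupported: a segment's ratio and the threshold are both nonincreasing but can cross repeatedly as a good alternates between $S$ and its complement.

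If you want to salvage the parametric route, you would need to keep boundary goods in a separate set with $y_j$ pinned at $c_j$ and update their prices at the rate that maintains this equality; then argue monotonicity of the effective threshold relative to each segment's ratio. The paper's build-up-from-$c$ approach avoids this entirely and yields the $O(K)$ bound with essentially no bookkeeping.
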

The proof is via an algorithm that is an extension of the one in the
proof of Lemma~\ref{lemma:linearPrices} for linear utilities.
\begin{proof}
We present an algorithm for finding such prices $\tilde p$ and bundle $y$. 
The algorithm initializes $\tilde p = p$ and $y = c$. The prices as
well as the allocations are non-decreasing throughout the
algorithm. Note that $u(y)<U$ at the initialization; otherwise, $c=x$ would
follow and we can simply output $y=x$ and $\tilde p=p$.
 We maintain $p\le \tilde p \le q$ throughout.
For each $j\in [m]$, let $t_j\in[k_j]$  denote
the first segment of a good $j$ that is not completely sold in $y$,
i.e., the minimal $t_j$ such that $y_{jt_j} < d_{jt_j}$. We call this
the {\em active segment} for $j$.

Consider the optimal bundle $x$ such that
$c\le x$, and let $\gamma$ be the Lagrange multiplier for the utility cap
constraint for $x$. We initialize $\beta= \left(b/u(x)
  -\gamma\right)^{-1}$. Then, from \ref{splcKKTi}-\ref{splcKKTiii}
we see that if $x_{jt}=0$ then $u_{jt}/p_j\le \beta$, if $0<x_{jt}<d_{jt}$ then  
 $u_{jt}/p_j=\beta$, and if 
$x_{jt}= d_{jt}$ then  $u_{jt}/p_j\ge \beta$. 

\paragraph{Stage I: enforcing the complementary conditions}
The algorithm proceeds in two stages. In the first stage, 
we consider
the goods for which $u_{jt_j}/\tilde p_j>\beta$ yet
$y_{jt_j}<d_{jt}$. (Recall that we initialized $y=c$ and $\tilde p=p$.)
For each such good, we increase $\tilde p_j$ until either
$u_{jt_j}/\tilde p_j=\beta$, or $\tilde p_j=q_j$. In the latter case,
we buy the entire active segment of $j$, that is, we increase to
$y_{jt_j}=d_{jt_j}$. Thus, $t_j$ increases by 1. If we still have
$u_{jt_j}/q_j >\beta$, we again buy the entire active segment, and continue
until $u_{jt_j}/q_j\le \beta$ for the current active segment. This
finishes the description of the first stage.

\medskip

From the optimality conditions on $x$, it is easy to see that $y\le x$
at the end of the first stage. We claim that the following conditions
are satisfied at this point:
\begin{eqnarray}\label{eq:ba-compl}
y_{jt}=0\Rightarrow u_{jt}/\tilde p_j\le \beta,& &\quad
0<y_{jt}<d_{jt}\Rightarrow  u_{jt}/\tilde p_j=\beta,\quad y_{jt}=
d_{jt}\Rightarrow u_{jt}/\tilde p_j\ge \beta. \\
u(y)&\le &\min\{U,b\beta\} \label{eq:u-bound}\\
y_{jt}&>&c_{jt}\quad \Rightarrow\quad  \tilde p_j=q_j\label{eq:caps} 
\end{eqnarray}
The conditions \eqref{eq:ba-compl} and \eqref{eq:caps} are immediate from the
algorithm. The bound \eqref{eq:u-bound} follows since $y\le x$;
$u(y)\le u(x)\le U$ by the feasibility of $x$, and $u(x)\le b\beta$ by
the definition of $\beta$.

\paragraph{Stage II: price increases}
In the second stage we continue increasing $y$ and $\tilde p$, as well
as decreasing $\beta$ so that
\eqref{eq:ba-compl}, \eqref{eq:u-bound}, and \eqref{eq:caps} are maintained. The
algorithm terminates once \eqref{eq:u-bound} holds at equality. In
this case, one can verify from the KKT conditions that $y\in
G^u(\tilde p,b)$. Together with \eqref{eq:caps}, we see that the
output satisfies (A') and (B'). 
\medskip

The algorithm performs the following iterations. We let $A$ denote the
set of goods for which $u_{jt_j}/\tilde p_j=\beta$.
If there is a good $j\in A$ with $\tilde p_j=q_j$, then we start
increasing $y_{jt_j}$ until either 
	\begin{enumerate}
		\item $y_{jt_j} =d_{jt_j}$. Note that $t_j$ increases
                  by one in this case, and $j$ leaves $A$.
		\item The inequality \eqref{eq:u-bound} becomes
                  binding. In this case, the algorithm terminates.
	\end{enumerate}

We now turn to the case when
 $\tilde
p_j<q_j$ for all $j\in A$. During the iteration we multiplicatively increase the price of every good in
$A$ by the same factor $\alpha>0$, as well as decrease $\beta$ by the
factor $\alpha$. We choose the smallest value of $\alpha$ for which one of
the following events happens:

\begin{enumerate}
\item For some $j\in A$ we reach $\tilde p_j = q_j$. We change the
  allocations as described above.
\item The inequality \eqref{eq:u-bound} becomes
                  binding (due to the decrease in $\beta$). In this case, the algorithm terminates.
\item For some good $\ell\notin A$, $\frac{u_{\ell t_\ell}}{p_\ell} =
  \beta$. In this case, we add $\ell$ to $A$, and iterate with the
  larger set.
\end{enumerate}

It is easy to see that all three properties 
\eqref{eq:ba-compl}, \eqref{eq:u-bound}, and \eqref{eq:caps} are
maintained throughout the algorithm. We claim that the number of price
change steps is at most $\sum_j k_j$. Indeed, a price increase step
always ends when a good $j$ with $\tilde p_j=q_j$ enters $A$, either
in case 1 or case 3. Once this happens, we increase $y_{jt_j}$; if the
algorithm does not terminate, then we saturate the segment to
$y_{jt_j}=d_j$. This shows that the number of price augmentation steps
is bounded by the total number of segments $\sum_{j} k_j$.
\end{proof}

\begin{paragraph}{Bound on $p_{SR\max}$}
While the capped SPLC utilities are not strictly monotone 
nor differentiable, the same bound as in Lemma~\ref{lemma:pSRmaxBound} 
(or Lemma~\ref{lemma:pSRmaxBound-Gale}) can be similarly proved for
 $v_{i \max} = \max_{j \in [m], t\in [k_{ij}]} u_{ijt}$ and $v_{i \min} = \min_{j\in [m], t\in [k_{ij}]} \{u_{ijt} : u_{ijt} > 0\}$. The value $u_{ijt}$ represent the utility rate of agent $i$ for the $t$-th segment of good $j$.
\end{paragraph}

Recalling that $D_j$ is the number of available  units of good $j$, we have the following theorem.

\begin{theorem}
   Consider the Fisher market instance arising from the NSW problem where agents have capped SPLC utilities. 
   Let $K = \max_{i\in A} \sum_{j\in G} k_{ij}$ be the minimum number of segments needed to specify the utility of any agent.
   There is algorithm producing an $\epsilon$-SR equilibrium with respect to the Gale demand systems and bounds $t_j := D_j$ 
   that runs in time $O\left( \frac{n^3 m K}{\epsilon^2} \log\left(\frac{D_{\max} V_{\max}}{\epsilon} \right)\right)$.
\end{theorem}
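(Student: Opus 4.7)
The plan is to apply Theorem~\ref{thm:SRrunning-oracle}, the running-time bound for the spending-restricted auction algorithm, to the Gale demand system associated with budget-additive SPLC utilities, using the lemmas developed earlier in this section as the three ingredients.

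First, the preceding lemma verifies that the budget-additive SPLC Gale demand system is WGS, so the Fisher-market auction algorithm of Section~\ref{section:SR} is applicable. Second, Lemma~\ref{lemma:findNewPricesBASPLC} gives a combinatorial implementation of \pFindNewPrices{} for this system in time $O(\sum_{j} k_{ij}) \le O(K)$ per call, so I set $T_F = O(K)$. These two substitutions into Theorem~\ref{thm:SRrunning-oracle} already yield a running time of the form $O\bigl(\tfrac{nmK}{\epsilon^2}\log(p_{SR\max}/p_{\min})\bigr)$.

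The third ingredient is the bound on $\log(p_{SR\max}/p_{\min})$. For the upper bound I would invoke the budget-additive SPLC version of Lemma~\ref{lemma:pSRmaxBound-Gale} discussed in the paragraph immediately preceding the theorem, instantiated with $v_{i\max}=\max_{j,t} u_{ijt}$ and $v_{i\min}=\min_{j,t}\{u_{ijt}: u_{ijt}>0\}$, which in the worst case gives $p_{SR\max}\le (1+\epsilon)^{n} D_{\max} V_{\max}^{n-1}$ under the Hall-type condition \eqref{eq:Hall} (satisfied here since $t_j=D_j$). For the lower bound I would use Remark~\ref{remark:emptyInitialization}, letting $\bar p_j=\min\{\tfrac{\epsilon}{m}\sum_i b_i,\,t_j\}$. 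Together these give $\log(p_{SR\max}/p_{\min}) = O(n\log(D_{\max}V_{\max}/\epsilon))$, where $\log m$ and $\log n$ are absorbed. Substituting into Theorem~\ref{thm:SRrunning-oracle} produces the claimed running-time expression, up to tracking the powers of $n$ contributed by the price bound and by the accuracy scaling $\epsilon$ (recall the right-hand side of Definition~\ref{def:approx-SReq} scales with $\sum_i b_i = n$, so asking for an $\epsilon$-SR equilibrium in the NSW-normalized sense corresponds to invoking the algorithm with a proportionally smaller accuracy parameter).

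The main technical obstacle is justifying the price upper bound in the non-differentiable budget-additive SPLC setting, since Lemma~\ref{lemma:pSRmaxBound-Gale} is phrased for differentiable concave utilities. The original argument compares $p^{(k)}_j / p^{(k)}_\ell$ to $\partial_j u_k(x^{(k)})/\partial_\ell u_k(x^{(k)})$ at an owning agent $k$'s optimum. For budget-additive SPLC I would replace these gradient entries by the active segment utility rates $u_{kjt}$ and $u_{k\ell t'}$ corresponding to the segments actually held in $x^{(k)}$, extracted from the KKT conditions \eqref{splcKKTi}--\eqref{splcKKTiv} of program~\eqref{prog:SPLCcaps}. For any segment with $x^*_{jt}>0$ one has $\tfrac{b\, u_{kjt}}{u(x^{(k)})} = r_{kjt} + p^{(k)}_j + u_{kjt}\gamma$, and since $r_{kjt},\gamma \ge 0$ the resulting inequality $p^{(k)}_j/p^{(k)}_\ell \le u_{kjt}/u_{k\ell t'} \le V_{\max}$ still holds. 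Plugging this segment-level ratio bound into the reachability-graph argument of Lemma~\ref{lemma:pSRmaxBound-Gale} carries the rest of the proof through unchanged, completing the theorem.
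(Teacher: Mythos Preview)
Your high-level plan matches the paper's: plug $T_F=O(K)$ from Lemma~\ref{lemma:findNewPricesBASPLC} and the price ratio bound from Lemma~\ref{lemma:pSRmaxBound-Gale} into Theorem~\ref{thm:SRrunning-oracle}. Your last paragraph, replacing gradients by the active-segment rates $u_{kjt}$ in the KKT system~\eqref{splcKKTi}--\eqref{splcKKTiv}, is exactly the adaptation the paper alludes to in the paragraph preceding the theorem.

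The one real difference is the initialization, and here you have a small gap. The paper does \emph{not} use Remark~\ref{remark:emptyInitialization}. Instead it adds a dummy agent $0$ with budget $\epsilon$ and additive utility $u_{0,j,1}=1$ on every good, sets $p_j=\epsilon/\sum_j D_j$, and assigns all goods to agent $0$; after the auction terminates with a $4\epsilon/5$-approximate SR equilibrium, the dummy is removed, which relaxes condition~(ii) of Definition~\ref{def:approx-SReq} by at most $\epsilon$ in total unsold value. This gives $p_{\min}=\epsilon/\sum_j D_j$ directly. The dummy agent is not merely a bookkeeping device for starting with all goods sold: because agent $0$ is interested in every good, it also guarantees the directed reachability structure that the proof of Lemma~\ref{lemma:pSRmaxBound-Gale}(ii) relies on, which is what yields the $(1+\epsilon)^n D_{\max} V_{\max}^{n-1}$ bound.

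Your justification ``satisfied here since $t_j=D_j$'' for the Hall-type condition~\eqref{eq:Hall} is not correct: setting $t_j=D_j$ says nothing about $\sum_{i\in S} b_i < \sum_{j\in\Gamma(S)} t_j$ for arbitrary $S$ (take two agents interested only in a single good with $D_j=1$). Without the dummy agent you would need a separate argument that either~\eqref{eq:Hall} holds strictly or that the reachability graph is sufficiently connected; the paper's dummy-agent construction handles both the initialization and this connectivity issue simultaneously.
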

\begin{proof}
  We start by adding a dummy agent $0$ to the market with budget $\epsilon / 5$.
  The utility of agent $0$ is linear, meaning that for each good $j$, there is only one segment of length $D_j$ and $u_{0, j , 1} = 1$.
  We initialize the auction algorithm by setting each price $p_j$ to $\frac{\epsilon}{5\cdot \sum_{j} D_j}$ and assigning all goods to $0$. 
  By running the auction algorithm for finding an SR-equilibrium we obtain $\frac{4\epsilon}{5}$-approximate equilibrium. 
  Now, we can remove the agent. 
  As this agent is buying the goods in amount at most $\epsilon/5$, 
  by removing the dummy agent we are left with a slightly weaker notion of $\epsilon$-approximate equilibrium.
  Namely, the first and third conditions of an approximate equilibrium (Definition~\ref{def:approx-SReq}) are satisfied by the choice of the precision parameter, but the second condition is not satisfied exactly.
  Rather, we can only guarantee that  $\sum_{i=1}^n x^{(i)}_j \le a_j$ and $\sum_{j\in [m]} p_j (a_j - \sum_{i=1}^n x^{(i)}_j) \le \epsilon/5$. 
  In words, the total price of unsold available amounts of all goods is at most $\epsilon/5$. 

  By Theorem~\ref{thm:SRrunning-oracle} the auction algorithm runs in
  $
  \displaystyle O\left(\frac{nmT_F}{\epsilon^2}\log \left(\frac{p_{SR\max}}{p_{\min}}\right)\right).
  $
  Recall that $T_F$ is time needed to implement \pFindNewPrices{}.
  By Lemma~\ref{lemma:findNewPricesBASPLC}, in this case $T_F$ is $O(K)$.
  By construction,  $p_{\min} = \frac{\epsilon}{\sum_{j} D_j}$.
  By Lemma~\ref{lemma:pSRmaxBound-Gale} (ii), $p_{SR\max} \le (1 + \epsilon)^n D_{\max} V_{\max}^{n-1}$.
\end{proof}

\subsection{Upper bound on the optimal NSW value}

Let $(x, p)$ an SR-equilibrium in the Fisher market arising from an instance of NSW where agents have Gale demand system corresponding to their utility functions and $e = (D_j)_{j\in[m]}$. In other words, $x_i \in G^{u_i}(p, 1)$ for each agent $i\in A$ and 
$\sum_{i\in[n]}x_{ij} = \sum_{i\in[n], j\in[m], t\in[k_{ij}]} x_{ijt} = D_j \cdot \min\{1, 1/p_j\}$ for each good $j \in G$.
As $x_i \in G^{u_i}(p,1)$ we have the following KKT conditions, see Section~\ref{section:basplcEquilibrium}:

\begin{enumerate}[label=(\roman*)]
\item $\frac{u_{ijt}}{u_i(x_i)} \le r_{ijt} +  p_j  + u_{ijt}\gamma_i$,
\item $\frac{u_{ijt}}{u_i(x_i)} = r_{ijt} + p_j +
  u_{ijt}\gamma_i$ whenever $x_{ijt} > 0$,
\item $x_{ijt}= d_{ijt}$ whenever $r_{ijt} > 0$, and
\item $\sum_j \sum_t x_{ijt} u_{ijt} = U_i$ whenever $\gamma_i>0$.
\end{enumerate}

Let us describe some properties of SR-equilibrium $(x,p)$ that the above KKT conditions imply.
By property (ii), $\frac{u_{ijt}}{r_{ijt} + p_j} = \frac{u_i(x_i)}{ 1 -  \gamma_i u_i(x_i)}$ whenever $x_{ijt}>0$.
This justifies defining 
\begin{equation}\label{eq:mbbi}
  {\textstyle \mbb_i} := \frac{u_i(x_i)}{ 1 -  \gamma_i u_i(x_i)}\,.
\end{equation}
Since the SR-equilibrium as well as NSW are invariant under scaling each agent's utilities $u_{ijt}$ and $U_i$, 
we can assume that $\mbb_i = 1$ for all agents $i$.
(This implies an appropriate implicit scaling of each $\gamma_i$ as well.)
By property (iii) we obtain:

\begin{proposition}\label{prop:MBB}
  If $x_{ijt} > 0$ then $\frac{u_{ijt}}{p_j} \ge 1$.
  If $\frac{u_{ijt}}{p_j} > \mbb_i = 1$ then $x_{ijt} = d_{ijt}$.
\end{proposition}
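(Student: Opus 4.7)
The plan is to derive both parts directly from the four KKT conditions listed just above the proposition, together with the normalization $\mbb_i = 1$ and the definition $\mbb_i = u_i(x_i) / (1 - \gamma_i u_i(x_i))$. This is a short algebraic manipulation, and I do not expect any real obstacle---the main thing is just to isolate $u_{ijt}/(r_{ijt}+p_j)$ from conditions (i) and (ii), and then use the sign information on $r_{ijt}$ together with (iii).

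For the first claim, I would start from KKT condition (ii): whenever $x_{ijt} > 0$, we have
\[
\frac{u_{ijt}}{u_i(x_i)} \;=\; r_{ijt} + p_j + u_{ijt}\gamma_i \, .
\]
Rearranging gives $u_{ijt}\bigl(\tfrac{1}{u_i(x_i)} - \gamma_i\bigr) = r_{ijt} + p_j$, i.e.\ $u_{ijt}/(r_{ijt}+p_j) = \mbb_i$, using the definition of $\mbb_i$. Since $r_{ijt} \ge 0$ as a Lagrange multiplier and $\mbb_i = 1$, this immediately yields $u_{ijt}/p_j \ge \mbb_i = 1$.

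For the second claim, the same algebra applied to the inequality form in condition (i) (which holds for \emph{all} $j,t$, not only those with $x_{ijt}>0$) yields
\[
u_{ijt} \;\le\; \mbb_i\,(r_{ijt} + p_j) \;=\; r_{ijt} + p_j \, .
\]
So if $u_{ijt}/p_j > 1$, then $u_{ijt} > p_j$, which forces $r_{ijt} > 0$. Condition (iii) then gives $x_{ijt} = d_{ijt}$, as required. The only subtlety worth flagging is that the implicit rescaling that makes $\mbb_i = 1$ also rescales $\gamma_i$, so $\mbb_i^{-1} = 1/u_i(x_i) - \gamma_i$ holds in the rescaled units; this is what makes the derivation of $u_{ijt} \le \mbb_i(r_{ijt}+p_j)$ from (i) valid.
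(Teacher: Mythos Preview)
Your proof is correct and follows essentially the same route as the paper. The paper's justification is the sentence immediately preceding the proposition (rearranging KKT~(ii) to get $u_{ijt}/(r_{ijt}+p_j)=\mbb_i$) together with the remark ``by the previous assumption and the third condition,'' which is exactly your argument spelled out in more detail; your use of KKT~(i) to handle the case $x_{ijt}=0$ in the second claim is the natural completion of what the paper leaves implicit.
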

In other words, an agent only buys copies of goods with utility at least as much as their price, 
and if an agent values some copy of a good strictly more than its price then she also gets this copy in $x$.

We say that an agent $i$ is \emph{capped} if $u_i(x_i) = U_i$ and \emph{non-capped} otherwise.
Let $H(p) = \{j \in [m] : p_j > 1 \}$ be the set of \emph{expensive} goods.
\begin{proposition}\label{prop:uncapped}
Assume $\mbb_i=1$ for all agents $i$.
For each capped agent $i$, $j\in H(p)$, and $t\in[k_{ij}]$ we have $x_{ijt} = 0$ and $u_i(x)=U_i\le 1$. 
Each non-capped agent $i$ receives exactly one unit of utility, i.e., $u_i(x)=1$.
\end{proposition}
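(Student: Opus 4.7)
The plan is to translate the assumption $\mbb_i=1$ into an explicit expression for $u_i(x_i)$, and then use the per-segment KKT identity to show that capped agents cannot afford expensive goods. The key observation is that $\mbb_i = u_i(x_i)/(1-\gamma_i u_i(x_i)) = 1$ is equivalent to
\[
u_i(x_i)\,(1+\gamma_i) = 1, \qquad \text{i.e.,}\qquad u_i(x_i) = \frac{1}{1+\gamma_i}.
\]
Since $\gamma_i \ge 0$, this already yields $u_i(x_i) \le 1$ for every agent.

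For a non-capped agent $i$, by definition $u_i(x_i) < U_i$, so complementary slackness applied to KKT condition~(iv) forces $\gamma_i = 0$. Plugging into the displayed identity gives $u_i(x_i) = 1$, as claimed.

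For a capped agent $i$, we have $u_i(x_i) = U_i$, and the same identity gives $U_i = 1/(1+\gamma_i) \le 1$. It remains to show that $x_{ijt}=0$ for every expensive good $j\in H(p)$ and every segment $t\in [k_{ij}]$. Rewriting KKT~(ii), whenever $x_{ijt}>0$ we have
\[
u_{ijt}\cdot\frac{1-\gamma_i u_i(x_i)}{u_i(x_i)} = r_{ijt} + p_j,
\]
and the left-hand side equals $u_{ijt}/\mbb_i = u_{ijt}$ by the normalization $\mbb_i=1$. Hence $u_{ijt}=r_{ijt}+p_j \ge p_j$. However, recall the standing assumption at the beginning of this section that $u_{ijt}\le U_i$ for all segments. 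Combining these bounds with $U_i\le 1 < p_j$ (the last inequality because $j\in H(p)$) gives $p_j \le u_{ijt}\le U_i \le 1 < p_j$, a contradiction. Therefore $x_{ijt}=0$ for all $j\in H(p)$ and all $t$, completing the proof.

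The only mildly subtle step is converting the equation $\mbb_i = 1$ into the bound $u_i(x_i)\le 1$; everything else is a direct application of KKT together with complementary slackness. No new machinery beyond the conditions stated in Section~\ref{section:basplcEquilibrium} and Proposition~\ref{prop:MBB} is required.
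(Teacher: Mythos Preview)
Your proof is correct and follows essentially the same route as the paper: both arguments read off $u_i(x_i)=1/(1+\gamma_i)\le 1$ from $\mbb_i=1$, use complementary slackness (KKT~(iv)) to get $\gamma_i=0$ and hence $u_i(x_i)=1$ for non-capped agents, and for capped agents combine $u_{ijt}\ge p_j$ (from KKT~(ii) when $x_{ijt}>0$) with the standing assumption $u_{ijt}\le U_i$ and $U_i\le 1<p_j$ to reach a contradiction. The only difference is cosmetic ordering---you first prove $U_i\le 1$ and then rule out expensive purchases, whereas the paper first assumes an expensive purchase and derives $U_i>1$---but the ingredients and logic are identical.
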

\begin{proof}
  Let $i$ be a capped agent.
  For a contradiction, suppose $x_{ijt} > 0$ for some $j\in H(p)$.
  Then $u_{ijt} \ge p_j > 1$. 
  Since $U_i \ge u_{ijt}$ it also holds that $U_i > 1$.
  This is a contradiction since $1 < \frac{U_i}{1-\gamma_iu_i(x_i)} = \frac{u_i(x_i)}{1-\gamma_iu_i(x_i)} = \mbb_i = 1$.

  Since $\frac{u_i(x_i)}{1-\gamma_iu_i(x_i)} = 1$ and $\gamma_i u_i(x_i) \ge 0$ it follows that $u_i(x_i) \le 1$.
  The fourth KKT condition implies that $\gamma_i = 0$ for non-capped agents, and therefore $u_i(x_i) = 1$.
\end{proof}

In order to prove an upper bound on the optimal NSW value we may assume that $U_i = \infty$ for all non-capped agents. 
Such an assumption can only increase the optimal NSW, so if we prove an upper bound under the assumption it also holds in the original instance. 
Since ``cap inequality'' ($\ldots \le U_i$) is ineffective for every non-capped agent, 
by the KKT condition $(x,p)$ remains an SR-equilibrium.
Let $A_c$ (resp. $A_u$) be the set of capped (resp. non-capped) agents in the equilibrium $(x, p)$.

\begin{lemma}
  Let $p$ be a vector of $SR$-equilibrium prices and $x^*$ an optimal NSW allocation. 
  Then 
  $$ \nsw(x^*) \le \left( \prod_{i\in A_c} U_i \cdot \prod_{j \in H(p)} p_j^{D_j} \right)^{1/n} \,.$$ 
\end{lemma}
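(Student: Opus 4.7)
I would split the product $\prod_{i\in[n]} u_i(x^*_i)$ along capped and non-capped agents. For each capped agent $i\in A_c$, the inequality $u_i(x^*_i) \le U_i$ is immediate from the definition of budget-additive utility, which accounts for the factor $\prod_{i\in A_c} U_i$. The remaining task is to show $\prod_{i\in A_u} u_i(x^*_i) \le \prod_{j\in H(p)} p_j^{D_j}$.

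For each non-capped agent $i \in A_u$, Proposition~\ref{prop:uncapped} gives $u_i(x_i)=1$ and $\gamma_i=0$. Since $x_i$ maximizes the concave Gale objective $\log u_i(y) - p^\top y$ over the feasible set of \eqref{prog:SPLCcaps}, the first-order supporting hyperplane inequality, applied at the point $y=x^*_i$ (decomposed greedily into segments so that $y_{jt}\le d_{ijt}$), yields
\[
\log u_i(x^*_i) - p^\top x^*_i \;\le\; \log u_i(x_i) - p^\top x_i \;=\; -\,p^\top x_i,
\]
and hence $\log u_i(x^*_i) \le p^\top(x^*_i - x_i)$ for every $i\in A_u$.

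Summing this inequality over $i\in A_u$ and using (i) Proposition~\ref{prop:uncapped}, which implies that capped agents hold no units of any $j\in H(p)$, so that $\sum_{i\in A_u} x_{ij}=\sum_i x_{ij}=D_j/p_j$ whenever $p_j>1$; (ii) the SR-equilibrium identity $\sum_i x_{ij}=D_j\min(1,1/p_j)$; and (iii) $\sum_{i\in A_u} x^*_{ij}\le D_j$ together with $\sum_i x^*_{ij}=D_j$, one can upper bound $\sum_{i\in A_u} p^\top(x^*_i - x_i)$ purely in terms of the expensive-good prices and the quantities $D_j$.

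The main obstacle is that this direct Gale-duality calculation produces only a bound of the form $\sum_{j\in H(p)} D_j(p_j-1)$, which (via $\log p_j \le p_j-1$) is strictly weaker than the desired $\sum_{j\in H(p)} D_j \log p_j$. To obtain the sharper multiplicative form $\prod_{j\in H(p)} p_j^{D_j}$, one needs to exploit the piecewise-linear structure more finely: in particular, the MBB complementarity from Section~\ref{section:basplcEquilibrium} (namely $u_{ijt}=r_{ijt}+p_j$ when $x_{ijt}>0$ and $\mbb_i=1$, while $u_{ijt}\le p_j$ on segments with $x_{ijt}=0$) should be combined with an averaging / AM--GM-type argument over the segments assigned to each agent in $x^*$, following the corresponding upper bound in~\cite{anari2018nash}. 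This is the step where the SPLC structure is essential, since the naive bound $u_i(x^*_i)\le 1 + p^\top(x^*_i-x_i)$ combined with AM--GM does not recover the claim.
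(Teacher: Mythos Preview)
Your proposal is incomplete, and you acknowledge this yourself: the Gale-duality inequality $\log u_i(x^*_i)\le p^\top(x^*_i-x_i)$ for $i\in A_u$ yields at best $\sum_{i\in A_u}\log u_i(x^*_i)\le\sum_{j\in H(p)}D_j(p_j-1)$, which is the wrong shape (it would give $\prod_j e^{D_j(p_j-1)}$ rather than $\prod_j p_j^{D_j}$). The closing sentence gestures toward an AM--GM argument ``following~\cite{anari2018nash}'' but supplies none; and the loss has already been incurred when you passed through the logarithm, so no segment-level averaging on $x^*$ will recover it afterwards. This is a genuine gap, not a missing detail.

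The paper takes a different route and never takes logs of the individual $u_i(x^*_i)$. It first bounds the \emph{sum} of utilities in any integral allocation $z$:
\[
\sum_{i\in[n]} u_i(z_i)\;\le\;\sum_{i\in A_c}U_i\;+\;|A_u|\;-\;\sum_{j\in H(p)}D_j\;+\;\sum_{j\in H(p)}D_j\,p_j,
\]
by comparing $z$ to an auxiliary allocation $\bar x$ in which every copy of an expensive good $j$ is handed whole to one of the non-capped agents currently buying it in $x$ (possible because total spending on $j$ is exactly $D_j$, so there are at least $D_j$ such buyers). Proposition~\ref{prop:MBB} shows each such copy generates exactly $p_j$ utility in $\bar x$, and that $\bar x$ maximizes total generated utility among integral allocations.

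The multiplicative bound then follows from a \emph{constrained redistribution} argument: given this additive budget, maximize the geometric mean subject to (a) each capped agent receiving at most $U_i\le 1$ and (b) each expensive copy being assigned indivisibly. The optimum gives $U_i$ to each capped agent, one expensive copy (value $p_j>1$) to each of $\sum_{j\in H(p)}D_j$ non-capped agents, and $1$ to every remaining non-capped agent (note that $\sum_{j\in H(p)}D_j\le |A_u|$ since only non-capped agents spend on expensive goods and each has budget $1$). This yields exactly $\bigl(\prod_{i\in A_c} U_i\cdot\prod_{j\in H(p)} p_j^{D_j}\bigr)^{1/n}$. The crucial point is that the $p_j$'s are kept intact as individual factors in the product rather than being absorbed into a sum and re-exponentiated---which is precisely where your Gale-duality route loses tightness.
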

\begin{proof}
First we give a bound on the sum of the agents’ utilities in any integer allocation $z$  as a function of prices $p$. 
Recall that $x$ is an SR-equilibrium allocation for prices $p$. 
Since utilities of the agents are scaled to have $\mbb_i =1$, by Proposition~\ref{prop:uncapped} each non-capped agent receives exactly $1$ unit
of utility in $x$.
Each capped agent receives $U_i$ utility in $x$ by definition. 
However, if there are some expensive goods then $x$ does not fully allocate all  the goods.
Each copy of the expensive goods generates $1$ unit of utility
in $x$ since the total spending on it each copy is precisely $1$ and since no capped agent buys expensive goods (Proposition~\ref{prop:uncapped}).

Let $\bar x$ be the allocation in which we allocate every copy of each expensive good $j$ to a single agent spending on it in $x$.
We can do so since the spending is exactly $D_j$ on copies of good $j$ and thus, there are at least as many agents buying good $j$ as the copies.  
As all of these agents are non-capped and we assume that for such agents $U_i =\infty$,
it follows that each copy of an expensive good generates exactly $p_j$ utility to the agents in $\bar x$. 
By Proposition~\ref{prop:MBB}, it is at least $p_j$  as $x_{ijt} > 0$ implies that $u_{ijt} \ge p_j$;
it is at most $p_j$ by the contrapositive of: $u_{ijt} > p_j$ implies that $x_{ijt} = d_{ijt} \ge 1$.
Therefore, the total utility that all the goods in $\bar x$ generate is:
$$
  \sum_{i \in A_c} U_i + |A_u|
  +\sum_{j\in H(p)} D_j(p_j-1) 
  = \sum_{i \in A_c} U_i + |A_u| - \sum_{j \in H(p)} D_j + \sum_{j\in H(p)} D_jp_j\,.
$$
The first two terms give the utility from $x$. Using $\bar x$ causes the utility to increase by $p_j - 1$ for each copy of good $j$ in
$H(p)$.

We claim that the total utility of all the agents in any \b{integer} allocation is not larger than the above sum.
Consider the copies of good $j$. 
In $\bar x$, each one of those goods generates either $p_j$ or more than $p_j$ utility. 
Moreover, any agent that can derive more than $p_j$ utility from a copy of a good actually
receives the copy in $\bar x$.
Therefore, $\bar x$ allocates the copy of goods to the agents such that the total utility all
the goods generate is maximized. 
It follows that for any integral allocation $z$ the total utility all agents receive is at most
$$
\sum_{i \in [n]} u_i(z) \le \sum_{i \in A_c} U_i + |A_u| - \sum_{j\in H(p)} D_j  + \sum_{j\in H(p)}  D_jp_j\,.
$$
At this point, suppose that we are given the above amount of utility and we can freely distribute it among agents to maximize NSW, 
regardless of what the utility function of each agent is, 
but only respecting the fact that the capped agents cannot get more than their cap,
and that expensive goods are indivisible. 
By Proposition~\ref{prop:uncapped}, all caps of the capped agents are at most $1$.
Then, it is not too hard to see that the optimal way of distributing our lump sum of utility is to assign: 
each expensive copy to a non-capped agent and nothing else to those agents, 
exactly $U_i$ to each capped agent, 
and $1$ to everyone else. 
In this case, the NSW is exactly $\left( \prod_{i\in A_c} U_i \cdot \prod_{j \in H(p)} p_j^{D_j} \right)^{1/n}$.
\end{proof}

\subsection{Rounding}
As in the previous section, we assume that the utilities are scaled such that $\mbb_i = 1$, see~\eqref{eq:mbbi}. 
Moreover, we use that $u_{ijt} \le U_i$.
We reduce our rounding to the case of capped linear utilities in~\cite{garg2018approximating}.
It is convenient to present the rounding in terms of the \emph{spending graph}.
For an SR-equilibrium $(x,p)$ the spending graph is a bipartite graph $(A, G; E)$
where an agent $i$ is adjacent to a good $j$ if and only if $x_{ij} > 0$.
We show how to round $x$ to an integral allocation $x'$. 

By the KKT conditions, whenever $\frac{u_{ijt}}{p_j} > \mbb_i$ then $x_{ijt}=d_{ijt}$ -- 
in this case we allocate $d_{ijt}$ copies of good $j$ to $i$ by setting $x'_{ijt} \leftarrow d_{ijt}$.
Moreover, if for some triple $i,j, t$ we have $x_{ijt} > 1$ then we allocate $\left \lfloor{x_{ijt}}\right \rfloor 
$ units of good $j$ to agent $i$. 
Formally, we set $x'_{ijt} \leftarrow  \left \lfloor{x_{ijt}}\right \rfloor $.
Once we do this for all goods and all agents, 
any agent can have at most one unit of each good that she is buying in the SR-equilibrium but that is not yet allocated in $x'$.
Hence there are at most $n$ units of each good $j$ that are still to be allocated.  
By the first rule for allocating goods, for these remaining copies of a good $j$, 
if an agent $i$ is buying a fraction of it, then $\frac{u_{ij t_i}}{p_j} = 1$ (where $t_i$ is the first non-saturated segment of agent $i$).
By assuming that $u_{ijt} = 0$ for all $t > t_i$, we can transform the instance into an instance 
in which the utility of every agent is capped linear. 
The only issue is that we could have several copies of a good.
Since there are at most $n$ copies of each good that are unassigned and the utilities are capped linear, 
we can simply split each good into the appropriate number of goods with a single copy.  
Then, the rest of the rounding follows the exact same steps as the rounding for capped linear utilities in~\cite{garg2018approximating}.
The analysis reduces in the same way. 
By choosing a suitable $\epsilon$ we obtain the following theorem.

\begin{theorem}
   Consider an instance of NSW problem where agents have  capped SPLC utilities.  
   Let $K = \max_{i\in A} \sum_{j\in G} k_{ij}$ be the minimum number of segments needed to specify the utility of any agent.
   Then there is an algorithm running in time $O\left(n^3 m K \log\left(D_{\max} V_{\max} \right)\right)$
   which produces a solution that is at most $2.404$ times worse than the optimum.
\end{theorem}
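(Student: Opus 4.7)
The plan is to combine the three ingredients developed in the preceding subsections: the auction algorithm for the approximate SR-equilibrium, the prices-based upper bound on the optimal NSW, and the rounding procedure adapted from~\cite{garg2018approximating}. First, I would apply the previous theorem in Section~\ref{section:basplcEquilibrium} to compute an $\epsilon'$-approximate SR-equilibrium $(x,p)$ with respect to the Gale demand systems and bounds $t_j = D_j$, for a parameter $\epsilon'$ to be fixed later. This takes time $O\!\left(\frac{n^3 m K}{\epsilon'^2}\log\!\left(\frac{D_{\max} V_{\max}}{\epsilon'}\right)\right)$, which, after choosing $\epsilon'$ as a small constant depending on the desired final additive slack $\epsilon$ in the approximation ratio, collapses to the running time $O(n^3 m K \log(D_{\max} V_{\max}))$ claimed in the statement.

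Second, I would rescale each agent's utility so that $\mbb_i=1$; this does not change either the NSW value or the structure of the SR-equilibrium. Invoking the upper-bound lemma from the previous subsection then yields
\[
\nsw(x^*) \;\le\; \left(\prod_{i\in A_c} U_i \cdot \prod_{j\in H(p)} p_j^{D_j}\right)^{1/n},
\]
where $A_c$ is the set of capped agents and $H(p)=\{j:p_j>1\}$. Since this bound depends only on the equilibrium prices and caps, it applies verbatim (up to a $1+O(\epsilon')$ factor) to the approximate SR-equilibrium by perturbing the prices inside the $(1+\epsilon')$ band allowed by Definition~\ref{def:approx-SReq}.

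Third, I would feed $(x,p)$ into the rounding scheme described at the end of Section~\ref{section:BASPLC}. The key observation is that by Proposition~\ref{prop:MBB} every positive $x_{ijt}$ satisfies $u_{ijt}\ge p_j$, and whenever the inequality is strict the entire segment is saturated; therefore after first extracting the integer parts $\lfloor x_{ijt}\rfloor$ and the full segments forced by strict preference, what remains on each good can be viewed as a budget-additive \emph{linear} Fisher instance on a graph with at most $n$ residual unit copies per good. On this residual instance, the rounding algorithm of \cite{garg2018approximating} produces an integer allocation whose NSW is within a factor $2e^{1/2e}$ of the above upper bound. Combining this with the bound on $\nsw(x^*)$ gives the claimed $(2e^{1/2e}+\epsilon)\approx 2.404$ approximation after choosing $\epsilon'$ sufficiently small relative to $\epsilon$.

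The main obstacle I anticipate is verifying that the reduction of the residual problem to the budget-additive linear case preserves the guarantees of \cite{garg2018approximating}: concretely, I need to check that setting $u_{ijt}=0$ for $t$ beyond the first non-saturated segment of agent $i$, and splitting each good into its (at most $n$) residual unit copies, produces an instance for which the SR-equilibrium conditions still hold and the spending graph still satisfies the structural invariants their rounding relies on (tightness of MBB edges, non-violation of caps, and the appropriate tree-like decomposition of the spending graph). Once this reduction is established the analysis is identical to theirs, and a modest amount of bookkeeping in $\epsilon'$ absorbs the discrepancy between the exact and approximate SR-equilibrium into the additive $\epsilon$ in the final ratio.
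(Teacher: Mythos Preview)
Your proposal is correct and follows essentially the same approach as the paper: compute an approximate SR-equilibrium via the auction algorithm of Section~\ref{section:basplcEquilibrium}, invoke the price-based upper bound, reduce the residual fractional allocation to a budget-additive linear instance, and apply the rounding of~\cite{garg2018approximating}, absorbing the $\epsilon'$ dependence by taking $\epsilon'$ to be a fixed constant. The paper is in fact even more terse than you are---it presents the upper bound and rounding for an exact SR-equilibrium, asserts that ``the modification to an approximate SR-equilibrium is straightforward'' and that ``the analysis reduces in the same way,'' and then states the theorem directly after ``By choosing a suitable $\epsilon$''---so the verification you flag as the main obstacle is exactly the step the paper leaves to the reader.
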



\addtocontents{toc}{\protect\setcounter{tocdepth}{-1}}
\appendix
\section{Running times of previous auction algorithms}
\label{section:running-times}
We review the running time bounds given in previous auction algorithms
and compare them to our bounds. We let $\1$ denote
the $n$ dimensional vector with all entries $1$.  
\paragraph{Linear utility functions, Garg and Kapoor~\cite{GargK06}}The paper includes
two algorithms. The running times are 
$O \Big(
\frac{nm}{\epsilon^2}\cdot$  
$\log\left(\frac{p_{\max} \cdot \pr{\1}{e}}{\epsilon \cdot  p_{\min}\cdot  e_{\min}}\right)
\cdot
\log\left(\frac{p_{\max}}{p_{\min}}\right) \Big)$ and 
$ O\left(\frac{nm}{\epsilon}  (n + m)
\log\left(\frac{p_{\max}}{p_{\min}}\right)  
\right)$, respectively.
The running time in Theorem~\ref{thm:running-oracle}, with the bound
$T_F=O(m)$ for linear utilities from Lemma~\ref{lemma:linearPrices},
gives an additional factor {$m$ when compared to the first bound}, while removing the first log term. 
We note that we are using a weaker notion of equilibrium in our result. 
The additional factor
is due to our global update step: due to the more general,
nonseparable nature of our framework, we consider all goods when updating
an agent, while the paper~\cite{GargK06} considers only one good for an
update. 

\paragraph{Separable WGS utilities, Garg, Kapoor, and Vazirani~\cite{garg2004auction}} The running time
bound is presented only for the Fisher market case, given as 
$O \left(\frac{nm}{\epsilon}
\log\frac{1}{\epsilon}  
\log \frac{v_{\max} \cdot \pr{\1}{b}}{ b_{\min} v_{\min}} 
\log m \right)$.
Here, $v_{\max} := \max_{i} v_{i\max}$ and
$v_{\min} := \min_{i} v_{i\min}$ are upper and lower bounds on the slopes of the utility functions,
  $b_{\min}$ is
the smallest budget, and $v$ is the total utility an agent would get
from owning the full amount of all goods. 
An issue with such a bound is that the value $\frac{v_{\max}}{v_{\min}}$ is not scale invariant.
Namely, the equilibrium in Fisher market remains the same even if each agent $i$ multiplies their utility function by a positive constant $\alpha_i$; but this changes the value $\frac{v_{\max}}{v_{\min}}$ arbitrarily. 
It is mentioned that the
result could be extended to exchange markets, similarly as in
\cite{GargK06}, but no details or running time estimation are provided.
\paragraph{Uniformly separable WGS utilities, Garg and Kapoor~\cite{garg2007market}} 
The paper gives essentially the same bound as in the case of separable WGS; 
the analysis is limited and mainly refers to the analysis of the auction algorithm for separable WGS utilities~\cite{garg2004auction}. 
A problematic issue is that the main motivation for the paper is to give bounds for CES and Cobb-Douglas utilities, 
but $v_{\max}=\infty$ for these particular utilities. 
\section{Adding a dummy agent to bound the prices}\label{section:dummy}
Using the construction in the papers~\cite{codenotti2005market, codenotti2005polynomial}, 
we present a general technique of modifying the market in order to bound $p_{\max}/p_{\min}$ in Theorem~\ref{thm:running-oracle}.
Given an exchange market $M$ with agents $A$ and goods $G$, 
we transform it to another market $\hat M$ with $n+1$ agents as follows.
Let $\eta \le 1$ be a parameter such that $\frac{\eta}{1+\eta} > \epsilon (1+\epsilon) m$ and $\epsilon (1+\epsilon) m \le 1/2$. 
For $i \in A$ we keep the same demand systems $D_i$ and the same initial endowments $e^{(i)}$. 
The market $\hat M$ has an extra agent $n+1$ with initial endowment 
$e^{(n+1)} = \eta e$  (recall $e = \sum_{i \in A} e^{(i)}$) and 
whose demand bundle is given via the Cobb-Douglas utility function $\left(\prod_j {x^{(n+1)}_j}\right)^{1/m}$.
Agent $n+1$ spends exactly $\frac{1}{m}$ 
of its budget on each good $j$ since its unique demand bundle $x^{(n+1)}$
is given by $ x^{(n+1)}_j = \frac{\eta \pr{p}{e}}{mp_j}$. 
 
The lemma below shows that adding such an agent can be used to bound
$\frac{p_{\max}}{p_{\min}}$, at the expense of working on a modified market.

\begin{lemma}\label{lemma:exchangeTransformation}
\begin{enumerate}[label=(\roman*)]
\item For an $\epsilon$-equilibrium of $\hat M$ formed by prices $p$ and bundles $x^{(i)}$, we have 
	$\displaystyle \frac{p_{\max}}{p_{\min}} \le \frac{(1+\epsilon)m}{\eta - \epsilon m (1+\epsilon)(1+\eta)}\cdot \frac{e_{\max}}{e_{\min}}$,
	where $e_{\max} = \max_{j} e_j$ and $e_{\min}=\min_{j} e_j$. 
\item An $\epsilon$-equilibrium in $\hat M$ gives an $\epsilon (1+\eta)$-equilibrium in $M$. 
\end{enumerate}
\end{lemma}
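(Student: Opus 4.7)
The plan is to exploit the fact that agent $n+1$, having Cobb--Douglas utility with equal exponents $1/m$, spends exactly $1/m$ of its budget $b_{n+1}=\eta\, p^\top e$ on each good. Its unique demand at any prices $p^{(n+1)}$ is $z^{(n+1)}_j = \eta p^\top e /(m p^{(n+1)}_j)$, so $p_j\, z^{(n+1)}_j\ge \eta p^\top e/(m(1+\epsilon))$ for every $j$. Both parts reduce to comparing the dummy's ``intended'' holdings $z^{(n+1)}$ with its actual holdings $x^{(n+1)}$, which are close by the equilibrium guarantee.

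For Part (i), combining non-satiation (Assumption~\ref{non-satiation}) with condition (iii) of the $\epsilon$-equilibrium in $\hat M$ and $p\le p^{(i)}$ yields the termwise bound $p_j(z^{(n+1)}_j-x^{(n+1)}_j)\le \epsilon\, p^\top \hat e=\epsilon(1+\eta)p^\top e$, since $\hat e=(1+\eta)e$. Together with the feasibility bound $x^{(n+1)}_j\le (1+\eta)e_j$, this gives
\[
  \frac{\eta p^\top e}{m(1+\epsilon)}\;\le\; p_j z^{(n+1)}_j \;\le\; p_j(1+\eta)e_j + \epsilon(1+\eta)p^\top e,
\]
which rearranges to $p_j e_j \ge c\cdot p^\top e$ for the constant $c=(\eta-\epsilon m(1+\epsilon)(1+\eta))/(m(1+\epsilon)(1+\eta))$; the hypothesis $\eta/(1+\eta)>\epsilon(1+\epsilon)m$ ensures $c>0$. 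Applied at the index attaining the minimum price (so $e_j\le e_{\max}$) gives $p_{\min}\ge c\, p^\top e/e_{\max}$, and the elementary bound $p_{\max} e_{\min}\le p^\top e$ controls $p_{\max}$; dividing yields the stated ratio.

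For Part (ii), the key algebraic identity is
\[
  e_j - \sum_{i=1}^n x^{(i)}_j \;=\; \Bigl((1+\eta)e_j - \sum_{i=1}^{n+1} x^{(i)}_j\Bigr) - \bigl(\eta e_j - x^{(n+1)}_j\bigr).
\]
Summed against $p_j$, the first bracket is at most $\epsilon(1+\eta)p^\top e$ by the $\hat M$-leftover condition, and the second bracket contributes $\eta p^\top e - p^\top x^{(n+1)}\ge 0$, since $p^\top x^{(n+1)}\le p^{(n+1)\top} z^{(n+1)}=\eta p^\top e$. Hence the prices $p$ together with $(x^{(1)},\dots,x^{(n)})$ directly satisfy the optimality condition (i) of an $\epsilon(1+\eta)$-equilibrium for $M$ (with the same witnesses $z^{(i)}$ and $p^{(i)}\le(1+\epsilon)p\le(1+\epsilon(1+\eta))p$) and the leftover condition (iii).

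The main technical obstacle is enforcing feasibility (ii): if $\sum_{i=1}^n x^{(i)}_j > e_j$ for some coordinate, we scale down to $\tilde x^{(i)}_j=x^{(i)}_j\cdot\min(1,e_j/\sum_k x^{(k)}_j)$. Monotonicity preserves (i); the crucial point that bounds the extra leftover introduced by the scaling is that on every overdemanded coordinate $j$ the dummy must have $x^{(n+1)}_j<\eta e_j$ (from $\sum_{i=1}^{n+1} x^{(i)}_j\le (1+\eta)e_j$), so the excess $\sum_{i=1}^n x^{(i)}_j - e_j$ is pointwise bounded by $\eta e_j - x^{(n+1)}_j$. Pairing the excess on overdemanded goods against the dummy's shortfall encoded in the second bracket of the identity above completes the bookkeeping needed to keep the final leftover within $\epsilon(1+\eta)p^\top e$.
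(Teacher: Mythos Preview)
For Part (i), your argument mirrors the paper's: both extract the termwise bound $p_j(z^{(n+1)}_j-x^{(n+1)}_j)\le\epsilon(1+\eta)p^\top e$ from non-satiation together with condition~(iii), combine it with the Cobb--Douglas lower bound $p_j z^{(n+1)}_j\ge\eta p^\top e/(m(1+\epsilon))$, and then compare against a feasibility upper bound on $x^{(n+1)}_j$. The only difference is that you (correctly) use $x^{(n+1)}_j\le(1+\eta)e_j$ from $\sum_i x^{(i)}_j\le\hat e_j$, whereas the paper writes $x^{(n+1)}_j\le e_j$. Your version therefore produces an extra factor $(1+\eta)$: your computation actually proves
\[
\frac{p_{\max}}{p_{\min}}\le\frac{(1+\epsilon)(1+\eta)m}{\eta-\epsilon m(1+\epsilon)(1+\eta)}\cdot\frac{e_{\max}}{e_{\min}},
\]
not the inequality exactly as stated. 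This is a discrepancy with the lemma's constant, not with the method.

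For Part (ii) the paper offers nothing beyond ``follows easily''. Your identity together with $p^\top x^{(n+1)}\le\eta p^\top e$ does yield $\sum_j p_j\bigl(e_j-\sum_{i\le n}x^{(i)}_j\bigr)\le\epsilon(1+\eta)p^\top e$ \emph{before} truncation, which is presumably all the paper intends. The gap is in your treatment of feasibility. After you scale down on the overdemanded set $O$, the leftover on those coordinates jumps from a negative value to $0$, adding $\sum_{j\in O}p_j\bigl(\sum_{i\le n}x^{(i)}_j-e_j\bigr)$ to the total. You bound this pointwise by $\sum_{j\in O}p_j(\eta e_j-x^{(n+1)}_j)$, but that quantity is \emph{not} absorbed by the second bracket of your identity: summed over all $j$, that bracket was already discarded (with the correct sign) to obtain the pre-truncation bound, so you cannot use it a second time. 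Concretely, take $n=1$, $m=2$, $e=(1,1)$, $\eta=\tfrac12$, agent~$1$ with linear utility $u_1(x)=x_1$, and $p=(3,1)$. Then $z^{(2)}=(1/3,1)$, and with $x^{(1)}=(7/6,0)$, $x^{(2)}=(1/3,1)$ one checks this is a $(1/12)$-equilibrium in $\hat M$; yet after truncating $x^{(1)}$ to $(1,0)$ the leftover in $M$ equals $1$, while $\epsilon(1+\eta)p^\top e=\tfrac12$. So the ``bookkeeping'' you allude to cannot deliver the stated $\epsilon(1+\eta)$ bound once condition~(ii) is enforced; from your inequalities one only obtains a bound of the form $\epsilon(1+\eta)+O(\eta)$.
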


\begin{proof}
Consider an $\epsilon$-equilibrium in $\hat M$ formed by $p$ and bundles $x^{(i)}$.
By definition, there exist prices $p\le p^{(n+1)} \le (1+\epsilon)p$ and bundle $z^{(n+1)} \in D_{n+1}\left(p^{(n+1)}, \eta \pr{p}{e} \right)$ such that $x^{(n+1)}\le z^{(n+1)}$.
We have $z^{(n+1)}_j =  \frac{\eta \pr{p}{e} }{mp^{(n+1)}_j}$, and
therefore, $p_j z^{(n+1)}_j \ge \frac{\eta}{(1+\epsilon)m} {\pr{p}{e}}$.
On the other hand, from the third condition of the definition of $\epsilon$-equilibrium it follows that $p_j ( z^{(n+1)}_j - x^{(n+1)}_j) \le \epsilon \pr{p}{e} \cdot (1+\eta)$. 
Hence, $p_j x^{(n+1)}_j \ge \left( \frac{ \eta}{(1+\epsilon)m} - \epsilon (1+\eta)  \right) \cdot \pr{p}{e}$ for all $j$.
In particular, $x^{(n+1)}_j \ge \left( \frac{ \eta}{(1+\epsilon)m} - \epsilon (1+\eta) \right) \frac{p_{\max}e_{\min}}{p_j}$ for all $j$. Since $x^{(n+1)}_j \le e_j \le e_{\max}$  in an $\epsilon$-equilibrium, we have
$$
\frac{p_{\max}}{p_{\min}} \le  \left(\frac{ \eta}{(1+\epsilon)m} - \epsilon (1+\eta) \right)^{-1} \frac{e_{\max}}{e_{\min}} \,.
$$ 
The second part of the lemma follows easily from the definition of an approximate equilibrium.
\end{proof}

\small
\bibliographystyle{abbrv}
\bibliography{references}

\end{document}